\author{Kostyantyn Mazur*}
\title{A Partial Solution to Continuous Blotto}
\begin{document}
\maketitle

\begin{center}
\textbf{Abstract}
\end{center}

This paper analyzes the structure of mixed-strategy equilibria for Colonel Blotto games, where the outcome on each battlefield is a polynomial function of the difference between the two players' allocations. This paper severely reduces the set of strategies that needs to be searched to find a Nash equilibrium. It finds that there exists a Nash equilibrium where both players' mixed strategies are discrete distributions, and it places an upper bound on the number of points in the supports of these discrete distributions.

\vspace{2 in}

* Kostyantyn Mazur, Tandon School of Engineering, New York University, 6 Metrotech Center, Brooklyn, NY; email: km1883@nyu.edu

I would like to acknowledge Dr. Laurent Mathevet (Department of Economics, New York University) and Dr. Edward Miller (Department of Mathematics, Tandon School of Engineering, New York University) for their inputs for this paper.

\newpage

\newtheorem{lem}{Lemma}

\theoremstyle{remark}
\newtheorem{ex}{Example}
\theoremstyle{remark}
\newtheorem{calc}{Calculation}

\tableofcontents

\newpage

\section{Introduction}
Colonel Blotto is a two-player game, whose prototypical version has two colonels fighting a battle with each other on multiple fronts. In an economics context, the colonels might be, for instance, two firms competing simultaneously in multiple geographical areas, or two political candidates seeking to ``win'' as many places as possible. Each colonel has only a fixed number of troops (or resources) to distribute among the fronts, and whichever colonel assigns more troops on a front wins that front. Both colonels want to win as many fronts as possible. If one colonel knew how the other would arrange the troops, then this colonel would simply arrange to win all but one front by just one soldier. Thus, each colonel must follow a mixed strategy; that is, deciding at random how to arrange the troops. Of course, it matters just from which distribution is that randomly-determined pure strategy.

This paper studies Blotto on two battlefields where the outcome on a battlefield depends continuously on the advantage in allocations on that battlefield, a topic that has received scant attention to date. The motivation for this is that outnumbering one's opponent by one soldier does not guarantee victory. Having one more soldier is an advantage, but the battle might still go the other way. The bigger the advantage, the more likely the bigger army is to win. Thus, it is reasonable to have probability of winning be a continuous function of the two numbers of troops (instead of the function used in the classical Blotto game: 1 for the player with the bigger allocation, 0 for the other player). The goal is to maximize the expected value of the number of battlefields won. As expectation is a linear function, the probabilities of winning each battlefield can be re-interpreted as expected numbers of battlefields won. For example, a 57\% chance of winning would mean an expected value of 0.57 battlefields. This interpretation is all the more useful, because it can also accommodate varying degrees of victory. After all, not all battles are clear wins for one side or the other. It is also reasonable to subtract 0.5 from the function, so that a positive result on a battlefield represents an expected victory (the bigger the positive result, the bigger or more likely the victory), and a negative result, correspondingly, represents an expected defeat. Thus, an expected result of 0.57 battlefields becomes an expected result of 0.07 battlefields better than equality on that battlefield. This subtraction in effect turns a constant-sum game into a zero-sum game. This expected result can be written as a function of the difference of allocations on the battlefield, and this function can be called the outcome function.

If the two players had equal resources, then, at least if the outcome function were an odd function, then this game would have a simple solution: both players can play any strategy at all, and the result will be zero. Any gain on one battlefield would be compensated by a loss of equal size on the other battlefield. Therefore, it is necessary to introduce some unfairness into the game, either by allowing the outcome function not to be odd, or by giving one player an advantage in resources. The model of this paper accommodates both approaches. 

It is already known that a game like this has a Nash equilibrium; Glicksberg's theorem \cite{10.2307/2032478} guarantees it. This paper shows that if the outcome function is polynomial, then, not only is there a Nash equilibrium, but there is one where both players choose from only a small number of pure strategies. Actually finding an equilibrium exactly is a difficult matter.

This type of game can model two firms simultaneously competing on two markets, where one firm happens to be larger. It can also model two political parties competing in two districts (by spending campaign money) if (for instance) the number of seats for a party in a district is proportional to the vote percentage for that party in that district, something that classical win/loss Blotto cannot do. It can model winner-take-all rules also, because there is no guarantee of winning a district just by outspending the opponent. It is true that the exact outcome function is unlikely to be a polynomial, but it can be approximated by a polynomial fit through several points. The more points the polynomial goes through, the closer the result is to the Nash equilibrium, but the harder it is to evaluate the game.

In a zero-sum game, a Nash equilibrium represents perfect strategies for both players, in the sense that each player's strategy (if played as in the equilibrium) guarantees that player at least the equilibrium result. Those two combined guarantees rule out the possibility of either player seeking a better result - since the opponent's equilibrium strategy renders that impossible regardless of what the player may do. To the extent that a (zero-sum) Colonel Blotto game reflects an economic competition correctly, a Nash equilibrium to it gives perfect strategies for both players. Normally, with a continuous range of options, a Nash equilibrium looks like a continuous distribution, and all together, these constitute an infinite-dimensional space. This paper shows that a specific class of continuous Blotto games are exceptions to this, by having Nash equilibria that can be obtained by searching only a finite-dimensional space of strategies, each of which has a simple form.

The general approach that this paper uses is a series of reductions and transformations of the set of mixed strategies, which are repsesented as probability-density functions. First, any pure strategy of either player is represented by just a single number, $x$ for Player 1, and $y$ for Player 2. (Let Player 1 be the player with a resource advantage if there is one.) The payoff-matrix (call it $E$) based on this number has infinitely-many rows and columns (Player 1 choosing the row and Player 2 choosing the column).

However, of those infinitely-many columns (or rows), only $N+1$ (where $N$ is the degree of the outcome function) are linearly independent, essentially because the columns are polynomials of degree at most $N$ in $x$, Player 1's choice of pure strategy. This means that $E$ maps the space of all functions to a finite-dimensional space. So, there is only an $\left(N+1\right)$-dimensional space of functions $g$ (probability-density functions for Player 2's strategies) where all the $Eg$ are distinct. Thus, if $\tilde{g}$ is another function, then $E\tilde{g} = Eg$. But, if $f$ is the probability-density function of Player 1's mixed strategy (and $g$ is the same for Player 2), then the expected result is $f^T E g$ \footnote{$T$ stands for transpose, and this multiplication should be interpreted as if it were a usual matrix multiplication, with an integral replacing the summation.}. If $Eg = E\tilde{g}$, then $f^T E g = f^T E \tilde{g}$ regardless of what $f$ is, so every strategy for Player 2 is payoff equivalent to one of the $g$s in the $\left(N+1\right)$-dimensional space. This works just as well if the two players are reversed.

In other words, there is an $(N+1)$-dimensional space of equivalence classes of strategies (for either player), where any two strategies in the same equivalence class are payoff equivalent. Each of these equivalence classes can be described as an $(N+1)$-dimensional vector. That is true for both players.

What is more, the function that reduces the probability-density function of a strategy to the vector of the strategy's equivalence class can be made to be linear. Therefore, it preserves convex combinations\footnote{A convex combination is a weighted average with nonnegative weights; here, the weights are the probability-density function of the mixed strategy.}. Thus, the vectors of all the equivalence classes are convex combinations of the vectors of the equivalence classes of the pure strategies. The equivalence classes of the pure strategies form a curve in $(N+1)$-dimensional space. However, Carathéodory's Theorem says that each convex combination of points in $(N+1)$-dimensional space can be described as a convex combination of at most $N+2$ of the points. This means that every equivalence class that contains any strategies contains one that is a convex combination of only finitely many (at most $N+2$) pure strategies, or, equivalently, contains a discrete distribution with at most $N+2$ components\footnote{A component of a discrete-distribution mixed strategy is one of the pure strategies that are played with positive probability, according to the mixed strategy.}. Wherever any Nash-equilibrium strategy may be, its equivalence class contains a strategy with this type of discrete distribution, which is payoff equivalent to the Nash-equilibrium strategy (and thus is a Nash-equilibrium strategy too).

Finding Nash equilibria for Blotto games is the topic of many papers, each tasked with covering a specific variation of the conditions, the first one to propose the problem being Borel, in \cite{Borel1921} (translated into English as \cite{10.2307/1906946}). Things that were varied include the aim of the players (maximizing expected value or the probability of winning a majority), whether the allocations can be varied discretely or continuously, the number of battlefields fought on, the numbers of troops each colonel has access to (and whether one colonel has more troops than the other), the relative values of the fronts (where the aim is to maximize the total value won), whether the game is zero-sum or not, and even the nature of the resource constraints. These include Weinstein's \cite{RePEc:bpj:bejtec:v:12:y:2012:i:1:n:7}, Hart's \cite{Hart2008}, Gross and Wagner's \cite{2557533}, Roberson and Kvasov's \cite{Roberson2012}, Hortala-Vallve and Llorente-Saguer's \cite{Hortala-Vallve2012}, Schwartz, Loiseau, and Sastry's \cite{EURECOM+4479}, Thomas's \cite{RePEc:tex:wpaper:130116}, Kovenock and Roberson's \cite{RePEc:ces:ceswps:_5291}, and Macdonell and Mastronardi's \cite{Macdonell2015}. Macdonell and Mastronardi's paper has a similar result to this paper, in that it also has two battlefields with unequal resources, and a set of Nash equilibria that always includes a discrete distribution.

This paper explores yet another dimension of Blotto, where the result depends on the degree of victory. Even this form of Blotto was explored, although not as much as traditional win/loss Blotto. Unfortunately, these papers only search for pure strategies, and if there is none, the only conclusion that can be drawn from using the methods of those papers is exactly that: ''No pure-strategy Nash equilibrium exists''. One example is Blackett's \cite{NAV:NAV3800050203}, which only has a necessary condition for the existence of a pure-strategy Nash equilibrium. Blackett allows for any outcome function. Golman and Page's \cite{Golman2009} has a one-parameter family of outcome functions, and allows many battlefields and dependencies between battlefield outcomes, but Golman and Page find that in most of the cases they study, pure-strategy equilibria do not exist. Osório's \cite{RePEc:urv:wpaper:2072/211806} does succeed in finding a pure-strategy Nash equilibrium, but that paper restricts itself to a narrow class of outcome functions. This paper is a complement to the pure-strategy searches, in that it serves as a bound on how "impure" are the strategies that need to be considered. Mixed-strategy equilibria for this form of Blotto have been explored in \cite{Morozov2014}, but that paper limits its outcome function to a finite-parameter set.

The convex optimization method of Bellman's \cite{10.2307/2028140} is not applicable here, because of its assumption (in terms of the language of this paper) that, for all battlefields, the outcome function is concave down in Player 1's allocation and concave up in Player 2's allocation. When the outcome function is a function of the difference of allocations (as here), this assumption is almost always incorrect\footnote{The outcome function can be written as $P\left(\tilde{x}-\tilde{y}\right)$ (where $\tilde{x}$ and $\tilde{y}$ are the two players' allocations), and $\frac{\partial^2}{\partial \tilde{x}^2}\left[P\left(\tilde{x}-\tilde{y}\right)\right]$ and $\frac{\partial^2}{\partial \tilde{y}^2}\left[P\left(\tilde{x}-\tilde{y}\right)\right]$ both equal $P''\left(\tilde{x}-\tilde{y}\right)$, so the concavity of the outcome function is the same with respect to both players' allocations.}. 

However, a simpler version of the method of Beale and Heselden's \cite{NAV:NAV3800090202}, which is also a simpler version of the algorithm in Behnezhad, Dehghani, Derakhshan, HajiAghayi, and Seddighin's \cite{DBLP:journals/corr/BehnezhadDDHS16}, is applicable here. Player 1 takes an integer $L$ (the larger $L$, the better the approximation, but the longer it will take to calculate), and treats both players' possible allocations as if they were required to be integer multiple of $\frac{n+a}{L}$. Then, Player 1 considers the probabilities of playing each strategy as variables, adds an extra variable for the payoff, and sets up inequalities to reflect that the probabilities must be nonnegative and sum to 1, and that the expected payoff against any of Player 2's strategies be no lower than the payoff variable. Player 1 seeks to maximize the payoff variable subject to these inequalities, and uses a linear programming model to do this. This has the drawback of possibly producing increasingly complicated equilibrium strategies as $L$ goes up, which are proven here not to be necessary in the case of a polynomial outcome function. 

The knowledge that there exist Nash-equilibrium strategies that are discrete distributions, with an upper bound on the number of components, makes it easier to search for a Nash-equilibrium, analytically or numerically. One possible approach for this is to group each player's possible mixed strategies by the number of components, and by which ''edge'' strategies\footnote{An edge strategy is a pure strategy that is at a boundary of the pure-strategy space; here, this would mean placing all resources on one battlefield or the other.}, if any, were used as components. Then, in each group, the parameters that the player has control of are: which pure strategies are the components, and all but one of the probabilities of playing the components\footnote{The probability of playing the remaining component is determined by the fact that the probabilities have a sum of $1$, and so, it is not a parameter.}. For each pair of groups (one group for each player), the critical points of the payoff can be found by setting the partial derivatives of the payoff with respect to all parameters (for both players) to zero and solving the resulting system of polynomial equations.\footnote{If both players' groups have zero parameters (like the group of one-component strategies with an ''all-on-battlefield-2'' component), then every point is critical.} \footnote{There are no special cases for boundary extrema (from the point of view of one player), because these are already accounted for by being in a different group, either one with fewer components, or one using more "edge" strategies.} The Nash equilibrium is one of the critical points, and a point can be checked for being a Nash equilibrium by checking that neither player can improve the result by changing to a pure strategy. 

For efficiency, the critical points from the lowest-parameter groups should be checked first, before the systems for the higher-parameter groups get solved, as these are the groups that yield the simplest systems of equations. There are also ways to make this algorithm faster, by reducing the upper bound on the number of pure-strategy components that could be required. The extensions concern themselves primarily with this. One of the extensions takes advantage of the symmetry between the two battlefields, while another one uses the theorem proven in \cite{Part2} to take advantage of the continuity of the pure strategies. Together, these extensions make a brute-force approach (described in the extensions) feasible in practice for low degrees of the polynomial as the outcome function. 

\section{Model}
A two-field continuous Blotto game is defined by an ordered triple $\left(n, a, r\right)$, where $n$ is Player 2's resources, $a$ is Player 1's advantage in resources, and $r$ is the outcome function. That is, Player 1's resources are $n+a$, and Player 1 chooses a number $\tilde{x} \in \left[0, n+a\right]$. Player 2 chooses a number $\tilde{y} \in \left[0, n\right]$. These numbers are called allocations to battlefield 1. The allocation to battlefield 2 is $n+a - \tilde{x}$ for Player 1 and $n - \tilde{y}$ for Player 2. Thus, each player's allocations sum to that player's resources. 

On each battlefield, the outcome is $r\left(z\right)$, where $z$ is the difference in allocations between Player 1 and Player 2 on that battlefield. That is, $z = \tilde{x} - \tilde{y}$ on battlefield 1 and $z = a - \tilde{x} + \tilde{y}$ on battlefield 2. 

Player 1's payoff is $r\left(\tilde{x}-\tilde{y}\right) + r\left(a-\tilde{x}+\tilde{y}\right)$, and correspondingly, Player 2's payoff is the opposite, which $-r\left(\tilde{x}-\tilde{y}\right) - r\left(a-\tilde{x}+\tilde{y}\right)$, which makes continuous Blotto a zero-sum game. Player 1 seeks to maximize the expected value of Player 1's payoff, while Player 2 seeks to minimize the expected value of Player 1's payoff.

The outcome function represents the dependence of the degree of victory (or the probability of victory) on a battlefield on the advantage in resources on that battlefield, where a battlefield could (for instance) mean a district in an election, or one of two markets over which two firms simultaneously compete.

\section{Main result}

\textbf{Theorem.}
{\itshape In any continuous Blotto game where $r$ is a polynomial, there exists a (mixed-strategy) Nash-equilibrium, in which both players' strategies are distributions with support on at most $N+2$ points, where $N$ is the degree of $r$.}

It was already known that this game would have a Nash equilibrium; the new result is that there exists one with this form.

This theorem is proven by first establishing that, if $r$ is a polynomial, then the payoff matrix has rank not greater than $N+1$. That allows a change of coordinates that leaves both players with only an at most $\left(N+1\right)$-dimensional strategy space, each point in which corresponds to many mixed strategies that are exactly equivalent to each other. In the new coordinates, the mixed strategies still form the convex hull of the pure strategies, and that makes each mixed strategy have a representation as the convex combination of only at most $N+2$ pure strategies. This equality in the new coordinates corresponds to equivalence as mixed strategies. Wherever a Nash equilibrium may be, both players can choose an equivalent strategy with only finitely many components.

\section{Method}

The reduction of the strategy space to discrete distributions will be illustrated with a relatively simple example, $r\left(z\right)=-z^3$. It should be noted that when actually applying this result, these steps are not needed; they only exist to show that, in fact, there is a Nash equilibrium with a discrete distribution.

\subsection{Symmetrization of the Battlefields}
\label{shiftsec}
The players' possible strategies can equally well be written in terms of deviations from the even-split strategy. That is, instead of choosing $\tilde{x}$, Player 1 can be considered as choosing $x$, defined as $\tilde{x} - \frac{n+a}{2}$, and similarly, Player 2 can be considered as choosing $y$, defined as $\tilde{y} - \frac{n}{2}$. From the fact that $\tilde{x} \in \left[0, n+a\right]$, it follows that $x \in \left[-\frac{n+a}{2}, \frac{n+2}{2}\right]$, and from the fact that $\tilde{y} \in \left[0, n\right]$, it follows that $y \in \left[-\frac{n}{2}, \frac{n}{2}\right]$. 

Player 1's payoff, denoted by $R\left(x, y\right)$, is $r\left(x-y + \frac{a}{2}\right) + r\left(-x +y + \frac{a}{2}\right)$. The proof of this is in Appendix 1, as the proof of Lemma \ref{shiftlemma}. With $r\left(z\right)=-z^3$, $R\left(x, y\right) = -6\left(x-y\right)^2\left(\frac{a}{2}\right) - 2\left(\frac{a}{2}\right)^3$ (as demonstrated in Calculation \ref{calcR} in Appendix 2).

The goal of this change of variables is simply to treat the two battlefields symmetrically. Another useful property is that either player exchanging allocations to the two battlefields results in that player's variable ($x$ or $y$) simply changing sign when written in this form (because, since Player 1's allocations add to $n+a$, subtracting $\frac{n+a}{2}$ from each allocation leaves 0 for Player 1's new sum, and similarly, since Player 2's allocations add to $n$, subtracting $\frac{n}{2}$ from each allocation leaves 0 for Player 2's new sum).

Player 1's resource advantage gives Player 1 two benefits, namely the ability to have higher allocations and a wider range of strategies. This remains true after the change in coordinates, except that the advantages are decoupled from each other: the ability to have higher allocations is now in the form of the function $R$, while the wider range of strategies still manifests itself as a wider interval.

The addition of a constant to every payoff does not change which strategiy pairs are Nash equilibria, and neither does multiplying every payoff by a positive constant. As such, if the payoff were $-\left(x-y\right)^2$ instead of $-6\left(x-y\right)^2\left(\frac{a}{2}\right) - 2\left(\frac{a}{2}\right)^3$, that would not change the location of the Nash equilibria. Since the function $-\left(x-y\right)^2$ is easier to manipulate, that shall be used as the example for $R\left(x, y\right)$ in the rest of this paper.

\subsection{Expected Payoff as a Dot Product}
\label{payoffdotsec}

Let Player 1 and Player 2 both play mixed strategies. 

Player 1's mixed strategies are distributions of $x$, with support $\left[-\frac{n+a}{2}, \frac{n+a}{2}\right]$. Let $f\left(x\right)$ be the probability-density function of Player 1's mixed strategy. Similarly, Player 2's mixed strategies are distributions of $y$ with support $\left[-\frac{n}{2}, \frac{n}{2}\right]$. Let $g\left(y\right)$ be the probability-density function of Player 2's mixed strategy.

For functions $f_a$ and $f_b$ with domain $\left[-\frac{n+a}{2}, \frac{n+a}{2}\right]$, let $f_a \cdot f_b = \int \limits _{-\frac{n+a}{2}} ^{\frac{n+a}{2}} {f_a\left(x\right)f_b\left(x\right) dx}$. For functions $g_a$ and $g_b$ with domain $\left[-\frac{n}{2}, \frac{n}{2}\right]$, let $g_a \cdot g_b = \int \limits _{-\frac{n}{2}} ^{\frac{n}{2}} {g_a\left(y\right)g_b\left(y\right) dy}$. Let $E$ be the linear transformation, from the functions with domain $\left[-\frac{n}{2}, \frac{n}{2}\right]$ to the functions with domain $\left[-\frac{n+a}{2}, \frac{n+a}{2}\right]$, such that $Eg = \int \limits _{-\frac{n}{2}} ^{\frac{n}{2}} {R\left(x, y\right)g\left(y\right)dy}$.

Player 1's expected payoff in this game is the weighted average of all the possible results (with $f\left(x\right)$ and $g\left(y\right)$ being the weights), which is $\int \limits _{-\frac{n+a}{2}} ^{\frac{n+a}{2}} {\int \limits _{-\frac{n}{2}} ^{\frac{n}{2}} {R\left(x, y\right)f\left(x\right)g\left(y\right) dy} \,dx}$, can also be written as $f \cdot \left(Eg\right)$. (This is proven in Lemma \ref{expvallemma} in Appendix 1.)

If $R\left(x, y\right) = -\left(x - y\right)^2$, then:
\[
f \cdot \left(Eg\right) = \int \limits _{-\frac{n+a}{2}} ^{\frac{n+a}{2}} {\int \limits _{-\frac{n}{2}} ^{\frac{n}{2}} {\left(-\left(x-y\right)^2\right)f\left(x\right)g\left(y\right) dy} \,dx} 
\]

\subsection{Reduction of Dimensionality}
\label{dimreductionsec}

One way to analyze a linear transformation is to find its matrix, given a basis. This will be done with $R\left(x, y\right) = -\left(x-y\right)^2$, and the basis used will be the orthonormal basis of polynomials
\begin{align*}
& f_0\left(x\right) = \left(\sqrt{\frac{1}{2}}\right){\left(\frac{n+a}{2}\right)}^{-\frac{1}{2}} \\
& f_1\left(x\right) = \left(\sqrt{\frac{3}{2}}\right){\left(\frac{n+a}{2}\right)}^{-\frac{3}{2}} x \\
& f_2\left(x\right) = \left(\sqrt{\frac{45}{8}}\right){\left(\frac{n+a}{2}\right)}^{-\frac{5}{2}}x^2 - \left(\sqrt{\frac{5}{8}}\right){\left(\frac{n+a}{2}\right)}^{-\frac{1}{2}} \\
& f_3\left(x\right) = \left(\sqrt{\frac{175}{8}}\right){\left(\frac{n+a}{2}\right)}^{-\frac{7}{2}}x^3 - \left(\sqrt{\frac{63}{8}}\right){\left(\frac{n+a}{2}\right)}^{-\frac{3}{2}}x \\
& f_4\left(x\right) = \left(\sqrt{\frac{11025}{128}}\right){\left(\frac{n+a}{2}\right)}^{-\frac{9}{2}}x^4 - \left(\sqrt{\frac{2025}{32}}\right){\left(\frac{n+a}{2}\right)}^{-\frac{5}{2}} x^2 + \left(\sqrt{\frac{81}{128}}\right){\left(\frac{n+a}{2}\right)}^{-\frac{1}{2}} \\
& \cdots
\end{align*}
for Player 1, and the orthonormal basis of polynomials
\begin{align*}
& g_0\left(y\right) = \left(\sqrt{\frac{1}{2}}\right){\left(\frac{n}{2}\right)}^{-\frac{1}{2}} \\
& g_1\left(y\right) = \left(\sqrt{\frac{3}{2}}\right){\left(\frac{n}{2}\right)}^{-\frac{3}{2}} y \\
& g_2\left(y\right) = \left(\sqrt{\frac{45}{8}}\right){\left(\frac{n}{2}\right)}^{-\frac{5}{2}}y^2 - \left(\sqrt{\frac{5}{8}}\right){\left(\frac{n}{2}\right)}^{-\frac{1}{2}} \\
& g_3\left(y\right) = \left(\sqrt{\frac{175}{8}}\right){\left(\frac{n}{2}\right)}^{-\frac{7}{2}}y^3 - \left(\sqrt{\frac{63}{8}}\right){\left(\frac{n}{2}\right)}^{-\frac{3}{2}}y \\
& g_4\left(y\right) = \left(\sqrt{\frac{11025}{128}}\right){\left(\frac{n}{2}\right)}^{-\frac{9}{2}}y^4 - \left(\sqrt{\frac{2025}{32}}\right){\left(\frac{n}{2}\right)}^{-\frac{5}{2}} y^2 + \left(\sqrt{\frac{81}{128}}\right){\left(\frac{n}{2}\right)}^{-\frac{1}{2}} \\
& \cdots
\end{align*}
for Player 2. (See Calculation \ref{calcortho} in Appendix 2 for the process of obtaining it.)

Using this basis, $f$ can be written as
$
\left( 
\begin{matrix}
f \cdot f_{0} \\
f \cdot f_{1} \\
f \cdot f_{2} \\
... \\
\end{matrix}
\right)
$
and $g$ can be written as
$
\left( 
\begin{matrix}
g \cdot g_{0} \\
g \cdot g_{1} \\
g \cdot g_{2} \\
... \\
\end{matrix}
\right)
$. \footnote{This is true because $f \cdot f_s$ is also the $s$-th component of $f$, as, if $f = \sum \limits _{s = 0} ^{\infty} {c_s f_s}$ for some constants $c_s$, then $f \cdot f_s = f \cdot \sum \limits _{s = 0} ^{\infty} {c_s f_s} = \sum \limits _{s = 0} ^{\infty} {c_s \left(f \cdot f_s\right)}$, but as the $f_s$ are orthonormal, this is just $c_s$, which is the $s$-th component of $f$.} As for the matrix of $E$, its entry in the $i$-th row and $j$-th column should be the $i$-th component of $Eg_{j}$, but this is just $f_i \cdot \left(Eg_j\right)$. Thus, it follows that the matrix of $E$ is
\[
\left( 
\begin{matrix}
f_0 \cdot \left(Eg_0\right) & f_0 \cdot \left(Eg_1\right) & f_0 \cdot \left(Eg_2\right) & \cdots \\
f_1 \cdot \left(Eg_0\right) & f_1 \cdot \left(Eg_1\right) & f_1 \cdot \left(Eg_2\right) & \cdots \\
f_2 \cdot \left(Eg_0\right) & f_2 \cdot \left(Eg_1\right) & f_2 \cdot \left(Eg_2\right) & \cdots \\
\cdots & \cdots & \cdots & \cdots \\
\end{matrix}
\right)
\]
which means that $f \cdot \left(Eg\right)$ is
\[
\left( 
\begin{matrix}
f \cdot f_{0} \\
f \cdot f_{1} \\
f \cdot f_{2} \\
... \\
\end{matrix}
\right)
^T
\left( 
\begin{matrix}
f_0 \cdot \left(Eg_0\right) & f_0 \cdot \left(Eg_1\right) & f_0 \cdot \left(Eg_2\right) & \cdots \\
f_1 \cdot \left(Eg_0\right) & f_1 \cdot \left(Eg_1\right) & f_1 \cdot \left(Eg_2\right) & \cdots \\
f_2 \cdot \left(Eg_0\right) & f_2 \cdot \left(Eg_1\right) & f_2 \cdot \left(Eg_2\right) & \cdots \\
\cdots & \cdots & \cdots & \cdots \\
\end{matrix}
\right)
\left( 
\begin{matrix}
g \cdot g_{0} \\
g \cdot g_{1} \\
g \cdot g_{2} \\
... \\
\end{matrix}
\right)
\]
where $T$ indicates transpose. Using $f \cdot \left(Eg\right) = \int \limits _{-\frac{n+a}{2}} ^{\frac{n+a}{2}} {\int \limits _{-\frac{n}{2}} ^{\frac{n}{2}} {\left(-\left(x-y\right)^2\right)f\left(x\right)g\left(y\right) dy} \,dx}$, the matrix of $E$ is
\[
\hspace{-0.6 in}
\left(
\begin{matrix}
\left(-\frac{2}{3}\right)\left(\frac{n+a}{2}\right)^{\frac{5}{2}}\left(\frac{n}{2}\right)^{\frac{1}{2}} + \left(-\frac{2}{3}\right)\left(\frac{n+a}{2}\right)^{\frac{5}{2}}\left(\frac{n}{2}\right)^{\frac{1}{2}} & 0 & \left(-\frac{4\sqrt{5}}{15}\right)\left(\frac{n+a}{2}\right)^{\frac{1}{2}}\left(\frac{n}{2}\right)^{\frac{5}{2}} & 0 & 0 & \cdots \\
0 & \left(-\frac{4}{3}\right)\left(\frac{n+a}{2}\right)^{\frac{3}{2}}\left(\frac{n}{2}\right)^{\frac{3}{2}} & 0 & 0 & 0 & \cdots \\
\left(-\frac{4\sqrt{5}}{15}\right)\left(\frac{n+a}{2}\right)^{\frac{5}{2}}\left(\frac{n}{2}\right)^{\frac{1}{2}} & 0 & 0 & 0 & 0 & \cdots \\
0 & 0 & 0 & 0 & 0 & \cdots \\
0 & 0 & 0 & 0 & 0 & \cdots \\
\cdots & \cdots & \cdots & \cdots & \cdots & \cdots \\
\end{matrix}
\right)
\]
Note the all-zero columns and rows. Calculation \ref{calckernel} shows that those and all further columns and rows are, in fact, all zero.

The fact that only the top $3$-by-$3$ corner of the matrix has any non-zero entries means that $f \cdot \left(Eg\right)$ only depends on the first three rows and columns, and therefore, can be truncated to
\[
\hspace{-0.8 in}
\left( 
\begin{matrix}
f \cdot f_{0} \\
f \cdot f_{1} \\
f \cdot f_{2} \\
\end{matrix}
\right)
^T
\left(
\begin{matrix}
\left(-\frac{2}{3}\right)\left(\frac{n+a}{2}\right)^{\frac{5}{2}}\left(\frac{n}{2}\right)^{\frac{1}{2}} + \left(-\frac{2}{3}\right)\left(\frac{n+a}{2}\right)^{\frac{5}{2}}\left(\frac{n}{2}\right)^{\frac{1}{2}} & 0 & \left(-\frac{4\sqrt{5}}{15}\right)\left(\frac{n+a}{2}\right)^{\frac{1}{2}}\left(\frac{n}{2}\right)^{\frac{5}{2}} \\
0 & \left(-\frac{4}{3}\right)\left(\frac{n+a}{2}\right)^{\frac{3}{2}}\left(\frac{n}{2}\right)^{\frac{3}{2}} & 0 \\
\left(-\frac{4\sqrt{5}}{15}\right)\left(\frac{n+a}{2}\right)^{\frac{5}{2}}\left(\frac{n}{2}\right)^{\frac{1}{2}} & 0 & 0 \\
\end{matrix}
\right)
\left( 
\begin{matrix}
g \cdot g_0 \\
g \cdot g_1 \\
g \cdot g_2 \\
\end{matrix}
\right)
\]

The same is true in general. $f \cdot \left(Eg\right)$ still equals 
\[
\left( 
\begin{matrix}
f \cdot f_{0} \\
f \cdot f_{1} \\
f \cdot f_{2} \\
... \\
\end{matrix}
\right)
^T
\left( 
\begin{matrix}
f_0 \cdot \left(Eg_0\right) & f_0 \cdot \left(Eg_1\right) & f_0 \cdot \left(Eg_2\right) & \cdots \\
f_1 \cdot \left(Eg_0\right) & f_1 \cdot \left(Eg_1\right) & f_1 \cdot \left(Eg_2\right) & \cdots \\
f_2 \cdot \left(Eg_0\right) & f_2 \cdot \left(Eg_1\right) & f_2 \cdot \left(Eg_2\right) & \cdots \\
\cdots & \cdots & \cdots & \cdots \\
\end{matrix}
\right)
\left( 
\begin{matrix}
g \cdot g_{0} \\
g \cdot g_{1} \\
g \cdot g_{2} \\
... \\
\end{matrix}
\right)
\]
and this can be truncated to at most $N+1$ dimensions (rows and columns), where $N$ is the degree of $r$ as a polynomial. ($f_0, f_1, f_2, \cdots$ and $g_0, g_1, g_2, \cdots$ are exactly the same orthonormal polynomials as they were in the $R\left(x, y\right) = -\left(x-y\right)^2$ case.) This statement is proven in Appendix 1 as Lemma \ref{finitelemma}.

Let the old coordinates refer to the coordinates where  Player 1's strategy is represented by its probability-density function $f$, and Player 2's strategy is represented by its probability-density function $g$, and let the new coordinates be 
$
\left( 
\begin{matrix}
f \cdot f_{0} \\
f \cdot f_{1} \\
f \cdot f_{2} \\
\cdots \\
f \cdot f_{N}
\end{matrix}
\right)
$
 for Player 1 and  
$
\left( 
\begin{matrix}
g \cdot g_{0} \\
g \cdot g_{1} \\
g \cdot g_{2} \\
\cdots \\
g \cdot g_{N}
\end{matrix}
\right)
$
 for Player 2, where $N$ is the degree of $r$ as a polynomial.

\subsection{Convex Hull}
\label{convexsec}

While the strategy space in the new coordinates is finite-dimensional if $r$ is a polynomial, its boundaries are rather difficult to find directly. However, tools from convexity theory allow an easier description of the strategy space.

Every mixed strategy is a convex combination of pure strategies, in that if $g$ is a mixed strategy, then $g\left(y\right) = \int \limits _{-\frac{n}{2}} ^{\frac{n}{2}} {\delta\left(t_y - y\right)g\left(t_y\right)dt_y}$, where $\delta\left(t_y - y \right)$ is the Dirac delta function applied to $t_y - y$, which is the probability density function of a random variable that always takes on the value $t_y$. Since the transformation to the new coordinates is linear, it follows that every mixed strategy in the new coordinates is likewise a convex combination of the pure strategies. In fact, the coefficients that show the mixed strategy is a convex combination of the pure strategies are exactly the same as the coefficients that do this in the old coordinates. 

Thus, the convex hull of the pure streategies in the new coordinates contains all the mixed strategies. If $r$ is a polynomial of degree $N$, the pure strategies are part of an (at most) $N+1$-dimensional space. Lemma \ref{convexlemma} in Appendix 1 shows that every point in the convex hull of an $N+1$-dimensional set is a convex combination of at most $N+2$ points, so every mixed strategy is a convex combination of $N+2$ or fewer strategies. This includes whatever the Nash equilibrium is.

In the example with $R\left(x, y\right) = \left(x - y\right)^2$, the strategies, in the new coordinates, are 
$
\left(
\begin{matrix}
f \cdot f_0 \\
f \cdot f_1 \\
f \cdot f_2
\end{matrix}
\right)
$ for Player 1 and 
$
\left(
\begin{matrix}
g \cdot g_0 \\
g \cdot g_1 \\
g \cdot g_2
\end{matrix}
\right)
$
 for Player 2.

The pure strategies are, in the new coordinates (as shown in Calculation \ref{calcpurestrategies} in Appendix 2),
\[
\left(
\begin{matrix}
\left(\sqrt{\frac{1}{2}}\right){\left(\frac{n+a}{2}\right)}^{-\frac{1}{2}} \\
\left(\sqrt{\frac{3}{2}}\right){\left(\frac{n+a}{2}\right)}^{-\frac{3}{2}} x \\
\left(\sqrt{\frac{45}{8}}\right){\left(\frac{n+a}{2}\right)}^{-\frac{5}{2}}x^2 - \left(\sqrt{\frac{5}{8}}\right){\left(\frac{n+a}{2}\right)}^{-\frac{1}{2}} \\
\end{matrix}
\right)
\]
for Player 1, and
\[
\left(
\begin{matrix}
\left(\sqrt{\frac{1}{2}}\right){\left(\frac{n}{2}\right)}^{-\frac{1}{2}} \\
\left(\sqrt{\frac{3}{2}}\right){\left(\frac{n}{2}\right)}^{-\frac{3}{2}} y \\
\left(\sqrt{\frac{45}{8}}\right){\left(\frac{n}{2}\right)}^{-\frac{5}{2}}y^2 - \left(\sqrt{\frac{5}{8}}\right){\left(\frac{n}{2}\right)}^{-\frac{1}{2}} \\
\end{matrix}
\right)
\]
for Player 2.
This can be viewed as a parametric representation of the curve containing all pure strategies (in the new coordinates). The Nash equilibrium strategies (for both players) must lie within the convex hull of this curve, and Calculation \ref{symmetrycalc} in Appendix 2 reduces the strategy space further, to a one-dimensional one (and actually finds the Nash equilibrium from there, although it may be easier to determine the strategy space in terms of the old coordinates, and then find the Nash equilibrium; after all, in the old coordinates, these strategies have simple descriptions).

This is fewer than the 4 strategies (which imply a 7-parameter family of strategies: 4 strategies, 4 coefficients, minus 1 for the coefficients always summing to 1) allowed by Lemma \ref{convexlemma} in Appendix 1 (and the overall theorem allows 5 strategies, which give 9 parmaters). Moreover, in each individual mixed strategy, the two strategies have fixed weights ($\frac{1}{2}$ each), so in fact, one parameter suffices for naming each mixed strategy that could be a Nash equilibrium. Are there always going to be fewer parameters describing the space in which a Nash equilibrium must lie? The answer is yes, and some of the extensions concern themselves with this.

\section{Extensions}

\subsection{Reducing the Number of Components}

\subsubsection{Degree of $R$}

$R\left(x, y\right) = r\left(x - y +\frac{a}{2}\right) + r\left(-x + y + \frac{a}{2}\right)$. Therefore, $R$ is an even function of $x - y$, which means that its degree in $x - y$ is even. Thus, if $r\left(z\right)$ has degree $N$ where $N$ is odd, then $R\left(x, y\right)$ cannot have degree $N$. At most, $R\left(x, y\right)$ has degree $N - 1$. That reduces the maximum number of components that might be required to describe a Nash-equilibrium strategy by one if $N$ is odd, so if $N$ is odd, there is a Nash-equilibrium strategy with only $N+1$ components (or fewer), rather than $N+2$.

\subsubsection{Removing the Odd Coordinates}
\label{componentsec}

The equivalence of the two battlefields means that only symmetrical strategies need to be considered, where a symmetrical strategy is one that, for whatever probability (or probability density) it assigns to a particular pure strategy, it assigns the same probability (or probability density) to the same strategy with the battlefields reversed. As exchanging allocations turns a pure strategy $x$ into pure strategy $-x$ (or $y$ into $-y$), if Player 1 exchanges allocations in every pure strategy of which Player 1's mixed strategy $f\left(x\right)$ is composed, that turns Player 1's strategy into $f\left(-x\right)$. A similar allocation exchange for Player 2 turns $g\left(y\right)$ into $g\left(-y\right)$. That makes any symmetrical strategy be represented by an even function.

A strategy can be symmetrized by taking its even part, $f_e\left(x\right) = \frac{f\left(x\right)+f\left(-x\right)}{2}$ (or $g_e\left(y\right) = \frac{g\left(y\right)+g\left(-y\right)}{2}$. Since this is a mixture of two mixed strategies with coefficients adding to 1, it is likewise a mixed strategy. If one player plays a symmetrical strategy, then the other player can symmetrize, without affecting the result, because a symmetrical strategy is invariant to the opponent exchanging allocations (and symmetrizing is mixing the original strategy with the same strategy with exchanged allocations). If Player 1 playing strategy $f$ and Player 2 playing strategy $g$ is a Nash equilibrium, then none of the steps of the following cycle: Player 1 symmetrizing, then Player 2 symmetrizing, then Player 1 removing the symmetrization, then Player 2 removing the symmetrization, improves Player 1's payoff. That means that each step keeps the payoff the same, so there is no effect on the payoff from both players symmetrizing from a Nash equilibrium. Furthermore, if a deviation from $f_e$ succeeded (in improving the result against the original, with Player 1 playing strategy $f_e$ and Player 2 playing strategy $g_e$) against $g_e$, the symmetrization of the deviation would succeed equally well, and the symmetrization would also succeed against $g$, but that is impossible because Player 1 playing strategy $f$ and Player 2 playing strategy $g$ is a Nash equilibrium. The deviation does not succeed for Player 1, and a deviation by Player 2 would similarly not succeed. Thus, if Player 1 playing strategy $f$ and Player 2 playing strategy $g$ is a Nash equilibrium, so is Player 1 playing strategy $f_e$ and Player 2 playing strategy $g_e$. More details can be found in Lemma \ref{evenequillemma} in Appendix 1 and its proof. 

Lemma \ref{orthobasislemma} in Appendix 1 confirms that $f_0, f_1, f_2, \cdots$ and $g_0, g_1, g_2, \cdots$ are all even or odd functions, and therefore, the new coordinates need only include the components corresponding to even-function polynomials, which  are polynomials with only even powers, the components of an even function corresponding to odd-function polynomials being zero. That reduces the dimension of the mixed strategy space to $\frac{N+2}{2}$ (or to $\frac{N+1}{2}$ if $N$ is odd), this being the number of nonnegative even integers less than or equal to the degree of $R$. As symmetrizing is a linear operator, these mixed strategies are all and only the convex combinations of the symmetrized pure strategies. (A symmetrized pure strategy at $x$, or $y$, is a mixed strategy of $x$ and $-x$, or correspondingly, $y$ and $-y$, each with probability $\frac{1}{2}$.) That means that there is a Nash-equilibrium strategy that is a convex combination of only $\frac{N+4}{2}$ (or $\frac{N+3}{2}$ if $N$ is odd) symmetrized pure strategies.

In the new coordinates, the strategies were $
\left(
\begin{matrix}
f \cdot f_0 \\
f \cdot f_1 \\
f \cdot f_2 \\
\cdots \\
f \cdot f_N
\end{matrix}
\right)
$ for Player 1 and 
$
\left(
\begin{matrix}
g \cdot g_0 \\
g \cdot g_1 \\
g \cdot g_2 \\
\cdots \\
g \cdot g_N
\end{matrix}
\right)
$
 for Player 2. This consideration reduces the strategy space to $
\left(
\begin{matrix}
f \cdot f_0 \\
f \cdot f_2 \\
f \cdot f_4 \\
\cdots \\
f \cdot f_{2\lfloor \frac{N}{2}\rfloor }
\end{matrix}
\right)
$ for Player 1 and 
$
\left(
\begin{matrix}
g \cdot g_0 \\
g \cdot g_2 \\
g \cdot g_4 \\
\cdots \\
g \cdot g_{2\lfloor \frac{N}{2}\rfloor }
\end{matrix}
\right)
$ for Player 2.\footnote{The subscript of the last entry is $N$ if $N$ is even, or $N-1$ if $N$ is odd.} \footnote{The functions with the even subscripts are even, and the functions with the odd subscripts are odd, because one term of $f_s$ as a polynomial in $x$ is a nonzero multiple of $x^s$, and therefore, as $f_s$ was already shown to be even or odd, it must be even if $x^s$ is even, or odd if $x^s$ is odd. However, $x^s$ is even exactly when $s$ is even, and $x^s$ is odd exactly when $s$ is odd.}

\subsubsection{Removing the $0$-th Coordinate}

When Carathéodory's Theorem was used, the dimension of the curve of pure strategies equaled the number of parameters. However, this curve actually has a dimension one lower than the number of parameters, because $f \cdot f_0$ and $g \cdot g_0$ are constants, and that is because 
\begin{align*}
f \cdot f_0 &= \int \limits _{-\frac{n+a}{2}} ^{\frac{n+a}{2}} {f\left(x\right) f_0\left(x\right) dx} \\
&= \int \limits _{-\frac{n+a}{2}} ^{\frac{n+a}{2}} {f\left(x\right) \left(\left(\sqrt{\frac{1}{2}}\right)\left(\frac{n+a}{2}\right)^{-\frac{1}{2}}\right) dx} \\
&= \left(\left(\sqrt{\frac{1}{2}}\right)\left(\frac{n+a}{2}\right)^{-\frac{1}{2}}\right)\int \limits _{-\frac{n+a}{2}} ^{\frac{n+a}{2}} {f\left(x\right) dx} \\
&= \left(\sqrt{\frac{1}{2}}\right)\left(\frac{n+a}{2}\right)^{-\frac{1}{2}}
\end{align*}
(because $\int \limits _{-\frac{n+a}{2}} ^{\frac{n+a}{2}} {f\left(x\right) dx} = 1$), and similarly for $g \cdot g_0$. This reduces the true dimension of either player's strategy space by one.

\subsubsection{Extending Carathéodory's Theorem}

Another extension is to make use of the fact that the set of all pure strategies is a polynomial curve. Carathéodory's Theorem assumes the worst-case scenario, that is, that the points of which the convex hull is taken are discrete. This is not the case here; the locations of the pure strategies can be varied continuously, so, if the curve is $M$-dimensional, then, instead of the usual $M+1$ points on the  curve being required to name any point in the convex hull, only $\frac{M+1}{2}$ points are needed, as each component contributes $2$ to the dimension of the space of the strategies: $1$ for the probability of playing this pure strategy, plus $1$ for the location of the pure strategy.\footnote{A number of points that is a half-integer is to be interpreted as the next higher integer number of points, with the first point fixed at its lowest possible value. This is ``half a point'' in the sense that it only contributes $1$ to the dimension of the space of the strategies, instead of $2$.} That this is indeed possible is shown as the corollary of the theorem in \cite{Part2}.

\subsubsection{The Total Effect}

These dimension-reducing extensions are not mutually-exclusive; indeed, all of them can be applied, in the order in which they were presented. After the odd coordinates were removed, the strategy space became
$
\left(
\begin{matrix}
f \cdot f_0 \\
f \cdot f_2 \\
f \cdot f_4 \\
\cdots \\
f \cdot f_{2\lfloor \frac{N}{2}\rfloor }
\end{matrix}
\right)
$ for Player 1 and 
$
\left(
\begin{matrix}
g \cdot g_0 \\
g \cdot g_2 \\
g \cdot g_4 \\
\cdots \\
g \cdot g_{2\lfloor \frac{N}{2}\rfloor }
\end{matrix}
\right)
$ for Player 2, and after the $0$-th coordinate was removed, the strategy space became only $\lfloor \frac{N}{2} \rfloor$-dimensional, and the allowance of $2$ parameters per component means that only $\frac{\lfloor \frac{N}{2} \rfloor+1}{2}$ symmetrized pure strategies are required. Thus, only $\frac{N+2}{4}$ if $N$ is even, or $\frac{N+1}{4}$ if $N$ is odd, symmetrized pure strategies are required to find the Nash equilibrium.

This low number allows the use of an otherwise-infeasible brute-force algorithm to find an approximate Nash equilibrium. First, find the payoff when both players play symmetrized pure strategies, in terms of which symmetrized pure strategies the players play. Then, restrict Player 1's symmetrized pure strategies to $\left \lbrace 0, \frac{\left(\frac{n+a}{2}\right)}{L}, 2\frac{\left(\frac{n+a}{2}\right)}{L}, \cdots \frac{n+a}{2} \right \rbrace$, for some integer $L$. Do the same for Player 2, which gives the set $\left \lbrace 0, \frac{\left(\frac{n}{2}\right)}{L}, 2\frac{\left(\frac{n}{2}\right)}{L}, \cdots \frac{n}{2} \right \rbrace$. For (symmetrized) mixed strategies, express them as convex combinations of symmetrized pure strategies, and restrict the coefficients to the set of integer multiples of $\frac{1}{L}$ (with these coefficients still summing to $1$). Then, for each mixed strategy\footnote{The only mixed strategies that are to be used here are those that are convex combinations of up to $\frac{N+2}{4}$, if $N$ is even, (or of up to $\frac{N+1}{4}$, if $N$ is odd) symmetrized pure strategies.} of Player 1, and for each symmetrized pure strategy of Player 2 (for both players, using the only pure strategies possible to be ones in the ``restricted'' set), compute the expected payoff. For each mixed strategy of Player 1, choose the lowest result obtained by Player 2's pure strategies. Over all the mixed strategies, the highest such minimum is the approxmate Nash-equilibrium strategy for Player 1. Reverse the players (so now Player 2 selects a mixed strategy, while Player 1 selects a symmetrized pure strategy) to get an approximate Nash-equilibrium strategy for Player 2. A check on the accuracy of this can be obtained by finding the best result either player can get by playing any (not just one in the ``restricted'' set) pure strategy against the other player's claimed approximate Nash-equilibrium strategy.

\subsection{Other Possible Extensions}

The main result need not be limited to the class of games examined here; for instance, it would also apply if the choices of allocations for both players were discrete, or if the two battlefields did not have the same outcome function, or if the outcome function (on either battlefield) were any polynomial of degree $N$ or less in both $x$ and $y$ (not just a polynomial of degree $N$ or less in $x-y$), or any combination of these. However, not all of these extensions apply in all of these cases.

Polynomials are not the only kind of function that give a payoff matrix of finite rank. This theorem might be modified to be applied to other functions producing payoff matrices of finite rank, perhaps functions like $P\left(z\right)\sin\left(z\right)$ or $P\left(z\right)\sinh\left(z\right)$, where $P\left(z\right)$ is a polynomial.

With more than two batlefields, the strategy spaces still have an orthonormal basis, this time of polynomials in several variables. Might a Blotto game with the same polynomial outcome function on each battlefield also have a Nash equilibrium with a discrete distribution? If so, how many components does this discrete distribution have? Perhaps, the methods of this paper can be used to answer these questions.

\section{Conclusion}

Modeling a competition using Blotto with a polynomial outcome function has two advantages over modeling the same competition with Blotto where the outcome on every battlefield is a win or a loss. One advantage is that a polynomial is a better fit to the actual function than a staircase function, especially if the staircase function is required to have its jump at zero (which it should if we want to make the battlefield fair). This is even true if the battlefields actually do have a winner-take-all rule, because a resource advantage does not guarantee victory. 

The other advantage is the existence of a Nash-equilibrium strategy that is a discrete distribution with relatively few components. This leaves only a finite-dimensional space of possible Nash equilibria, which can be searched for saddle points. This is also a good reason to try to minimize the dimension of the space in which Nash equilibria exist, so that the objects on which there is no small perturbation of strategy that improves the result are as low-dimensional as possible; ideally, these should be points; hence the extensions. Once such an equilibrium strategy is found, it is easy to implement with a random-number generator. The same cannot be said about continuous distributions in general, because simulating such a distribution requires knowing how to invert the cumulative distribution function, and this inverse might not have a convenient form.


\section{Appendix 1: Lemmas}

\begin{lem}[Effect of the shift of coordinates on the payoff]
\label{shiftlemma}
Player 1's payoff is $r\left(x-y + \frac{a}{2}\right) + r\left(-x +y + \frac{a}{2}\right)$.
\end{lem}

\begin{proof}
Player 1's payoff is:
\begin{align*}
& r\left(\tilde{x}-\tilde{y}\right) + r\left(a-\tilde{x}+\tilde{y}\right) \\
& = r\left(\left(x + \frac{n+a}{2}\right)-\left(y + \frac{n}{2}\right)\right) + r\left(a-\left(x + \frac{n+a}{2}\right)+\left(y + \frac{n}{2}\right)\right) \\
& = r\left(x + \frac{n+a}{2}-y - \frac{n}{2}\right) + r\left(a-x - \frac{n+a}{2}+y + \frac{n}{2}\right) \\
& = r\left(x-y + \frac{a}{2}\right) + r\left(-x +y + \frac{a}{2}\right)
\end{align*}
\end{proof}

\begin{lem}
\label{expvallemma}
Player 1's expected payoff, which is $\int \limits _{-\frac{n}{2}} ^{\frac{n}{2}} {\int \limits _{-\frac{n}{2}} ^{\frac{n}{2}} {R\left(x, y\right)f\left(x\right)g\left(y\right) dy} \,dx}$, equals $f \cdot \left(Eg\right)$.
\end{lem}

\begin{proof}
Player 1's expected payoff can be simplified as follows:
\begin{align*}
\int \limits _{-\frac{n}{2}} ^{\frac{n}{2}} {\int \limits _{-\frac{n}{2}} ^{\frac{n}{2}} {R\left(x, y\right)f\left(x\right)g\left(y\right) dy} \,dx} & = \int \limits _{-\frac{n}{2}} ^{\frac{n}{2}} {f\left(x\right)\int \limits _{-\frac{n}{2}} ^{\frac{n}{2}} {R\left(x, y\right)g\left(y\right) dy} \,dx} \\
& = \int \limits _{-\frac{n}{2}} ^{\frac{n}{2}} {\left(f\left(x\right)\right)\left( \left(Eg\right)\left(x\right)\right) dx} \\
& = f \cdot \left(Eg\right)
\end{align*}
and thus equals $f \cdot \left(Eg\right)$.
\end{proof}

\begin{lem}
\label{decomp}
If $g$ is a function, there is exactly one decomposition of $g$ into $g_{||} + g_{\perp}$, such that $g_{||}$ is a polynomial with degree at most $M$, and $g_{\perp}$ is orthogonal to all polynomials with degree at most $M$. 
\end{lem}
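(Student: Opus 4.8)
The plan is to recognize this statement as nothing more than the existence and uniqueness of the orthogonal projection onto the finite-dimensional subspace $V$ of polynomials of degree at most $M$, taken inside the function space equipped with the inner product $\cdot$ already defined on the relevant interval. The one asset I need is the orthonormal system $g_0, g_1, g_2, \ldots$ constructed earlier, in which each $g_s$ is a polynomial of degree exactly $s$. Since $g_0, \ldots, g_M$ are $M+1$ orthonormal (hence linearly independent) polynomials of degree at most $M$, and the space $V$ of such polynomials has dimension $M+1$, their span is exactly $V$. This lets me replace ``orthogonal to all polynomials of degree at most $M$'' with ``orthogonal to each of $g_0, \ldots, g_M$''.

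For existence, I would simply set $g_{||} = \sum_{s=0}^{M} (g \cdot g_s)\, g_s$ and $g_{\perp} = g - g_{||}$. By construction $g_{||} \in V$, so it is a polynomial of degree at most $M$. To confirm $g_{\perp} \perp V$, it suffices to test against each basis vector, because the inner product is linear in each argument. Using orthonormality, $g_{||} \cdot g_s = \sum_{t=0}^{M} (g \cdot g_t)(g_t \cdot g_s) = g \cdot g_s$, so $g_{\perp} \cdot g_s = g \cdot g_s - g_{||} \cdot g_s = 0$ for every $s \le M$, which gives the required orthogonality.

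For uniqueness, suppose $g = g_{||} + g_{\perp} = g_{||}' + g_{\perp}'$ are two valid decompositions and put $h = g_{||} - g_{||}' = g_{\perp}' - g_{\perp}$. The first expression exhibits $h$ as a polynomial of degree at most $M$, hence $h \in V$, while the second exhibits $h$ as orthogonal to every element of $V$. Testing $h$ against itself then yields $h \cdot h = 0$. Because $h$ is a polynomial, it is continuous, so $\int h^2 = 0$ forces $h \equiv 0$; therefore $g_{||} = g_{||}'$ and $g_{\perp} = g_{\perp}'$.

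The algebraic content is entirely routine; the only point demanding care is analytic rather than combinatorial. The coefficients $g \cdot g_s$ must be finite, and the passage from $h \cdot h = 0$ to $h = 0$ relies on the inner product being positive definite. The first holds once $g$ is square-integrable on the interval, so that each $g \cdot g_s$ converges by Cauchy--Schwarz; the second holds outright here because the function forced to vanish is a polynomial, for which $\int h^2 = 0$ already means $h \equiv 0$. I would therefore just record the standing square-integrability assumption on $g$ and otherwise let the projection argument run as above.
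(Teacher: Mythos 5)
Your proposal is correct and matches the paper's own proof essentially step for step: both project $g$ onto a Gram--Schmidt orthonormal basis of the degree-$\le M$ polynomials to get $g_{||}$, verify orthogonality of $g_{\perp}$ against each basis vector, and prove uniqueness by showing the difference of two decompositions is a polynomial orthogonal to itself. Your added remark that $g$ should be square-integrable (so the coefficients $g \cdot g_s$ are finite) is a small but sensible sharpening of a hypothesis the paper leaves implicit, not a different argument.
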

\emph{Proof sketch}. Use Gram-Schmidt orthogonalization to get an orthonormal basis of the polynomials with degree at most $M$. Then, project $g$ to the space of polynomials with degree at most $M$, for instance by adding the projections of $g$ to the spaces of each individual vector of the basis. This is $g_{||}$. What is left is orthogonal to all polynomials with degree at most $M$.

This is the only decomposition; if there were two distinct ones, subtract one from the other, which gives an alternative way to write $0$: $0 = 0_{||} + 0_{\perp}$. $0_{||} = -0_{\perp}$, so $0_{||}$ is both a polynomial of degree at most $M$ and orthogonal to all polynomials of degree at most $M$, including itself. That makes $0_{||} = 0$ and $0_{\perp} = 0$, so there is no alternative decomposition.
\begin{proof}
As the polynomials with degree at most $M$ form an $\left(M+1\right)$-dimensional vector space, this vector space must have an orthonormal basis (which can be constructed by the Gram-Schmidt process, starting from the non-orthonormal basis $\left \lbrace 1, y, y^2, ..., y^M \right \rbrace$). Let $\left \lbrace b_0, b_1, b_2, ..., b_M \right \rbrace$ be this orthonormal basis. Then, $g_{||} = \sum \limits _{t = 0} ^{M} {\left(\left(g \cdot b_t\right)b_t\right)}$, and $g_{\perp} = g - g_{||}$. Indeed, for any $b_u$ with $u \in \left \lbrace 0, 1, 2, ..., M \right \rbrace$:
\begin{align*}
b_u \cdot g_{\perp} &= b_u \cdot \left(g - g_{||} \right) \\
&= b_u \cdot \left(g - \left(\sum \limits _{t = 0} ^{M} {\left(\left(g \cdot b_t\right)b_t\right)}\right) \right) \\
&= b_u \cdot g - \sum \limits _{t = 0} ^{M} {\left(\left(g \cdot b_t\right)\left(b_u \cdot b_t\right)\right)} \\
&= b_u \cdot g - g \cdot b_u \\
& = 0 \\
\end{align*}

(As $\left \lbrace b_0, b_1, b_2, ..., b_M \right \rbrace$ is an orthonormal basis, all terms of $\sum \limits _{t = 0} ^{M} {\left(\left(g \cdot b_t\right)\left(b_u \cdot b_t\right)\right)}$ except for the term with $t = u$, and if $t = u$, then $b_t \cdot b_u = 1$.) 

 The decomposition is unique, because if there were two decompositions $g =  g_{|| 1} + g_{\perp 1} = g_{|| 2} + g_{\perp 2}$, then:
\begin{align*}
\left(g_{|| 1} + g_{\perp 1}\right) - \left( g_{|| 2} + g_{\perp 2}\right) &= 0 \\
\left(g_{|| 1} - g_{|| 2}\right) + \left(g_{\perp 1} - g_{\perp 2}\right) &= 0 \\
\left(g_{|| 1} - g_{|| 2}\right) &= -\left(g_{\perp 1} - g_{\perp 2}\right) \\
\end{align*}
However, this means that $g_{|| 1} - g_{|| 2}$ is both a polynomial with degree at most $M$, and is orthogonal to all polynomials with degree at most $M$. In particular, this means that $g_{|| 1} - g_{|| 2}$ is orthogonal to itself, and the only polynomial that is orthogonal to itself is the zero polynomial. The same is true of $g_{\perp 1} - g_{\perp 2}$. Thus, $g_{|| 1} = g_{|| 2}$ and $g_{\perp 1} = g_{\perp 2}$, so the two decompositions were the same.
\end{proof}

\begin{lem}[Finite Rank of $E$]
\label{finitelemma}

If $r\left(z\right)$ is a polynomial function of $z$ with degree $N$, if $f_0, f_1, \cdots, f_M$ are orthonormal and are all polynomials in $x$ with degree not greater than $M$, and if $g_0, g_1, \cdots, g_M$ are orthonormal and are all polynomials in $y$ with degree not greater than $M$, then, for some $M \leq N$, $f \cdot \left(Eg\right)$ can be expressed as 
\[
\left(
\begin{matrix}
f \cdot f_0 & f \cdot f_1 & \cdots & f \cdot f_M \\
\end{matrix}
\right)
\left(
\begin{matrix}
f_0 \cdot \left(Eg_0\right) & f_0 \cdot \left(Eg_1\right) & \cdots & f_0 \cdot \left(Eg_M\right) \\
f_1 \cdot \left(Eg_0\right) & f_1 \cdot \left(Eg_1\right) & \cdots & f_1 \cdot \left(Eg_M\right) \\
\cdots & \cdots & \cdots & \cdots \\
f_M \cdot \left(Eg_0\right) & f_M \cdot \left(Eg_1\right) & \cdots & f_M \cdot \left(Eg_M\right) \\
\end{matrix}
\right)
\left(
\begin{matrix}
g \cdot g_0 \\
g \cdot g_1 \\
\cdots \\
g \cdot g_M \\
\end{matrix}
\right)
\]

\end{lem}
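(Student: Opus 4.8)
The plan is to show that the matrix of $E$ in these bases has all of its nonzero entries confined to the top-left $(N+1)\times(N+1)$ block, so that taking $M = N$ makes the truncated matrix product equal to $f \cdot (Eg)$. The entire argument rests on one structural observation about $R$: since $r$ is a polynomial of degree $N$ and the arguments $x - y + \frac{a}{2}$ and $-x + y + \frac{a}{2}$ are each affine (degree one) in $x$ and in $y$, the function $R\left(x,y\right) = r\left(x - y + \frac{a}{2}\right) + r\left(-x + y + \frac{a}{2}\right)$ is a polynomial of degree at most $N$ in $x$ for each fixed $y$, and of degree at most $N$ in $y$ for each fixed $x$.

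First I would establish the ``column'' vanishing. Applying Lemma \ref{decomp} with $M$ taken to be $N$, write $g = g_{||} + g_{\perp}$, where $g_{||}$ is a polynomial of degree at most $N$ and $g_{\perp}$ is orthogonal to every polynomial of degree at most $N$. Because $R\left(x, \cdot\right)$ is, for each fixed $x$, a polynomial of degree at most $N$ in $y$, it is orthogonal to $g_{\perp}$, so $\left(Eg_{\perp}\right)\left(x\right) = \int R\left(x,y\right) g_{\perp}\left(y\right)\, dy = 0$ for every $x$; hence $Eg = Eg_{||}$. In particular $Eg_j = 0$ whenever $g_j$ is a polynomial of degree exceeding $N$, which annihilates every column of the matrix beyond the $N$-th.

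Next comes the ``row'' vanishing. Since $R\left(\cdot, y\right)$ has degree at most $N$ in $x$, integrating against $g$ keeps the degree down, so $Eg$ is itself a polynomial of degree at most $N$ in $x$ and therefore lies in the span of $f_0, \dots, f_N$. Consequently $f_i \cdot \left(Eg_j\right) = 0$ for all $i > N$, annihilating every row beyond the $N$-th. Combining the two facts, the only entries $f_i \cdot \left(Eg_j\right)$ that can be nonzero have $i, j \leq N$. Expanding the bilinear form as $f \cdot \left(Eg\right) = \sum_{i,j} \left(f \cdot f_i\right)\left(f_i \cdot \left(Eg_j\right)\right)\left(g \cdot g_j\right)$, only the terms with $i, j \leq N$ survive, and this is exactly the stated truncated matrix product with $M = N$. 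When $R$ happens to have $x$-degree strictly below $N$ (as in the worked example, where $R$ is quadratic although $r$ is cubic), the identical reasoning yields a smaller admissible $M$, so the clause ``for some $M \leq N$'' is justified.

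I expect the only delicate point to be the degree bookkeeping for $R$ together with the verification that the bilinear form genuinely factors through the finitely many coordinates $f \cdot f_0, \dots, f \cdot f_N$ and $g \cdot g_0, \dots, g \cdot g_N$. This needs that $f_0, \dots, f_N$ span all polynomials of degree at most $N$ (true because they are $N+1$ orthonormal, hence linearly independent, polynomials inside the $(N+1)$-dimensional space of such polynomials) and a careful appeal to Lemma \ref{decomp}, so that the orthogonal tails $f_{\perp}$ and $g_{\perp}$ contribute nothing. The interchange needed to peel off the inner $y$-integral defining $Eg$ is routine and is already licensed by Lemma \ref{expvallemma}.
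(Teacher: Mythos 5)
Your proof is correct and takes essentially the same route as the paper's: both apply Lemma \ref{decomp} to split $f$ and $g$ into a polynomial part of bounded degree plus an orthogonal remainder, kill the remainders by the same two observations (for each fixed $x$ the kernel is a polynomial of bounded degree in $y$, so $Eg_{\perp}=0$ pointwise; and $Eg$ is itself a polynomial of bounded degree in $x$, so $f_{\perp}$ contributes nothing), and then expand the surviving parts $f_{||}$ and $g_{||}$ in the orthonormal bases to obtain the truncated matrix identity. The only differences are cosmetic: the paper fixes $M$ as the degree of $P\left(x-y\right)=r\left(x-y+\frac{a}{2}\right)+r\left(-\left(x-y\right)+\frac{a}{2}\right)$ in the single variable $x-y$, whereas you take $M=N$ from separate degree bounds in $x$ and $y$ and recover a sharper $M$ by remark --- though note that your aside ``$Eg_j=0$ whenever $g_j$ has degree exceeding $N$'' holds only because such a $g_j$ in the Gram--Schmidt basis is orthogonal to every polynomial of degree at most $N$, not by virtue of its degree alone (the aside is not load-bearing, since the lemma's matrix involves only $g_0,\dots,g_M$).
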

\emph{Proof sketch}. Let $M$ be the degree of $R\left(x, y\right)$ as a polynomial in $x - y$. Decompose $f \cdot \left(Eg\right)$ into $f_{||} \cdot \left(Eg_{||}\right) + f_{\perp} \cdot \left(Eg_{||}\right) + f_{||} \cdot \left(Eg_{\perp}\right) + f_{\perp} \cdot \left(Eg_{\perp}\right)$. The last two terms are zero, because $R\left(x, y\right)$ is a polynomial of degree $M$ in both $x$ and $y$, and at each $x$, applying $E$ to $g_{\perp}$ is equivalent to taking the dot product of $R$ and $g_{\perp}$, which is zero. $Eg$ is a polynomial in $x$ with degree at most $M$ (as $R\left(x, y\right)$ is such, and as $Eg = \int \limits _{-\frac{n}{2}} ^{\frac{n}{2}} {R\left(x, y\right)g\left(y\right)dy}$ is an integral with respect to $y$), and $f_{\perp}$ is orthogonal to that, so the second term is zero also. Thus, $f \cdot \left(Eg\right) = f_{||} \cdot \left(Eg_{||}\right)$.

Decompose $f$ according to an orthonormal basis $\left \lbrace f_0, f_1, \cdots f_M \right \rbrace$, and decompose $g$ according to an orthonormal basis $\left \lbrace g_0, g_1, \cdots g_M \right \rbrace$. Then, if $f = f_{s_x}$ for some $s_x \in \left \lbrace 0, 1, \cdots, M \right \rbrace$ and $g = g_{s_y}$ for some $s_y \in \left \lbrace 0, 1, \cdots, M \right \rbrace$, then the equation ($f \cdot \left(Eg\right)$ equaling the expression) holds. Thus, using linear combinations, the equation holds for $f = f_{||}$ and $g = g_{||}$, because both sides are linear in $f$ and in $g$. However, for all $s \in \left \lbrace 0, 1, \cdots, M \right \rbrace$, $f_{\perp} \cdot f_s = 0$ and $g_{\perp} \cdot g_s = g_{||} \cdot g_s$, so, for $f = f_{\perp}$ or $g = g_{\perp}$, both sides of the equation are zero, so it holds again. Thus, again using linear combinations, the equation holds for $f$ in general.
\begin{proof}
\begin{align*}
Eg & = \int \limits _{-\frac{n}{2}} ^{\frac{n}{2}} {R\left(x, y\right)g\left(y\right)dy} \\
& = \int \limits _{-\frac{n}{2}} ^{\frac{n}{2}} {\left(r\left(x - y + \frac{a}{2}\right)+r\left(-x + y + \frac{a}{2}\right) \right)g\left(y\right)dy} \\
& = \int \limits _{-\frac{n}{2}} ^{\frac{n}{2}} {\left(r\left(x - y + \frac{a}{2}\right)+r\left(-\left(x - y\right) + \frac{a}{2}\right) \right)g\left(y\right)dy}
\end{align*}
$\frac{a}{2}$ being a constant, it follows that $r\left(x - y + \frac{a}{2}\right)+r\left(-\left(x - y\right) + \frac{a}{2}\right)$ is a polynomial in $x - y$ with degree not more than $N$. Let $P\left(x - y\right) = r\left(x - y + \frac{a}{2}\right)+r\left(-\left(x - y\right) + \frac{a}{2}\right)$, and let $M$ be the degree of $P$. This means that $M$ is a nonnegative integer not greater than $N$. Thus, $R\left(x, y\right) = P\left(x - y\right)$. 

Decompose $g$ into $g_{||} + g_{\perp}$, such that $g_{||}$ is a polynomial with degree at most $M$, and $g_{\perp}$ is orthogonal to all polynomials with degree at most $M$ (which can be done in exactly one way, by Lemma \ref{decomp}). Also decompose $f$ into $f_{||} + f_{\perp}$. Then:
\begin{align*}
f \cdot \left(Eg\right) &= f \cdot \left(\int \limits _{-\frac{n}{2}} ^{\frac{n}{2}} {P\left(x - y\right)g\left(y\right)dy}\right) \\
&= f \cdot \left(\int \limits _{-\frac{n}{2}} ^{\frac{n}{2}} {P\left(x - y\right)\left(g_{||}\left(y\right) + g_{\perp}\left(y\right)\right)dy }\right) \\
&= f \cdot \left(\int \limits _{-\frac{n}{2}} ^{\frac{n}{2}} {P\left(x - y\right)g_{||}\left(y\right)dy }+\int \limits _{-\frac{n}{2}} ^{\frac{n}{2}} {P\left(x - y\right)g_{\perp}\left(y\right)dy }\right) \\
\end{align*}
$P$ is a polynomial in $x-y$ with degree at most $M$, which means that, regardless of the value of $x$, $g_{\perp}$ is orthogonal to $P$. That means that $\int \limits _{-\frac{n}{2}} ^{\frac{n}{2}} {P\left(x - y\right)g_{\perp}\left(y\right) dy} = 0$ for all values of $x$ (in $\left[-\frac{n}{2},\frac{n}{2}\right]$). Thus:
\begin{align*}
f \cdot \left(Eg\right) &=  f \cdot \left(\int \limits _{-\frac{n}{2}} ^{\frac{n}{2}} {P\left(x - y\right)g_{||}\left(y\right)dy }+\int \limits _{-\frac{n}{2}} ^{\frac{n}{2}} {P\left(x - y\right)g_{\perp}\left(y\right)dy }\right) \\
&=  f \cdot \left(\int \limits _{-\frac{n}{2}} ^{\frac{n}{2}} {P\left(x - y\right)g_{||}\left(y\right)dy }\right) \\
&=  \left(f_{||} + f_{\perp}\right) \cdot \left(\int \limits _{-\frac{n}{2}} ^{\frac{n}{2}} {P\left(x - y\right)g_{||}\left(y\right)dy }\right) \\
&=  f_{||} \cdot \left(\int \limits _{-\frac{n}{2}} ^{\frac{n}{2}} {P\left(x - y\right)g_{||}\left(y\right)dy }\right) + f_{\perp} \cdot \left(\int \limits _{-\frac{n}{2}} ^{\frac{n}{2}} {P\left(x - y\right)g_{||}\left(y\right)dy }\right) \\
\end{align*}
As $f_{\perp}$ is orthogonal to all polynomials in $x$ of degree not more than $N$, and as $\int \limits _{-\frac{n}{2}} ^{\frac{n}{2}} {P\left(x - y\right)g_{||}\left(y\right)dy }$ is itself a polynomial in $x$ of degree not more than $M$, this means that $f_{\perp} \cdot \left(\int \limits _{-\frac{n}{2}} ^{\frac{n}{2}} {P\left(x - y\right)g_{||}\left(y\right)dy }\right) = 0$. Thus:
\begin{align*}
f \cdot \left(Eg\right) &=  f_{||} \cdot \left(\int \limits _{-\frac{n}{2}} ^{\frac{n}{2}} {P\left(x - y\right)g_{||}\left(y\right)dy }\right) + f_{\perp} \cdot \left(\int \limits _{-\frac{n}{2}} ^{\frac{n}{2}} {P\left(x - y\right)g_{||}\left(y\right)dy }\right) \\
&=  f_{||} \cdot \left(\int \limits _{-\frac{n}{2}} ^{\frac{n}{2}} {P\left(x - y\right)g_{||}\left(y\right)dy }\right) \\
&=  f_{||} \cdot \left(\int \limits _{-\frac{n}{2}} ^{\frac{n}{2}} {R\left(x, y\right)g_{||}\left(y\right)dy }\right) \\
&=  f_{||} \cdot \left(Eg_{||}\right)
\end{align*}
As the polynomials in $x$ (defined on $\left[-\frac{n+a}{2},\frac{n+a}{2}\right]$) with degree not more than $M$ constitute an $\left(M+1\right)$-dimensional space, they have a basis with $M+1$ elements (for example, $\left \lbrace 1, x, x^2, ..., x^M \right \rbrace$). Moreover, applying the Gram-Schmidt construction guarantees the existence of an orthonormal basis of the polynomials of degree not more than $M$, and this basis still has $M+1$ elements. Let $\left \lbrace f_0, f_1, f_2, ..., f_M \right \rbrace$ be this orthonormal basis, or any other orthonormal basis. The polynomials in $y$ (defined on $\left[-\frac{n}{2},\frac{n}{2}\right]$) with degree not more than $M$ have a different orthonormal basis, $\left \lbrace g_0, g_1, g_2, ..., g_M \right \rbrace$. Using these orthonormal bases, $f_{||}$ decomposes as $\sum \limits _{s_x = 0} ^{M} {\left(f_{||} \cdot f_{s_x} \right)f_{s_x}}$ (as can be checked by computing the dot products of each $f_{s_x}$ with both $f_{||}$ and its decomposition). Similarly, $g_{||}$ decomposes as $\sum \limits _{s_y = 0} ^{M} {\left(g_{||} \cdot g_{s_y} \right)g_{s_y}}$. Thus:
\begin{align*}
f \cdot \left(Eg\right) &=  f_{||} \cdot \left(Eg_{||}\right) \\
&= \left(\sum \limits _{s_x = 0} ^{M} {\left(f_{||} \cdot f_{s_x} \right)f_{s_x}}\right) \cdot \left(E\left(\sum \limits _{s_y = 0} ^{M} {\left(g_{||} \cdot g_{s_y} \right)g_{s_y}}\right)\right) \\
&= \left(\sum \limits _{s_x = 0} ^{M} {\left(f_{||} \cdot f_{s_x} \right)f_{s_x}}\right) \cdot \left(\sum \limits _{s_y = 0} ^{M} {\left(g_{||} \cdot g_{s_y} \right)\left(Eg_{s_y}\right)}\right) \\
&= \sum \limits _{s_x = 0} ^{M} {\left(\left(\left(f_{||} \cdot f_{s_x} \right)f_{s_x}\right)\cdot \left(\sum \limits _{s_y = 0} ^{M} {\left(g_{||} \cdot g_{s_y} \right)\left(Eg_{s_y}\right)}\right)\right)} \\
&= \sum \limits _{s_x = 0} ^{M} {\left(\sum \limits _{s_y = 0} ^{M} {\left(g_{||} \cdot g_{s_y} \right)\left(\left(\left(f_{||} \cdot f_{s_x} \right)f_{s_x}\right) \cdot \left(Eg_{s_y}\right)\right)}\right)} \\
&= \sum \limits _{s_x = 0} ^{M} {\sum \limits _{s_y = 0} ^{M} {\left(\left(f_{||} \cdot f_{s_x} \right)\left(g_{||} \cdot g_{s_y} \right)\left(\left(f_{s_x}\right) \cdot \left(Eg_{s_y}\right)\right)\right)}} \\
\end{align*}
As $f_0, f_1, f_2, \cdots, f_M$ are all polynomials with degree at most $M$, and as $f_{\perp}$ is orthogonal to all polynomials with degree at most $M$, it follows that $f_{\perp} \cdot f_{s_x} = 0$. Thus,
\begin{align*}
f \cdot f_{s_x} &= \left(f_{||} + f_{\perp}\right) \cdot f_{s_x} \\
&= \left(f_{||} \cdot f_{s_x} \right) + \left(f_{\perp} \cdot f_{s_x} \right) \\
&= \left(f_{||} \cdot f_{s_x} \right) + 0 \\
&= f_{||} \cdot f_{s_x}
\end{align*}
For the same reason, $g \cdot g_{s_y} = g_{||} \cdot g_{s_y}$. Thus:
\begin{align*}
f \cdot \left(Eg\right) &= \sum \limits _{s_x = 0} ^{M} {\sum \limits _{s_y = 0} ^{M} {\left(\left(f_{||} \cdot f_{s_x} \right)\left(g_{||} \cdot g_{s_y} \right)\left(\left(f_{s_x}\right) \cdot \left(Eg_{s_y}\right)\right)\right)}} \\
&= \sum \limits _{s_x = 0} ^{M} {\sum \limits _{s_y = 0} ^{M} {\left(\left(f \cdot f_{s_x} \right)\left(g \cdot g_{s_y} \right)\left(\left(f_{s_x}\right) \cdot \left(Eg_{s_y}\right)\right)\right)}} \\
&=
\left( 
\begin{matrix}
f \cdot f_0 & \cdots & f \cdot f_M \\
\end{matrix}
\right)
\left( 
\begin{matrix}
f_0 \cdot \left(Eg_0\right) & \cdots & f_0 \cdot \left(Eg_M\right) \\
\cdots & \cdots & \cdots \\
f_M \cdot \left(Eg_0\right) & \cdots & f_M \cdot \left(Eg_M\right) \\
\end{matrix}
\right)
\left( 
\begin{matrix}
g \cdot g_{0} \\
\cdots \\
g \cdot g_M \\
\end{matrix}
\right)
\end{align*}
This is the equality claimed in this lemma. Also, indeed, $f_0, f_1, \cdots, f_M$ are orthonormal and are all polynomials with degree not greater than $M$ (as they were defined to be this way), and where $g_0, g_1, \cdots, g_M$ are orthonormal and are all polynomials with degree not greater than $M$ (again, as they were defined to be this way).
\end{proof}
\begin{lem}
\label{convexminusonelemma}
Let $v_1$, ..., $v_{N+2}$ be points in $\Re^N$. Then, if $v$ is a convex combination of these points, $v$ can also be expressed as a convex combination of at most $N+1$ of them, not necessarily the same ones for every choice of $v$.
\end{lem}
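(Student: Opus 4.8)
The plan is to reduce the number of points by finding a linear (affine) dependence among the $v_i$ and then using it to drive one of the convex-combination coefficients to zero. Write $v = \sum_{i=1}^{N+2} \lambda_i v_i$ with each $\lambda_i \geq 0$ and $\sum_{i=1}^{N+2} \lambda_i = 1$. The key observation is that the $N+1$ difference vectors $v_2 - v_1, v_3 - v_1, \ldots, v_{N+2} - v_1$ all lie in $\Re^N$, and any collection of $N+1$ vectors in $\Re^N$ is linearly dependent. So there exist scalars $\mu_2, \ldots, \mu_{N+2}$, not all zero, with $\sum_{i=2}^{N+2} \mu_i (v_i - v_1) = 0$. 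Setting $\mu_1 = -\sum_{i=2}^{N+2} \mu_i$, I obtain coefficients $\mu_1, \ldots, \mu_{N+2}$, not all zero, satisfying both $\sum_{i=1}^{N+2} \mu_i v_i = 0$ and $\sum_{i=1}^{N+2} \mu_i = 0$.

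Next I would introduce the one-parameter family of representations $v = \sum_{i=1}^{N+2} (\lambda_i - t\mu_i) v_i$, which is valid for every real $t$ because subtracting $t \sum_i \mu_i v_i = 0$ changes nothing; moreover $\sum_{i=1}^{N+2} (\lambda_i - t\mu_i) = 1 - t \cdot 0 = 1$, so the coefficients always sum to $1$. The goal is then to choose $t$ so that one coefficient becomes $0$ while all the others remain nonnegative. Since the $\mu_i$ are not all zero and sum to $0$, at least one $\mu_i$ is strictly positive, so the set $\{\,i : \mu_i > 0\,\}$ is nonempty. I would set $t = \min_{i:\,\mu_i > 0} \lambda_i / \mu_i$, attained at some index $k$, noting $t \geq 0$.

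Finally I would verify that this choice works: $\lambda_k - t\mu_k = 0$ by the choice of $k$, so the $k$-th point drops out, leaving a combination of at most $N+1$ of the points. For nonnegativity, if $\mu_i \leq 0$ then $-t\mu_i \geq 0$ (as $t \geq 0$), hence $\lambda_i - t\mu_i \geq \lambda_i \geq 0$; and if $\mu_i > 0$ then $t \leq \lambda_i/\mu_i$ gives $\lambda_i - t\mu_i \geq 0$ directly. Thus $v$ is a convex combination of at most $N+1$ of the $v_i$. The main obstacle is really just the bookkeeping in this last step — ensuring simultaneously that some coefficient vanishes and that nonnegativity is preserved — and it hinges entirely on the two properties $\sum_i \mu_i v_i = 0$ and $\sum_i \mu_i = 0$ established at the outset (the second is what forces a positive $\mu_i$ to exist and what keeps the coefficient sum equal to $1$). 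Everything else is routine linear algebra.
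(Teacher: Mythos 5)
Your proof is correct and takes essentially the same route as the paper's (indeed it matches the paper's own proof sketch almost exactly): both construct an affine dependence $\sum_{i=1}^{N+2} \mu_i v_i = 0$ with $\sum_{i=1}^{N+2} \mu_i = 0$ from the linear dependence of the differences $v_i - v_1$, and then perturb the convex coefficients by the critical multiple $t = \min_{i:\,\mu_i > 0} \lambda_i/\mu_i$ of that dependence, which zeroes out one coefficient while the sign analysis keeps all others nonnegative and the sum equal to $1$. There is nothing to fix.
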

\emph{Proof sketch}. Consider $v$ as a convex combination of exactly $N+2$ points, $v_1$, $v_2$, ..., and $v_{N+2}$. The differences between the $N+2$ points span a space of no more than $N$ dimensions, so $N$ differences suffice to span this space. That allows one of $v_2 - v_1$, $v_3 - v_1$, ...,and $v_{N+2} - v_1$ to be expressed in terms of the rest, or, equivalently, allows $0$ to be a linear combination of these differences. Add this alternative way of representing $0$ to $v$, which leaves the sum of the coefficients untouched (because each difference has $0$ as its sum of coefficients). In fact, the same is true for adding any multiple of this representation of $0$ to $v$. Some multiples will keep all the coefficients nonnegative; some will not. Small multiples will keep all the coefficients nonnegative (unless one was zero to begin with, but then that term can just be removed). Find the edges of region (of multiples of the zero) that keep all the coefficients nonnegative. On either edge of the region, one or more coefficients is zero, and the rest are nonnegative. Now remove whichever terms have the zero coefficient, and the result is a convex combination of $N+1$ points or fewer.
\begin{proof}
Let $v = \sum \limits _{s = 1} ^{N+2} {c_s v_s}$, where $\sum \limits _{s = 1} ^{N+2} {c_s} = 1$ and $0 \leq c_s \leq 1$ for all $s \in \left \lbrace 1, 2, ..., N+2 \right \rbrace$ (in other words, $v$ is a convex combination of the $v_s$). If $v_1$ is not the zero vector, then subtract $v_1$ from all the $v_s$ (including $v_1$ itself) and also from $v$, and if $v$ was a convex combination of the $v_s$, then it is still a convex combination of the $v_s$, with the same choices of $c_s$; conversely, if $v$ is now a convex combination of the $v_s$, it was a convex combination of the $v_s$, again with the same choices of $c_s$.

Let $S$ be the space spanned by the $v_s$; this may or may not be $\Re^N$. Regardless, it is possible to select $N$ vectors from the $v_s$ that span $S$. Add $v_1$ to this set (or if $v_1$ is already in, add another of the $v_s$ instead). Let $v_{N+2}$ be the one of the $v_s$ that is not in this set; if it is not $v_{N+2}$, then reassign the labels of the $v_s$ so that it is indeed $v_{N+2}$ outside the set (and so that $v_1$ remains 0).

Given that $v_1$, ..., $v_{N+1}$ span $S$, it follows that $v_{N+2} = \sum \limits _{s = 1} ^{N+1} {d_s v_s}$ for some real numbers $d_s$. Equivalently, if $d_{N+2}$ is set to be $-1$, $\sum \limits _{s = 1} ^{N+2} {d_s v_s} = 0$. As $v_1 = 0$, it can be further specified that $d_1 = -\sum \limits _{s = 2} ^{N+2} {d_s}$, so that the sum of all the $d_s$ is zero.

Thus, not only does $v = \sum \limits _{s = 1} ^{N+2} {c_s v_s}$, but also, $v = \sum \limits _{s = 1} ^{N+2} {c_s v_s} + b\left(\sum \limits _{s = 1} ^{N+2} {d_s v_s}\right) = v$ for any real number $b$. Simplifying the right side of this equation yields $v = \sum \limits _{s = 1} ^{N+2} {\left(c_s + b d_s\right) v_s}$, with $\sum \limits _{s = 1} ^{N+2} {\left(c_s + b d_s\right)}= \sum \limits _{s = 1} ^{N+2} {c_s}+b\sum \limits _{s = 1} ^{N+2} {d_s} = 1 + b(0) = 1$, so, provided that all coefficients of $\sum \limits _{s = 1} ^{N+2} {\left(c_s + b d_s\right) v_s}$ are nonnegative, this expression gives a different way to write $v$ as a convex combination of $v_1$, ..., $v_{N+2}$.

All that remains is to select a value of $b$ that leaves all coefficients nonnegative, but that sets at least one of them to zero. There is at least one negative $d_s$ (for instance, $d_{N+2}$). For all $s$ where $d_s$ is negative, let $t$ be the one for which $\frac{-c_s}{d_s}$ is lowest, and set $b = \frac{-c_t}{d_t}$. $b$ is a nonnegative number, because $c_t \geq 0$. Thus, for any $s$ where $d_s \geq 0$, $c_s + b d_s \geq 0$ is satisfied automatically. If $d_s < 0$, then $c_s + b d_s = c_s + \frac{-c_t}{d_t}d_s \geq c_s + \frac{-c_s}{d_s}d_s = 0$. Thus, all coefficients are nonnegative. The coefficient with index $t$ is zero, because $c_t + b d_t = c_t + \frac{-c_t}{d_t}d_t = 0$. Thus, this is a value of $b$ that makes $\sum \limits _{s = 1} ^{N+2} {\left(c_s + b d_s\right) v_s}$ a representation of $v$ as a convex combination of all the $v_1$, ..., $v_{N+2}$, with at least one coefficient equal to $0$. Remove this term, and the result is a convex combination of at most $N+1$ of $v_1$, ..., $v_{N+2}$.
\end{proof}
\begin{lem}
\label{convexfinitelemma}
Let $v_1$, ..., $v_{M}$ be points in $\Re^N$, with $M > N$. Then, if $v$ is a convex combination of these points, $v$ can also be expressed as a convex combination of at most $N+1$ of them, not necessarily the same ones for every choice of $v$.
\end{lem}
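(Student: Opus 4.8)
\emph{Proof sketch.} The plan is to bootstrap from Lemma \ref{convexminusonelemma}, which handles exactly the case $M = N+2$, by an induction on the number of points $M$. The cases $M \leq N+1$ are immediate, since then $v$ is already exhibited as a convex combination of at most $N+1$ points, and the case $M = N+2$ is precisely Lemma \ref{convexminusonelemma}. So I would assume the result for every collection of fewer than $M$ points (still in $\Re^N$, still more than $N$ of them) and prove it for $M \geq N+3$.

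Write $v = \sum_{s=1}^{M} c_s v_s$ with all $c_s \geq 0$ and $\sum_{s=1}^{M} c_s = 1$. If some coefficient $c_s$ vanishes, then $v$ is a convex combination of the remaining $M-1$ points, and the inductive hypothesis finishes the job. The substantive case is when every $c_s > 0$, and here the idea is to collapse a block of $N+2$ points into a combination of only $N+1$ of them. Concretely, set $C = \sum_{s=1}^{N+2} c_s$; since the $c_s$ are positive, $C > 0$, so $w = \sum_{s=1}^{N+2} \frac{c_s}{C} v_s$ is a genuine convex combination of $v_1, \ldots, v_{N+2}$. By Lemma \ref{convexminusonelemma}, $w$ equals a convex combination of at most $N+1$ of these points, so at least one of $v_1, \ldots, v_{N+2}$ drops out.

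Substituting $v = C w + \sum_{s=N+3}^{M} c_s v_s$ and expanding $w$, I obtain $v$ as a convex combination of the original points: the new coefficients are the products of $C$ with the nonnegative weights of $w$, together with the old $c_s$ for $s \geq N+3$, so they are nonnegative and sum to $C + \sum_{s=N+3}^{M} c_s = 1$. The number of points actually appearing is at most $(N+1) + (M - N - 2) = M - 1$. Applying the inductive hypothesis to this shorter list then yields a representation of $v$ using at most $N+1$ points, completing the induction.

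The only thing requiring care is the bookkeeping: verifying that after regrouping the weights remain nonnegative and sum to $1$, and that the count genuinely drops from $M$ to at most $M-1$ so that the induction terminates. Neither point is a real obstacle, since all the geometric content is already carried by Lemma \ref{convexminusonelemma}; this argument merely packages the ``one point at a time'' reduction into a clean induction on $M$.
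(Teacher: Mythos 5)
Your proposal is correct and follows essentially the same route as the paper: both arguments rescale the weights on a block of $N+2$ points, invoke Lemma \ref{convexminusonelemma} to eliminate one point, and repeat until at most $N+1$ remain — the paper phrases the repetition as an informal iteration, while you package it as an induction on $M$ (with a cleaner handling of the case where some $c_s=0$, which the paper addresses by reshuffling instead).
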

\emph{Proof sketch}. For a convex combination of more than $N+2$ points, consider just the first $N+2$ points, with rescaled coefficients, so that the rescaled coefficients sum to $1$. That is a convex combination of $N+2$ points, so it can be rewritten as a convex combination of no more than $N+1$ of those $N+2$ points by the previous lemma. Scale the $N+1$ coefficients back, so that their sum is the same as was the sum of the $N+2$ original coefficients, and replace the part of the convex combination corresponding to the first $N+2$ points with this new expression. That is still a representation of $v$ as a convex combination of points, but this time, it is a convex combination of only $M-1$ points. As long as there are $N+2$ or more points, this procedure can remove one, so repeat this procedure until there are only $N+1$ or fewer points.
\begin{proof}
Let $v = \sum \limits _{s = 1} ^{M} {c_s v_s}$, where $\sum \limits _{s = 1} ^{M} {c_s} = 1$ and $0 \leq c_s \leq 1$ for all $s \in \left \lbrace 1, 2, ..., M \right \rbrace$ (in other words, $v$ is a convex combination of the $v_s$). Also, let at least one of $c_1$, $c_2$, ..., $c_{N+2}$ be nonzero. (Reshuffle the points if necessary to make it so. There is at least one point with a nonzero coefficient.) Now:
\begin{align*}
v &= \sum \limits _{s = 1} ^{M} {c_s v_s} \\
&= \left( \sum \limits _{s = 1} ^{N+2} {c_s v_s} \right) + \left( \sum \limits _{s = N+3} ^{M} {c_s v_s} \right) \\
&= \left(\frac{\sum \limits _{u = 1} ^{N+2} {c_u}}{\sum \limits _{u = 1} ^{N+2} {c_u}}\right)\left( \sum \limits _{s = 1} ^{N+2} {c_s v_s} \right) + \left( \sum \limits _{s = N+3} ^{M} {c_s v_s} \right) \\
&= \left(\sum \limits _{u = 1} ^{N+2} {c_u}\right)\left( \sum \limits _{s = 1} ^{N+2} {\left(\frac{c_s}{\sum \limits _{u = 1} ^{N+2} {c_u}}\right) v_s} \right) + \left( \sum \limits _{s = N+3} ^{M} {c_s v_s} \right)
\end{align*}
As $\sum \limits _{s = 1} ^{N+2} {\left(\frac{c_s}{\sum \limits _{u = 1} ^{N+2} {c_u}}\right) v_s}$ is a convex combination of $v_1$, ..., $v_{N+2}$ (as its coefficients are nonnegative and sum to 1), by Lemma \ref{convexminusonelemma}, $\sum \limits _{s = 1} ^{N+2} {\left(\frac{c_s}{\sum \limits _{u = 1} ^{N+2} {c_u}}\right) v_s}$ is a convex combination of at most $N+1$ vectors chosen from $v_1$, ..., $v_{N+2}$. Let $w_1$, ..., $w_{N+1}$ be these $N+1$ vectors. Thus, for some nonnegative numbers $d_t$ summing to 1, $\sum \limits _{s = 1} ^{N+2} {\left(\frac{c_s}{\sum \limits _{u = 1} ^{N+2} {c_u}}\right) v_s} = \sum \limits _{t = 1} ^{N+1} {d_t w_t}$. Therefore:
\begin{align*}
& \left(\sum \limits _{u = 1} ^{N+2} {c_u}\right)\left( \sum \limits _{s = 1} ^{N+2} {\left(\frac{c_s}{\sum \limits _{u = 1} ^{N+2} {c_u}}\right) v_s} \right) + \left( \sum \limits _{s = N+3} ^{M} {c_s v_s} \right) \\
&= \left(\sum \limits _{u = 1} ^{N+2} {c_u}\right)\left(\sum \limits _{t = 1} ^{N+1} {d_t w_t} \right) + \left( \sum \limits _{s = N+3} ^{M} {c_s v_s} \right) \\
&= \left(\sum \limits _{t = 1} ^{N+1} {\left(d_t\sum \limits _{u = 1} ^{N+2} {c_u}\right) w_t} \right) + \left( \sum \limits _{s = N+3} ^{M} {c_s v_s} \right) \\
\end{align*}
All the coefficients sum to $1$, as in the first sum, the coefficients sum to $\sum \limits _{t = 1} ^{N+1} {\left(d_t\sum \limits _{u = 1} ^{N+2} {c_u}\right)} = \sum \limits _{u = 1} ^{N+2} {c_u}\sum \limits _{t = 1} ^{N+1} {d_t} = \sum \limits _{u = 1} ^{N+2} {c_u}= \sum \limits _{s = 1} ^{N+2} {c_s}$, and in the second term, the coefficients sum to $\sum \limits _{s = N+3} ^{M} {c_s v_s}$, for an overall sum of $\sum \limits _{s = 1} ^{M} {c_s v_s}$, which is $1$. Also, all the coefficients are nonnegative (because $\sum \limits _{s = 1} ^{N+2} {c_s}$ is nonnegative). Thus, this is a convex combination of $w_1$, ..., $w_{N+1}$, $v_{N+3}$, ..., $v_M$, and this list includes only $M - 1$ vectors.
This procedure can be repeated as many times as needed, to bring the number of vectors down to $N+2$ (or fewer, in which case the lemma statement is true); at that point, Lemma \ref{convexminusonelemma} (or yet another run of this procedure, with the sum from $N+3$ to $M$ having no terms and therefore being zero) can be applied for a final reduction to at most $N+1$ points.
\end{proof}
\begin{lem}
\label{convexlemma}
Let $h$ be a continuous function from $\left[-\frac{n}{2}, \frac{n}{2}\right]$ to $\Re^N$. Then, if $v = \int \limits _{\frac{n}{2}} ^{\frac{n}{2}} {f\left(x\right)h\left(x\right) dx}$ with $\int \limits _{\frac{n}{2}} ^{\frac{n}{2}} {f\left(x\right) dx} = 1$ and $f\left(x\right) \geq 0$ for all $x$ in $\left[-\frac{n}{2}, \frac{n}{2}\right]$ (in other words, $v$ is in the convex hull of the path $h$), then $v$ is a convex combination of at most $N+1$ values of the range of $h$ (in other words, of at most $N+1$ points on the path $h$).
\end{lem}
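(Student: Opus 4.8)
\emph{Proof proposal.} The plan is to treat this as the continuous analogue of Lemma~\ref{convexfinitelemma}: the integral expresses $v$ as a ``continuous convex combination'' of the points $h(x)$ on the curve, and I would discretize it into ordinary finite convex combinations, reduce each of those to at most $N+1$ points using the finite lemma already proved, and then pass to a limit. The structural facts I would rely on are that the domain $\left[-\frac{n}{2},\frac{n}{2}\right]$ is compact, that $h$ is continuous (so its image is compact and $h$ commutes with limits), and that the set of admissible weight vectors is the closed simplex.

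First I would approximate the integral by Riemann sums. Choosing a sequence of partitions of $\left[-\frac{n}{2},\frac{n}{2}\right]$ with mesh tending to $0$ and sample points $x_i^{(k)}$, set $v_k=\sum_i f\left(x_i^{(k)}\right)h\left(x_i^{(k)}\right)\Delta x_i^{(k)}$. Componentwise, each entry of $v_k$ is a Riemann sum for the corresponding entry of $v=\int f\left(x\right)h\left(x\right)dx$, so $v_k\to v$. The weights $w_i^{(k)}=f\left(x_i^{(k)}\right)\Delta x_i^{(k)}$ are nonnegative and satisfy $\sum_i w_i^{(k)}\to\int f=1$; normalizing them as $\hat w_i^{(k)}=w_i^{(k)}/\sum_j w_j^{(k)}$ yields, for each $k$, a genuine convex combination $\hat v_k=\sum_i \hat w_i^{(k)}h\left(x_i^{(k)}\right)$ of finitely many points of the range of $h$, with $\hat v_k\to v$ as well.

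Next I would apply Lemma~\ref{convexfinitelemma} to each $\hat v_k$: since it is a convex combination of finitely many points in $\Re^N$, it equals a convex combination of at most $N+1$ of them. Padding with zero-weight terms if fewer are returned, I may write $\hat v_k=\sum_{j=1}^{N+1}\lambda_j^{(k)}h\left(t_j^{(k)}\right)$ with $\lambda_j^{(k)}\ge 0$, $\sum_{j=1}^{N+1}\lambda_j^{(k)}=1$, and $t_j^{(k)}\in\left[-\frac{n}{2},\frac{n}{2}\right]$. The vectors $\left(\lambda_1^{(k)},\dots,\lambda_{N+1}^{(k)},t_1^{(k)},\dots,t_{N+1}^{(k)}\right)$ lie in the compact set $[0,1]^{N+1}\times\left[-\frac{n}{2},\frac{n}{2}\right]^{N+1}$, so by Bolzano--Weierstrass some subsequence converges, say $\lambda_j^{(k)}\to\lambda_j$ and $t_j^{(k)}\to t_j$. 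Then $\lambda_j\ge 0$ and $\sum_j\lambda_j=1$, and by continuity of $h$ we have $h\left(t_j^{(k)}\right)\to h\left(t_j\right)$, giving $v=\lim_k\hat v_k=\sum_{j=1}^{N+1}\lambda_j h\left(t_j\right)$, a convex combination of at most $N+1$ points on the curve.

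The main obstacle is the limiting step rather than any single estimate: I must take care that the finite representations supplied by Lemma~\ref{convexfinitelemma} may use different points and different weights for each $k$, so the reduction to $N+1$ points survives the limit only because both the weight simplex and the parameter interval are compact and $h$ is continuous; fixing the count at exactly $N+1$ by padding is what lets me extract a single convergent subsequence and keep the bound. A minor technical point is that the Riemann sums converge to $v$ only under an integrability hypothesis on $f$; since $f$ is a probability density this is harmless, and in the extreme case where $f$ already has finite support the statement reduces directly to Lemma~\ref{convexfinitelemma}.
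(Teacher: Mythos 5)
Your overall strategy is exactly the paper's: discretize the integral into a finite convex combination of points on the curve, invoke Lemma \ref{convexfinitelemma} to cut each discretization down to at most $N+1$ points (padding with zero weights), and then use compactness of $\left[0,1\right]^{N+1}\times\left[-\frac{n}{2},\frac{n}{2}\right]^{N+1}$ together with continuity of $h$ to extract a convergent subsequence and pass to the limit. The limiting step, which you correctly identify as the crux, is handled in the paper in the same way (componentwise extraction of convergent sub-subsequences, then continuity of each $h_t$).

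The one genuine defect is in your discretization. You form Riemann sums $\sum_i f\left(x_i^{(k)}\right)h\left(x_i^{(k)}\right)\Delta x_i^{(k)}$, which presupposes that $f$ (and hence $fh$) is Riemann integrable, and you dismiss this with ``since $f$ is a probability density this is harmless.'' It is not: a probability density need only be Lebesgue integrable, and it may be unbounded (for instance proportional to $\left|x\right|^{-1/2}$ near $0$) or discontinuous on a set of positive measure, in which case your $v_k$ need not converge to $v$; for unbounded $f$ the tagged sums can even be made to diverge by bad choices of sample points. Your normalization $\hat{w}_i^{(k)}$ fixes the sum-to-one property but does nothing for this convergence problem. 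The paper sidesteps the issue entirely by never sampling $f$: it takes as weights the exact masses $w_s=\int_{I_s}{f\left(x\right)dx}$ over each subinterval $I_s$ and samples only $h$ at the left endpoints, so the weights sum to $1$ with no normalization, and the convergence of $\sum_s w_s\,h\left(\text{left endpoint of } I_s\right)$ to $v$ follows from uniform continuity of $h$ alone, by squeezing each component between the corresponding max-sums and min-sums. Replacing your tagged weights $f\left(x_i^{(k)}\right)\Delta x_i^{(k)}$ by $\int_{I_i}{f\left(x\right)dx}$ repairs the argument with no other changes; with that substitution your proof coincides with the paper's.
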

\emph{Proof sketch}. From the previous lemma, it is possible to reduce a convex combination of finitely many points in an $N$-dimensional space to a convex combination of only $N+1$ of those points. For a point $v$ that is the convex combination of infinitely-many points on the path, divide the integral into $M$ pieces. In each piece, keep $f\left(x\right)$ exactly as it was, but approximate $h\left(x\right)$ by $h$ at the left (lowest $x$) endpoint. As $M$ gets big, these approximations approach $v$. Each of the approximations is a convex combination of finitely many points, so the previous lemma shows that each approximation is a convex combination of only $N+1$ points. Write it this way for each approximation. Then, at least some of the approximations form a convergent sequence, in that the first points of the convex combinations converge, as do the second points, and so on until the $\left(N+1\right)$st points, and so do the corresponding coefficients. For each point or coefficient, take the limit of the corresponding point or coefficient in the convergent subsequence, and this serves as a representation of $v$ as a convex combination of $N+1$ points.

\begin{proof}
Divide $\left[-\frac{n}{2}, \frac{n}{2}\right]$ into $M$ equal-size intervals $\left[-\frac{n}{2}, -\frac{n}{2} + \frac{n}{M}\right]$, $\left[-\frac{n}{2} + \frac{n}{M}, -\frac{n}{2} + 2\frac{n}{M}\right]$, ..., $\left[-\frac{n}{2} + \left(M-1\right)\frac{n}{M}, \frac{n}{2}\right]$. Then,
\begin{align*}
v &= \int \limits _{\frac{n}{2}} ^{\frac{n}{2}} {f\left(x\right)h\left(x\right) dx} \\
&= \sum \limits _{s = 0} ^{M-1} {\int \limits _{-\frac{n}{2} + s\frac{n}{M}} ^{-\frac{n}{2} + \left(s+1\right)\frac{n}{M}} {f\left(x\right)h\left(x\right) dx}} \\
&= \sum \limits _{s = 0} ^{M-1} {\int \limits _{-\frac{n}{2} + s\frac{n}{M}} ^{-\frac{n}{2} + \left(s+1\right)\frac{n}{M}} {f\left(x\right)
\left(
\begin{matrix}
h_1\left(x\right) \\
... \\
h_N\left(x\right)
\end{matrix}
\right)
dx}} \\
& \text{(with $h_1\left(x\right)$, ..., $h_N\left(x\right)$ being the components of $h\left(x\right)$)} \\
&= \sum \limits _{s = 0} ^{M-1} {\int \limits _{-\frac{n}{2} + s\frac{n}{M}} ^{-\frac{n}{2} + \left(s+1\right)\frac{n}{M}} {
\left(
\begin{matrix}
f\left(x\right)h_1\left(x\right) \\
... \\
f\left(x\right)h_N\left(x\right)
\end{matrix}
\right)
dx}} \\
&= \sum \limits _{s = 0} ^{M-1} {
\left(
\begin{matrix}
\int \limits _{-\frac{n}{2} + s\frac{n}{M}} ^{-\frac{n}{2} + \left(s+1\right)\frac{n}{M}}{f\left(x\right)h_1\left(x\right) dx} \\
... \\
\int \limits _{-\frac{n}{2} + s\frac{n}{M}} ^{-\frac{n}{2} + \left(s+1\right)\frac{n}{M}}{f\left(x\right)h_N\left(x\right) dx}
\end{matrix}
\right)
} \\
&= 
\left(
\begin{matrix}
\sum \limits _{s = 0} ^{M-1} {\int \limits _{-\frac{n}{2} + s\frac{n}{M}} ^{-\frac{n}{2} + \left(s+1\right)\frac{n}{M}}{f\left(x\right)h_1\left(x\right) dx}} \\
... \\
\sum \limits _{s = 0} ^{M-1} {\int \limits _{-\frac{n}{2} + s\frac{n}{M}} ^{-\frac{n}{2} + \left(s+1\right)\frac{n}{M}}{f\left(x\right)h_N\left(x\right) dx}}
\end{matrix}
\right)
 \\
\end{align*}
Let $v_t = \sum \limits _{s = 0} ^{M-1} {\int \limits _{-\frac{n}{2} + s\frac{n}{M}} ^{-\frac{n}{2} + \left(s+1\right)\frac{n}{M}}{f\left(x\right)h_t\left(x\right) dx}}$. In other words, $v_t$ is the $t$-th component of $v$. Then, because $f\left(x\right)$ is nonnegative:
\begin{align*}
v_t &= \sum \limits _{s = 0} ^{M-1} {\int \limits _{-\frac{n}{2} + s\frac{n}{M}} ^{-\frac{n}{2} + \left(s+1\right)\frac{n}{M}}{f\left(x\right)h_t\left(x\right) dx}} \\
&\leq \sum \limits _{s = 0} ^{M-1} {\int \limits _{-\frac{n}{2} + s\frac{n}{M}} ^{-\frac{n}{2} + \left(s+1\right)\frac{n}{M}}{f\left(x\right)\left(\max \limits _{-\frac{n}{2} + s\frac{n}{M} \leq b \leq-\frac{n}{2} + \left(s+1\right)\frac{n}{M}} {h_t\left(b\right)}\right) dx}} \\
\end{align*}
In other words, if the $h_t\left(x\right)$ in each interval get increased to the maximum $h_t\left(x\right)$  in that interval (which exists because $h_t$ is continuous), then $v_t$ does not decrease. A similar statement gives a lower bound to $v_t$:
\begin{align*}
v_t &= \sum \limits _{s = 0} ^{M-1} {\int \limits _{-\frac{n}{2} + s\frac{n}{M}} ^{-\frac{n}{2} + \left(s+1\right)\frac{n}{M}}{f\left(x\right)h_t\left(x\right) dx}} \\
&\geq \sum \limits _{s = 0} ^{M-1} {\int \limits _{-\frac{n}{2} + s\frac{n}{M}} ^{-\frac{n}{2} + \left(s+1\right)\frac{n}{M}}{f\left(x\right)\left(\min \limits _{-\frac{n}{2} + s\frac{n}{M} \leq b \leq-\frac{n}{2} + \left(s+1\right)\frac{n}{M}} {h_t\left(b\right)}\right) dx}} \\
\end{align*}
Let $b_{s, t, max}$ satisfy $-\frac{n}{2} + s\frac{n}{M} \leq b \leq-\frac{n}{2} + \left(s+1\right)\frac{n}{M}$, and be such that $h_t\left(b_{s, t, max}\right) = \max \limits _{-\frac{n}{2} + s\frac{n}{M} \leq b \leq-\frac{n}{2} + \left(s+1\right)\frac{n}{M}} {\left(h_t\left(b\right)\right)}$. In other words, in the maximum expression, $b_{s, t, max}$ is the $b$-value achieving that maximum, which exists because $h_t$ is continuous and the maximum is taken over a closed interval. Similarly, let $b_{s, t, min}$ satisfy $-\frac{n}{2} + s\frac{n}{M} \leq b \leq-\frac{n}{2} + \left(s+1\right)\frac{n}{M}$, and be such that $h_t\left(b_{s, t, min}\right) = \min \limits _{-\frac{n}{2} + s\frac{n}{M} \leq b \leq-\frac{n}{2} + \left(s+1\right)\frac{n}{M}} {\left(h_t\left(b\right)\right)}$. Thus:
\[
v_t \leq \sum \limits _{s = 0} ^{M-1} {\int \limits _{-\frac{n}{2} + s\frac{n}{M}} ^{-\frac{n}{2} + \left(s+1\right)\frac{n}{M}}{f\left(x\right) h_t\left(b_{s, t, max}\right) dx}}
\]
and
\[
v_t \geq \sum \limits _{s = 0} ^{M-1} {\int \limits _{-\frac{n}{2} + s\frac{n}{M}} ^{-\frac{n}{2} + \left(s+1\right)\frac{n}{M}}{f\left(x\right) h_t\left(b_{s, t, min}\right) dx}}
\]
The difference between the bounds is:
\begin{align*}
&\sum \limits _{s = 0} ^{M-1} {\int \limits _{-\frac{n}{2} + s\frac{n}{M}} ^{-\frac{n}{2} + \left(s+1\right)\frac{n}{M}}{f\left(x\right) h_t\left(b_{s, t, max}\right) dx}} - \sum \limits _{s = 0} ^{M-1} {\int \limits _{-\frac{n}{2} + s\frac{n}{M}} ^{-\frac{n}{2} + \left(s+1\right)\frac{n}{M}}{f\left(x\right) h_t\left(b_{s, t, min}\right) dx}} \\
&= \sum \limits _{s = 0} ^{M-1} {\left(\int \limits _{-\frac{n}{2} + s\frac{n}{M}} ^{-\frac{n}{2} + \left(s+1\right)\frac{n}{M}}{f\left(x\right) h_t\left(b_{s, t, max}\right) dx}-\int \limits _{-\frac{n}{2} + s\frac{n}{M}} ^{-\frac{n}{2} + \left(s+1\right)\frac{n}{M}}{f\left(x\right) h_t\left(b_{s, t, min}\right) dx}\right)} \\
&= \sum \limits _{s = 0} ^{M-1} {\int \limits _{-\frac{n}{2} + s\frac{n}{M}} ^{-\frac{n}{2} + \left(s+1\right)\frac{n}{M}}{\left(f\left(x\right) h_t\left(b_{s, t, max}\right)-f\left(x\right) h_t\left(b_{s, t, min}\right)\right) dx}} \\
&= \sum \limits _{s = 0} ^{M-1} {\int \limits _{-\frac{n}{2} + s\frac{n}{M}} ^{-\frac{n}{2} + \left(s+1\right)\frac{n}{M}}{f\left(x\right)\left( h_t\left(b_{s, t, max}\right)- h_t\left(b_{s, t, min}\right)\right) dx}} \\
\end{align*}
Since $h_t$ is a continuous function on a closed interval (which is $\left[-\frac{n}{2},\frac{n}{2}\right]$), it is uniformly continuous. Therefore, for any $\epsilon > 0$, there exists a $\delta > 0$, such that for all $x_1$ and $x_2$ in $\left[-\frac{n}{2},\frac{n}{2}\right]$, if $\left|x_1-x_2\right| < \delta$, then $\left|h_t\left(x_1\right)-h_t\left(x_2\right)\right| < \epsilon$. For each $\epsilon$, select a $\delta$ where this is so, and choose any $M > \frac{n}{\delta}$. For any such $M$, $-\frac{n}{2} + s\frac{n}{M}$ and $-\frac{n}{2} + \left(s+1\right)\frac{n}{M}$ are only $\frac{n}{M}$ apart, which is less than $\frac{n}{\left(\frac{n}{\delta}\right)}$, or, equivalently, less than $\delta$. That means that $b_{s, t, min}$ and $b_{s, t, max}$ are less than $\delta$ apart, as they are both in $\left[-\frac{n}{2} + s\frac{n}{M}, -\frac{n}{2} + \left(s+1\right)\frac{n}{M}\right]$ and the endpoints of this interval are less than $\delta$ apart. Thus, $\left|b_{s, t, min} - b_{s, t, max}\right| < \delta$, and thus, $\left|h_t\left(b_{s, t, min}\right) - h_t\left(b_{s, t, max}\right)\right| < \epsilon$. Thus, for any $\epsilon$, it is possible, just by setting $M$ to be big enough, to set:
\begin{align*}
& \left| \sum \limits _{s = 0} ^{M-1} {\int \limits _{-\frac{n}{2} + s\frac{n}{M}} ^{-\frac{n}{2} + \left(s+1\right)\frac{n}{M}}{f\left(x\right)\left( h_t\left(b_{s, t, max}\right)- h_t\left(b_{s, t, min}\right)\right) dx}} \right| \\
& \leq  \sum \limits _{s = 0} ^{M-1} {\left|\int \limits _{-\frac{n}{2} + s\frac{n}{M}} ^{-\frac{n}{2} + \left(s+1\right)\frac{n}{M}}{f\left(x\right)\left( h_t\left(b_{s, t, max}\right)- h_t\left(b_{s, t, min}\right)\right) dx}\right|} \\
& \leq  \sum \limits _{s = 0} ^{M-1} {\int \limits _{-\frac{n}{2} + s\frac{n}{M}} ^{-\frac{n}{2} + \left(s+1\right)\frac{n}{M}}{\left|f
\left(x\right)\left( h_t\left(b_{s, t, max}\right)- h_t\left(b_{s, t, min}\right)\right) \right|dx}} \\
& \leq  \sum \limits _{s = 0} ^{M-1} {\int \limits _{-\frac{n}{2} + s\frac{n}{M}} ^{-\frac{n}{2} + \left(s+1\right)\frac{n}{M}}{f\left(x\right)\left| h_t\left(b_{s, t, max}\right)- h_t\left(b_{s, t, min}\right)\right|dx}} \\
& \leq  \sum \limits _{s = 0} ^{M-1} {\int \limits _{-\frac{n}{2} + s\frac{n}{M}} ^{-\frac{n}{2} + \left(s+1\right)\frac{n}{M}}{f\left(x\right)\left(\epsilon\right)dx}} \\
& \leq \epsilon \sum \limits _{s = 0} ^{M-1} {\int \limits _{-\frac{n}{2} + s\frac{n}{M}} ^{-\frac{n}{2} + \left(s+1\right)\frac{n}{M}}{f\left(x\right)dx}} \\
& \leq \epsilon \int \limits _{-\frac{n}{2}} ^{\frac{n}{2}}{f\left(x\right)dx} \\
& \leq \epsilon \\
\end{align*}
Thus,
\[
\lim \limits _{M \rightarrow \infty} \sum \limits _{s = 0} ^{M-1} {\int \limits _{-\frac{n}{2} + s\frac{n}{M}} ^{-\frac{n}{2} + \left(s+1\right)\frac{n}{M}}{f\left(x\right)\left( h_t\left(b_{s, t, max}\right)- h_t\left(b_{s, t, min}\right)\right) dx}} = 0
\]
which means that the bounds on $v_t$ approach each other. Therefore, both bounds, as well as anything always at or between them, must converge to $v_t$.

One thing guaranteed to be at or between the bounds is the left sum for 
\[
v_t = \sum \limits _{s = 0} ^{M-1} {\int \limits _{-\frac{n}{2} + s\frac{n}{M}} ^{-\frac{n}{2} + \left(s+1\right)\frac{n}{M}}{f\left(x\right)h_t\left(x\right) dx}}
\] which is
\[
\sum \limits _{s = 0} ^{M-1} {\left(\left(h_t\left(-\frac{n}{2} + s\frac{n}{M}\right)\right)\int \limits _{-\frac{n}{2} + s\frac{n}{M}} ^{-\frac{n}{2} + \left(s+1\right)\frac{n}{M}}{f\left(x\right) dx}\right)}
\]
Recombining these left sums gives:
\begin{align*}
&
\left(
\begin{matrix}
\sum \limits _{s = 0} ^{M-1} {\left(\left(h_1\left(-\frac{n}{2} + s\frac{n}{M}\right)\right)\int \limits _{-\frac{n}{2} + s\frac{n}{M}} ^{-\frac{n}{2} + \left(s+1\right)\frac{n}{M}}{f\left(x\right) dx}\right)} \\
... \\
\sum \limits _{s = 0} ^{M-1} {\left(\left(h_N\left(-\frac{n}{2} + s\frac{n}{M}\right)\right)\int \limits _{-\frac{n}{2} + s\frac{n}{M}} ^{-\frac{n}{2} + \left(s+1\right)\frac{n}{M}}{f\left(x\right) dx}\right)} \\
\end{matrix}
\right) \\
& = \sum \limits _{s = 0} ^{M-1}
{
\left(
\begin{matrix}
\left(h_1\left(-\frac{n}{2} + s\frac{n}{M}\right)\right)\int \limits _{-\frac{n}{2} + s\frac{n}{M}} ^{-\frac{n}{2} + \left(s+1\right)\frac{n}{M}}{f\left(x\right) dx} \\
... \\
\left(h_N\left(-\frac{n}{2} + s\frac{n}{M}\right)\right)\int \limits _{-\frac{n}{2} + s\frac{n}{M}} ^{-\frac{n}{2} + \left(s+1\right)\frac{n}{M}}{f\left(x\right) dx} \\
\end{matrix}
\right)
} \\
& = \sum \limits _{s = 0} ^{M-1}
{
\left(
\int \limits _{-\frac{n}{2} + s\frac{n}{M}} ^{-\frac{n}{2} + \left(s+1\right)\frac{n}{M}}{f\left(x\right) dx}
\left(
\begin{matrix}
h_1\left(-\frac{n}{2} + s\frac{n}{M}\right) \\
... \\
h_N\left(-\frac{n}{2} + s\frac{n}{M}\right) \\
\end{matrix}
\right)
\right)
} \\
& = \sum \limits _{s = 0} ^{M-1}
{
\left(
\left(\int \limits _{-\frac{n}{2} + s\frac{n}{M}} ^{-\frac{n}{2} + \left(s+1\right)\frac{n}{M}}{f\left(x\right) dx}\right) h\left(-\frac{n}{2} + s\frac{n}{M}\right) \right)}
\end{align*}
This is actually a convex combination of the $h\left(-\frac{n}{2} + s\frac{n}{M}\right)$, as the coefficients of the $h\left(-\frac{n}{2} + s\frac{n}{M}\right)$ are $\int \limits _{-\frac{n}{2} + s\frac{n}{M}} ^{-\frac{n}{2} + \left(s+1\right)\frac{n}{M}}{f\left(x\right) dx}$, which are all nonnegative, and which sum to:
\begin{align*}
& \sum _{s = 0} ^{M-1} {\int \limits _{-\frac{n}{2} + s\frac{n}{M}} ^{-\frac{n}{2} + \left(s+1\right)\frac{n}{M}}{f\left(x\right) dx}} \\
& = \int \limits _{-\frac{n}{2}} ^{\frac{n}{2}}{f\left(x\right) dx} \\
& = 1
\end{align*}
As such, by Lemma \ref{convexfinitelemma}, the left sum for $v$ (which is 
\[
\sum \limits _{s = 0} ^{M-1}{\left(\left(\int \limits _{-\frac{n}{2} + s\frac{n}{M}} ^{-\frac{n}{2} + \left(s+1\right)\frac{n}{M}}{f\left(x\right) dx}\right) h\left(-\frac{n}{2} + s\frac{n}{M}\right) \right)}
\]
) is a convex combination of at most $N+1$ of the $h\left(-\frac{n}{2} + s\frac{n}{M}\right)$. If it has fewer than $N+1$ points, add more points to it with coefficient $0$ to make it a convex combination of exactly $N+1$ points.Equivalently, for some nonnegative $c_{q, M}$ (where $q$ goes from $1$ to $N+1$) that sum to $1$, and where the $x_{q, M}$ are selected from the $-\frac{n}{2}+s\frac{n}{M}$ (for the same $M$),
\[
\sum \limits _{s = 0} ^{M-1}{\left(\left(\int \limits _{-\frac{n}{2} + s\frac{n}{M}} ^{-\frac{n}{2} + \left(s+1\right)\frac{n}{M}}{f\left(x\right) dx}\right) h\left(-\frac{n}{2} + s\frac{n}{M}\right) \right)} = \sum \limits _{q = 1} ^{N+1}{\left(c_{q, M} h\left(x_{q, M}\right) \right)}
\] 
Also, as $M$ gets large, the left sum for $v$ approaches $v$, because each of its components was already shown to approach the corresponding component of $v$. Therefore,
\[
\lim \limits _{M \rightarrow \infty} {\sum \limits _{q = 1} ^{N+1}{\left(c_{q, M} h\left(x_{q, M}\right) \right)}} = v
\]
The sequence of vectors
\[
\left \lbrace
\left(
\begin{matrix}
c_{1, M} \\
... \\
c_{N+1, M} \\
x_{1, M} \\
... \\
x_{N+1, M}
\end{matrix}
\right)
\right \rbrace
\]
has a convergent (in all components) subsequence. The reason for that is that there is a subsequence where the first components converge (the first components being bounded), and from there, there is a sub-subsequence where the second components converge, and this process can be repeated for each component. Let $\left \lbrace M_p \right \rbrace _{p = 1} ^{\infty}$ be the choices of $M$ in this convergent subsequence. That means that all the $\lim \limits _{p \rightarrow \infty} c_{q, M_p}$ and all the $\lim \limits _{p \rightarrow \infty} x_{q, M_p}$ exist. As such, the expression,
\[
\sum \limits _{q = 1} ^{N+1} {\left(\lim \limits _{p \rightarrow \infty} {c_{q, M_p}}
\left(
\begin{matrix}
h_1\left(\lim \limits _{p \rightarrow \infty} {x_{q, M_p}}\right) \\
... \\
h_N\left(\lim \limits _{p \rightarrow \infty} {x_{q, M_p}}\right)
\end{matrix}
\right)
\right)}
\]
has no non-existent limits in it, so it can be simplified using the usual rules for limits to
\[
\lim \limits _{p \rightarrow \infty} {
\sum \limits _{q = 1} ^{N+1} {\left(c_{q, M_p}
\left(
\begin{matrix}
h_1\left(x_{q, M_p}\right)\\
... \\
h_N\left(x_{q, M_p}\right)
\end{matrix}
\right)
\right)}}
\]
which is
\[
\lim \limits _{p \rightarrow \infty} {\sum \limits _{q = 1} ^{N+1} {\left(c_{q, M_p}h\left(x_{q, M_p}\right)\right)}}
\]
Given that it is already known that
\[
\lim \limits _{M \rightarrow \infty} {\sum \limits _{q = 1} ^{N+1}{\left(c_{q, M} h\left(x_{q, M}\right) \right)}} = v
\]
it follows that
\[
\lim \limits _{p \rightarrow \infty} {\sum \limits _{q = 1} ^{N+1} {\left(c_{q, M_p}h\left(x_{q, M_p}\right)\right)}} = v
\]
This means that
\[
v =
\sum \limits _{q = 1} ^{N+1} {\left(\lim \limits _{p \rightarrow \infty} {c_{q, M_p}}
\left(
\begin{matrix}
h_1\left(\lim \limits _{p \rightarrow \infty} {x_{q, M_p}}\right) \\
... \\
h_N\left(\lim \limits _{p \rightarrow \infty} {x_{q, M_p}}\right)
\end{matrix}
\right)
\right)}
\]
As long as each (as $q$ varies)
\[
\left(
\begin{matrix}
h_1\left(\lim \limits _{p \rightarrow \infty} {x_{q, M_p}}\right) \\
... \\
h_N\left(\lim \limits _{p \rightarrow \infty} {x_{q, M_p}}\right) 
\end{matrix}
\right)
\]
is a point on the path $h$, this is a representation of $v$ as a convex combination of at most $N+1$ points on the path $h$, as the $c_{q, M_p}$ sum to 1, and thus so do their limits as $p$ approaches infinity, and as the $c_{q, M_p}$ are all nonnegative, and thus so are their limits as $p$ approaches infinity.

Indeed, since all the $h_t$ are continuous, it follows that
\[
\left(
\begin{matrix}
h_1\left(\lim \limits _{p \rightarrow \infty} {x_{q, M_p}}\right) \\
... \\
h_N\left(\lim \limits _{p \rightarrow \infty} {x_{q, M_p}}\right)
\end{matrix}
\right) = h\left(\lim \limits _{p \rightarrow \infty} {\left(x_{q, M_p}\right)}\right)
\]
The limit $\lim \limits _{p \rightarrow \infty} {\left(x_{q, M_p}\right)}$ exists (since the $M_p$ were selected to make it exist) and is in $\left[-\frac{n}{2}, \frac{n}{2}\right]$ (since all the $\left(x_{q, M_p}\right)$ are in that interval), so 
$
\left(
\begin{matrix}
h_1\left(\lim \limits _{p \rightarrow \infty} {x_{q, M_p}}\right) \\
... \\
h_N\left(\lim \limits _{p \rightarrow \infty} {x_{q, M_p}}\right)
\end{matrix}
\right)
$
is indeed on the path $h$. Thus, $v$ indeed is a convex combination of at most $N+1$ points on the path $h$.

\end{proof}

\begin{lem}
\label{neglemma}
For any function $h$, let $h_{-}$ be the function defined as $h_{-}\left(z\right) = h\left(-z\right)$. Then, $\left(Eg\right)_{-} = E\left(g_{-}\right)$. In other words, if $g$ were graphed with $y$ on the horizontal axis, and if $Eg$ were graphed with $x$ on the horizontal axis, then, a reflection of $g$ across the vertical axis causes $Eg$ to likewise reflect across the vertical axis.
\end{lem}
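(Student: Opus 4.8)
The plan is to unwind both sides directly from the definitions and to exploit a term-by-term symmetry of $R$. Recall from Section~\ref{payoffdotsec} that $Eg = \int_{-\frac{n}{2}}^{\frac{n}{2}} R\left(x, y\right) g\left(y\right)\,dy$, and from Lemma~\ref{shiftlemma} that $R\left(x, y\right) = r\left(x - y + \frac{a}{2}\right) + r\left(-x + y + \frac{a}{2}\right)$. The first thing I would record is the key symmetry $R\left(x, -y\right) = R\left(-x, y\right)$. Substituting $-y$ for $y$ in the formula for $R$ turns the two summands into $r\left(x + y + \frac{a}{2}\right)$ and $r\left(-x - y + \frac{a}{2}\right)$, while substituting $-x$ for $x$ produces exactly the same two summands in the opposite order; hence the two expressions agree term by term. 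Notably, no evenness assumption on $r$ is needed here: the identity follows purely from the symmetric way the two battlefields enter $R$.

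Next I would compute the right-hand side. By definition $E\left(g_{-}\right)\left(x\right) = \int_{-\frac{n}{2}}^{\frac{n}{2}} R\left(x, y\right) g\left(-y\right)\,dy$, and I would apply the change of variables $u = -y$. Because the interval $\left[-\frac{n}{2}, \frac{n}{2}\right]$ is symmetric about the origin, its endpoints map to each other and the orientation reversal is absorbed by $du = -dy$, so that $E\left(g_{-}\right)\left(x\right) = \int_{-\frac{n}{2}}^{\frac{n}{2}} R\left(x, -u\right) g\left(u\right)\,du$. Now I would invoke the symmetry established above to replace $R\left(x, -u\right)$ by $R\left(-x, u\right)$, obtaining $E\left(g_{-}\right)\left(x\right) = \int_{-\frac{n}{2}}^{\frac{n}{2}} R\left(-x, u\right) g\left(u\right)\,du = \left(Eg\right)\left(-x\right) = \left(Eg\right)_{-}\left(x\right)$, which is exactly the claimed identity.

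The proof is essentially a single substitution together with a one-line symmetry check, so there is no serious obstacle; the only point demanding care is the change of variables. Specifically, it is the symmetry of the domain $\left[-\frac{n}{2}, \frac{n}{2}\right]$ about the origin that makes the reflection $u = -y$ return the same interval of integration. On an asymmetric interval the substituted domain would differ from the original and the identity would break down, so I would make sure to flag that this symmetry of the support is what the argument silently relies on.
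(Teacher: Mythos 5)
Your proof is correct and takes essentially the same approach as the paper's: the change of variables $u = -y$ over the symmetric interval $\left[-\frac{n}{2}, \frac{n}{2}\right]$, combined with the fact that negating the argument swaps the two $r$-summands of $R$. The only difference is organizational — you isolate the identity $R\left(x, -y\right) = R\left(-x, y\right)$ as an explicit preliminary step (and correctly flag the symmetry of the support as the hypothesis the substitution relies on), whereas the paper performs the same term-by-term rearrangement inside the integral after substituting.
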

\begin{proof}
\begin{align*}
\left(E\left(g_{-}\right)\right)\left(x\right) & = \int \limits _{-\frac{n}{2}} ^{\frac{n}{2}} {\left(r\left(x - y + \frac{a}{2}\right)+r\left(-x + y + \frac{a}{2}\right)\right)g_{-}\left(y\right) dy} \\
& = \int \limits _{-\frac{n}{2}} ^{\frac{n}{2}} {\left(r\left(x - y + \frac{a}{2}\right)+r\left(-x + y + \frac{a}{2}\right)\right)g\left(-y\right) dy} \\
\end{align*}
Substituting $u = -y$:
\begin{align*}
& \int \limits _{-\frac{n}{2}} ^{\frac{n}{2}} {\left(r\left(x - y + \frac{a}{2}\right)+r\left(-x + y + \frac{a}{2}\right)\right)g\left(-y\right) dy} \\
& = -\int \limits _{\frac{n}{2}} ^{-\frac{n}{2}} {\left(r\left(x + u + \frac{a}{2}\right)+r\left(-x - u + \frac{a}{2}\right)\right)g\left(u\right) du} \\
& = \int \limits _{-\frac{n}{2}} ^{\frac{n}{2}} {\left(r\left(x + u + \frac{a}{2}\right)+r\left(-x - u + \frac{a}{2}\right)\right)g\left(u\right) du} \\
& = \int \limits _{-\frac{n}{2}} ^{\frac{n}{2}} {\left(r\left(-x - u + \frac{a}{2}\right)+r\left(x + u + \frac{a}{2}\right)\right)g\left(u\right) du} \\
& = \int \limits _{-\frac{n}{2}} ^{\frac{n}{2}} {\left(r\left(-x - u + \frac{a}{2}\right)+r\left(-\left(-x\right) + u + \frac{a}{2}\right)\right)g\left(u\right) du} \\
\end{align*}
As the choice of the letter used for the variable of integration does not matter:
\begin{align*}
& \int \limits _{-\frac{n}{2}} ^{\frac{n}{2}} {\left(r\left(-x - u + \frac{a}{2}\right)+r\left(-\left(-x\right) + u + \frac{a}{2}\right)\right)g\left(u\right) du} \\
& = \int \limits _{-\frac{n}{2}} ^{\frac{n}{2}} {\left(r\left(-x - y + \frac{a}{2}\right)+r\left(-\left(-x\right) + y + \frac{a}{2}\right)\right)g\left(y\right) du} \\
\end{align*}
Thus:
\[
\left(E\left(g_{-}\right)\right)\left(x\right) = \int \limits _{-\frac{n}{2}} ^{\frac{n}{2}} {\left(r\left(-x - y + \frac{a}{2}\right)+r\left(-\left(-x\right) + y + \frac{a}{2}\right)\right)g\left(y\right) du} \\
\]
Also:
\begin{align*}
\left(\left(Eg\right)_{-}\right)\left(x\right) & = \left(Eg\right)\left(-x\right) \\
& = \int \limits _{-\frac{n}{2}} ^{\frac{n}{2}} {\left(r\left(-x - y + \frac{a}{2}\right)+r\left(-\left(-x\right) + y + \frac{a}{2}\right)\right)g\left(y\right) du} \\
\end{align*}
This implies that $\left(Eg\right)_{-} = E\left(g_{-}\right)$, as these two functions of $x$ agree on every value of $x$.
\end{proof}
\begin{lem} [Effect on $E$ on even and odd parts of $g$]
\label{evenoddlemma}
For all functions $h$, let $h_e = \frac{h+h_{-}}{2}$ and $h_o\left(z\right) = \frac{h-h_{-}}{2}$, with $h_{-}\left(z\right) = h\left(-z\right)$, as defined in Lemma \ref{neglemma}. (In other words, $h_e$ is the "even part" of $h$, and $h_o$ is the "odd part" of $h$.) Then, $\left(Eg\right)_e = E\left(g_e\right)$ and $\left(Eg\right)_o = E\left(g_o\right)$.
\end{lem}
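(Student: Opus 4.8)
The plan is to derive both identities as immediate consequences of two facts already in hand: the linearity of $E$ (established when $E$ was introduced as a linear transformation in the ``Expected Payoff as a Dot Product'' section) and the reflection identity $\left(Eg\right)_{-} = E\left(g_{-}\right)$ from Lemma \ref{neglemma}. The whole content is that $E$ commutes with reflection, and since taking even/odd parts is just a fixed linear combination of a function and its reflection, $E$ must commute with that operation too.

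Concretely, for the even part I would start from the definition $\left(Eg\right)_e = \frac{Eg + \left(Eg\right)_{-}}{2}$, then substitute $\left(Eg\right)_{-} = E\left(g_{-}\right)$ using Lemma \ref{neglemma}, obtaining $\frac{Eg + E\left(g_{-}\right)}{2}$. Invoking the linearity of $E$ (which lets me pull the scalar $\frac{1}{2}$ out and combine the two applications of $E$ into one applied to a sum) gives $E\!\left(\frac{g + g_{-}}{2}\right) = E\left(g_e\right)$, which is the claimed identity. The odd-part identity is the same computation with a sign change: $\left(Eg\right)_o = \frac{Eg - \left(Eg\right)_{-}}{2} = \frac{Eg - E\left(g_{-}\right)}{2} = E\!\left(\frac{g - g_{-}}{2}\right) = E\left(g_o\right)$.

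The only point that needs a word of care is the appeal to linearity: I would make explicit that $E$ being linear means $E(\alpha u + \beta v) = \alpha Eu + \beta Ev$, so that $\frac{1}{2}\bigl(Eg \pm E(g_{-})\bigr) = E\bigl(\tfrac{1}{2}(g \pm g_{-})\bigr)$. This is justified directly from $Eg = \int_{-\frac{n}{2}}^{\frac{n}{2}} R\left(x,y\right)g\left(y\right)\,dy$ by linearity of the integral, so no new machinery is required. I do not expect any genuine obstacle here; the proof is a two-line manipulation for each identity, and the substantive work was already done in Lemma \ref{neglemma}. The main thing to get right is simply bookkeeping of the $\pm$ signs so that the even case uses the sum and the odd case uses the difference.
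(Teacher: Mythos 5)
Your proposal is correct and follows essentially the same route as the paper's proof: both start from the definition $\left(Eg\right)_e = \frac{Eg + \left(Eg\right)_{-}}{2}$, substitute $\left(Eg\right)_{-} = E\left(g_{-}\right)$ from Lemma \ref{neglemma}, and use the linearity of $E$ to conclude $E\left(g_e\right)$, with the odd case handled identically up to sign. Your added remark justifying linearity from the integral definition of $E$ is a harmless (and reasonable) bit of extra care that the paper leaves implicit.
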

\begin{proof}
\begin{align*}
\left(Eg\right)_e &= \frac{Eg+\left(Eg\right)_{-}}{2} \\
&= \frac{Eg+E\left(g_{-}\right)}{2} \\
&= \frac{E\left(g+g_{-}\right)}{2} \\
&= E\left(\frac{g+g_{-}}{2}\right) \\
&= E\left(g_e\right)
\end{align*}
and, similarly:
\begin{align*}
\left(Eg\right)_o &= \frac{Eg-\left(Eg\right)_{-}}{2} \\
&= \frac{Eg-E\left(g_{-}\right)}{2} \\
&= \frac{E\left(g-g_{-}\right)}{2} \\
&= E\left(\frac{g-g_{-}}{2}\right) \\
&= E\left(g_o\right)
\end{align*}
\end{proof}
\begin{lem}
\label{orthobasislemma}
The orthonormal basis of Player 1's strategies $\left \lbrace f_0\left(x\right), f_1\left(x\right), f_2\left(x\right), \cdots \right \rbrace$ contains only even functions and odd functions. Likewise, the orthonormal basis of Player 1's strategies $\left \lbrace g_0\left(y\right), g_1\left(y\right), g_2\left(y\right), \cdots \right \rbrace$ also contains only even functions and odd functions.
\end{lem}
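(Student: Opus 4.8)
The plan is to exploit the fact that both orthonormal bases are produced by the Gram--Schmidt process applied to the monomial sequence $1, x, x^2, \ldots$ (respectively $1, y, y^2, \ldots$) with respect to an inner product whose domain of integration, $\left[-\frac{n+a}{2}, \frac{n+a}{2}\right]$ (respectively $\left[-\frac{n}{2}, \frac{n}{2}\right]$), is symmetric about the origin. On such a symmetric interval the integral of any odd function vanishes, and it is this single fact that forces each basis element to inherit the parity of the monomial from which it was built. I would prove the claim for $f_0, f_1, f_2, \ldots$ and note that the argument for $g_0, g_1, g_2, \ldots$ is word-for-word identical with $\frac{n+a}{2}$ replaced by $\frac{n}{2}$.

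First I would record the elementary parity facts: a product of two even functions or of two odd functions is even, while a product of one even and one odd function is odd; consequently, for the inner product $f_a \cdot f_b = \int_{-\frac{n+a}{2}}^{\frac{n+a}{2}} f_a(x) f_b(x)\, dx$, if $f_a$ and $f_b$ have opposite parities then $f_a \cdot f_b = 0$, since the integrand is odd and the interval is symmetric. In particular $x^i \cdot x^j = 0$ whenever $i$ and $j$ differ in parity.

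The core of the argument is induction on the index $s$, using the Gram--Schmidt recursion that defines $f_s$ as a positive multiple of $x^s - \sum_{k=0}^{s-1} (x^s \cdot f_k) f_k$. The base case $f_0$ is a nonzero constant, hence even, matching the parity of $x^0$. For the inductive step, assume that for every $k < s$ the function $f_k$ has the same parity as $x^k$. Whenever $k$ and $s$ have opposite parity, $f_k$ has parity opposite to that of $x^s$, so by the parity fact above the coefficient $x^s \cdot f_k$ vanishes and that term drops out of the sum. Thus $f_s$ is a positive multiple of $x^s - \sum_{k \equiv s \ (\mathrm{mod}\ 2)} (x^s \cdot f_k) f_k$, a linear combination of functions each sharing the parity of $x^s$; hence $f_s$ is even when $s$ is even and odd when $s$ is odd. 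This completes the induction, and every $f_s$ --- and by the identical argument every $g_s$ --- is either an even or an odd function.

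I do not expect a serious obstacle here, since the whole proof rests on the vanishing of the cross-parity integrals over a symmetric interval. The one point requiring care is the bookkeeping in the inductive step: one must confirm that Gram--Schmidt expresses $f_s$ using only the earlier basis vectors $f_0, \ldots, f_{s-1}$ (so that the inductive hypothesis applies to all of them) and that, after the opposite-parity terms are discarded, nothing of opposite parity to $x^s$ remains. An alternative, slightly more structural route would be to observe that the even and odd polynomials form mutually orthogonal subspaces under this inner product and that Gram--Schmidt applied to a basis listed in order of increasing degree never mixes the two subspaces; but the explicit induction above is the most direct.
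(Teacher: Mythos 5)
Your proof is correct and follows essentially the same route as the paper's: both rest on the vanishing of cross-parity inner products over a symmetric interval applied to the Gram--Schmidt recursion, with your direct induction on $s$ being just the contrapositive packaging of the paper's minimal-counterexample argument. As a minor bonus, your induction hypothesis yields the slightly stronger conclusion that $f_s$ has the parity of $x^s$ (even index $\Rightarrow$ even function), a refinement the paper only establishes later in a footnote in Section \ref{componentsec}.
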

\begin{proof}
Let $w$ be $x$ for Player 1 or $y$ for Player 2, let $\nu$ be $\frac{n+a}{2}$ for Player 1 or $\frac{n}{2}$ for Player 2, and for all nonnegative integers $s$, let $h_s$ be $f_s$ for Player 1 or $g_s$ for Player 2. In either case, to prove this lemma, it suffices to show that the orthonormal basis $\left \lbrace h_0\left(w\right), h_1\left(w\right), h_2\left(w\right), \cdots \right \rbrace$ contains only even functions and odd functions. For either player, this is the orthonormal (using the dot product $h_a \cdot h_b = \int \limits _{-\nu} ^{\nu} {h_a\left(x\right) h_b\left(x\right) dx}$) basis generated from the basis $\left \lbrace 1, w, w^2, \cdots, w^n \right \rbrace$ by the Gram-Schmidt process. 

Suppose that this is not the case; that is, suppose that, for some nonnegative integer $s$, $h_s$ is neither even nor odd. Of all the $s$, where $h_s$ is neither even nor odd, choose the lowest. $s$ is not $0$, because $h_0 = \frac{1}{\sqrt{1 \cdot 1}}$ is an even function. For $s > 0$, 
\[
h_s = \frac{w^s - \sum \limits _{j = 0} ^{s-1}{\left(\left(w^s \cdot h_j\right)h_j\right)}}{\sqrt{\left({w^s - \sum \limits _{j = 0} ^{s-1}{\left(\left(w^s \cdot h_j\right)h_j\right)}}\right) \cdot \left({w^s - \sum \limits _{j = 0} ^{s-1}{\left(\left(w^s \cdot h_j\right)h_j\right)}}\right)}}
\]
(obtained by reducing $w^s$ to its component perpendicular to all of $h_0, h_1, h_2, \cdots, h_{s-1}$, and by normalizing the result by dividing by the magnitude). This is defined, because, for every polynomial $h$, $h \cdot h =  \int \limits _{-\nu} ^{\nu} {\left(h_j\left(w\right)\right)^2 dw} \geq 0$, and is only zero when $h$ is the zero polynomial; ${w^s - \sum \limits _{j = 0} ^{s-1}{\left(\left(w^s \cdot h_j\right)h_j\right)}}$ is not the zero polynomial, or else $w^s$ is a linear combination of $h_0, h_1, \cdots h_{s-1}$, or, equivalently, a linear combination of $1, w, w^2, \cdots, w^{s-1}$, but this is not the case. 

Now, each $h_j$ is either an even function or an odd function, as is $w^s$. If exactly one of $w^s$ and $h_j$ is an even function (the other one being odd), then $w^s \cdot h_j = \int \limits _{-\nu} ^{\nu} {w^s h_j\left(w\right) dw} = 0$, as this is the integral of an odd function over a symmetric interval. Thus, for the sum $\sum \limits _{j = 0} ^{s-1}{\left(\left(w^s \cdot h_j\right)h_j\right)}$, if $w_s$ is even, then every term in this sum is even\footnote{Each $h_j$ is either even or odd; if $h_j$ is even, then $\left(w^s \cdot h_j\right)h_j$ is even, being a constant multiple of $h_j$; if $h_j$ is odd, then $\left(w^s \cdot h_j\right)h_j = 0 h_j = 0$ is also an even function.}, and, similarly, if $w_s$ is odd, then every term in this sum is odd. It follows that $w^s - \sum \limits _{j = 0} ^{s-1}{\left(\left(w^s \cdot h_j\right)h_j\right)}$ itself is either an even function or an odd function, but $h_s$ is $w^s - \sum \limits _{j = 0} ^{s-1}{\left(\left(w^s \cdot h_j\right)h_j\right)}$ divided by a (nonzero) constant, so $h_s$ is either even or odd, which contradicts the assumption.

From the contrary assumption leading to a contradiction, the orthonormal basis $\left \lbrace h_0\left(w\right), h_1\left(w\right), h_2\left(w\right), \cdots \right \rbrace$ contains only even functions and odd functions. It thus follows that the orthonormal basis of Player 1's strategies $\left \lbrace f_0\left(x\right), f_1\left(x\right), f_2\left(x\right), \cdots \right \rbrace$ contains only even functions and odd functions, and that the orthonormal basis of Player 1's strategies $\left \lbrace g_0\left(y\right), g_1\left(y\right), g_2\left(y\right), \cdots \right \rbrace$ also contains only even functions and odd functions.
\end{proof}
\begin{lem}
\label{oddstrategyzerolemma}
If $f$ and $g$ are functions, then $f_o \cdot \left(Eg_e\right) = 0$, and similarly, $f_e \cdot \left(Eg_o\right) = 0$ (where the subscripts $e$ and $o$ are as defined in Lemma \ref{evenoddlemma}).
\end{lem}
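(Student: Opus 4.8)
The plan is to reduce everything to the elementary fact that the integral of an odd function over an interval symmetric about the origin vanishes. Both dot products in the statement are taken over $\left[-\frac{n+a}{2}, \frac{n+a}{2}\right]$, which is symmetric about $0$, so it suffices to show that each integrand is an odd function of $x$. Thus the real content is just a bookkeeping of parities.

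First I would record the parities of the four functions involved. By the definitions in Lemma \ref{evenoddlemma}, $f_o$ is an odd function and $f_e$ is an even function (and likewise for $g$). The only thing needing an argument is the parity of $Eg_e$ and $Eg_o$, and for this I would invoke Lemma \ref{neglemma}. Since $g_e$ is even, $(g_e)_- = g_e$, so
\[
\left(Eg_e\right)_- = E\left((g_e)_-\right) = E\left(g_e\right),
\]
which says that $Eg_e$ is invariant under reflection across the vertical axis, i.e. $Eg_e$ is an even function. Symmetrically, since $g_o$ is odd, $(g_o)_- = -g_o$, so $\left(Eg_o\right)_- = E\left((g_o)_-\right) = E\left(-g_o\right) = -Eg_o$, making $Eg_o$ an odd function.

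With the parities in hand the conclusion is immediate. The product $f_o\left(x\right)\left(Eg_e\right)\left(x\right)$ is (odd)$\times$(even), hence an odd function of $x$, so its integral over the symmetric interval $\left[-\frac{n+a}{2}, \frac{n+a}{2}\right]$ gives $f_o \cdot \left(Eg_e\right) = 0$. Likewise $f_e\left(x\right)\left(Eg_o\right)\left(x\right)$ is (even)$\times$(odd) $=$ odd, so $f_e \cdot \left(Eg_o\right) = 0$.

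I do not expect any genuine obstacle here; the only point requiring care is the justification that $E$ preserves parity, and that has already been established in Lemmas \ref{neglemma} and \ref{evenoddlemma}, so the present lemma is essentially a corollary of them. If one wished to avoid citing Lemma \ref{neglemma} directly, I could instead argue from $\left(Eg\right)_e = E\left(g_e\right)$: applying this identity with $g_e$ in place of $g$ and using $(g_e)_e = g_e$ yields $\left(Eg_e\right)_e = Eg_e$, which again gives evenness of $Eg_e$, and analogously for $Eg_o$.
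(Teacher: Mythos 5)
Your proof is correct and follows essentially the same route as the paper: both establish that $E$ preserves parity (the paper by rewriting $Eg_e$ as $\left(Eg\right)_e$ via Lemma \ref{evenoddlemma}, you by applying Lemma \ref{neglemma} directly, and your closing remark is verbatim the paper's step) and then conclude via the vanishing of an odd integrand over a symmetric interval. No gaps; the linearity of $E$ used in $E\left(-g_o\right) = -Eg_o$ is immediate from its integral definition.
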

\begin{proof}
\begin{align*}
f_o \cdot \left(Eg_e\right) &= f_o \cdot \left(\left(Eg\right)_e\right) \text{ (by Lemma \ref{evenoddlemma})} \\
&= \int \limits _{-\frac{n+a}{2}} ^{\frac{n+a}{2}} {f_o\left(x\right) \left(Eg\right)_e\left(x\right) dx}
\end{align*}
This is an integral of an odd function (as $f_o\left(x\right) \left(Eg\right)_e$, being the product of an odd function and an even function, is odd) over a symmetric interval, and, therefore, it is zero.
Similarly:
\begin{align*}
f_e \cdot \left(Eg_o\right) &= f_e \cdot \left(\left(Eg\right)_o\right) \text{ (by Lemma \ref{evenoddlemma})} \\
&= \int \limits _{-\frac{n+a}{2}} ^{\frac{n+a}{2}} {f_e\left(x\right) \left(Eg\right)_o\left(x\right) dx}
\end{align*}
This is an integral of an odd function (as $f_e\left(x\right) \left(Eg\right)_o$, being the product of an even function and an odd function, is odd) over a symmetric interval, and, therefore, it is zero.
\end{proof}
\begin{lem} [Even-function equilibrium]
\label{evenequillemma}
If Player 1 playing strategy $f$ and Player 2 playing strategy $g$ is a Nash equilibrium, then so is Player 1 playing strategy $f_e$ and Player 2 playing strategy $g_e$.
\end{lem}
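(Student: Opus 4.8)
The plan is to verify the two defining inequalities of a Nash equilibrium for the pair $(f_e, g_e)$ directly, using the payoff representation $f \cdot \left(Eg\right)$ from Lemma \ref{expvallemma} together with the vanishing cross-terms of Lemma \ref{oddstrategyzerolemma}. Since the game is zero-sum with Player 1 maximizing and Player 2 minimizing, I must show that $f' \cdot \left(Eg_e\right) \leq f_e \cdot \left(Eg_e\right)$ for every Player 1 strategy $f'$, and that $f_e \cdot \left(Eg'\right) \geq f_e \cdot \left(Eg_e\right)$ for every Player 2 strategy $g'$. Write $V = f \cdot \left(Eg\right)$ for the equilibrium value; the hypothesis that $(f,g)$ is a Nash equilibrium gives $f' \cdot \left(Eg\right) \leq V$ for all $f'$ (Player 1 cannot improve) and $f \cdot \left(Eg'\right) \geq V$ for all $g'$ (Player 2 cannot improve).

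First I would record two elementary ``symmetrization leaves the payoff unchanged'' facts, both immediate from Lemma \ref{oddstrategyzerolemma} and bilinearity: when one player uses an even strategy, replacing the opponent's strategy by its even part does not change the payoff. Concretely, splitting $f' = f'_e + f'_o$ gives $f' \cdot \left(Eg_e\right) = f'_e \cdot \left(Eg_e\right)$ because $f'_o \cdot \left(Eg_e\right) = 0$; and for any even $h$ one has $h \cdot \left(Eg'\right) = h \cdot \left(Eg'_e\right)$ because $h \cdot \left(Eg'_o\right) = 0$. The crucial supporting observation is that the even part of any mixed strategy is itself a legitimate mixed strategy (it is the average of the two densities $f'$ and $f'_{-}$, each nonnegative with integral $1$), so the original equilibrium inequalities may be applied to it.

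With these facts in hand, the first inequality follows by a short chain: for an arbitrary $f'$,
\[
f' \cdot \left(Eg_e\right) = f'_e \cdot \left(Eg_e\right) = f'_e \cdot \left(Eg\right) \leq V,
\]
where the first equality drops the odd part of $f'$ against the even $g_e$, the second rewrites $f'_e$'s payoff against $g_e$ as its payoff against $g$ (valid since $f'_e$ is even), and the inequality is Player 1's equilibrium condition applied to the genuine strategy $f'_e$. Symmetrically,
\[
f_e \cdot \left(Eg'\right) = f_e \cdot \left(Eg'_e\right) = f \cdot \left(Eg'_e\right) \geq V
\]
for every $g'$, using that $f_e$ is even to drop the odd part of $g'$, then rewriting against $f$, and invoking Player 2's equilibrium condition for $g'_e$. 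Specializing $f' = f_e$ in the first chain and $g' = g_e$ in the second forces $f_e \cdot \left(Eg_e\right) = V$, so the two chains read exactly $f' \cdot \left(Eg_e\right) \leq f_e \cdot \left(Eg_e\right)$ and $f_e \cdot \left(Eg'\right) \geq f_e \cdot \left(Eg_e\right)$, which is precisely the Nash-equilibrium condition for $(f_e, g_e)$.

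The only real subtlety, and the step I expect to demand the most care, is the bookkeeping that shuttles each payoff between ``played against $g$'' and ``played against $g_e$'' (and the mirror image in the first argument): each rewriting invokes Lemma \ref{oddstrategyzerolemma} in a slightly different guise, once to annihilate the odd part of a Player 1 strategy against an even Player 2 strategy, and once to annihilate the odd part of a Player 2 strategy against an even Player 1 strategy. One must keep straight which factor is guaranteed even at each stage, and must explicitly note that $f_e$, $g_e$, $f'_e$, and $g'_e$ are honest mixed strategies, since the assumed equilibrium inequalities hold only against genuine strategies. Everything else is routine bilinear manipulation.
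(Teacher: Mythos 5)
Your proof is correct and follows essentially the same route as the paper's: both rest on Lemma \ref{oddstrategyzerolemma} to annihilate odd parts against even opponents, verify that even parts are legitimate mixed strategies, and chain the original equilibrium inequalities through $f_e$ and $g_e$. The only difference is organizational --- you obtain $f_e \cdot \left(Eg_e\right) = V$ by specializing your two general deviation chains (arguably a slightly tidier arrangement), whereas the paper first establishes the four-way equality $f \cdot \left(Eg\right) = f_e \cdot \left(Eg\right) = f_e \cdot \left(Eg_e\right) = f \cdot \left(Eg_e\right)$ and then treats deviations $\tilde{f}$, $\tilde{g}$ separately.
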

\begin{proof}
Let $f$ and $g$ be strategies that can be played ($f$ by Player 1 and $g$ by Player 2); that is, let $\int \limits _{\frac{n}{2}} ^{\frac{n}{2}} {f\left(x\right) dx} = 1$ and $\int \limits _{\frac{n}{2}} ^{\frac{n}{2}} {g\left(y\right) dx} = 1$, and let $f$ and $g$ be nonnegative on their domains: $f$ on $\left[-\frac{n+a}{2}, \frac{n+a}{2}\right]$, and $g$ on $\left[-\frac{n}{2}, \frac{n}{2}\right]$.
First, it is indeed possible for Player 1 to play strategy $f_e$, as:
\begin{align*}
1 &= \int \limits _{\frac{n+a}{2}} ^{\frac{n+a}{2}} {f\left(x\right) dx} \\
&= \int \limits _{\frac{n+a}{2}} ^{\frac{n+a}{2}} {f_e\left(x\right) + f_o\left(x\right) dx} \\
&= \int \limits _{\frac{n+a}{2}} ^{\frac{n+a}{2}} {f_e\left(x\right) dx} + \int \limits _{\frac{n}{2}} ^{\frac{n}{2}} {f_o\left(x\right) dx} \\
&= \int \limits _{\frac{n+a}{2}} ^{\frac{n+a}{2}} {f_e\left(x\right) dx} + 0 \text { as $f_o$ is an odd function} \\
&= \int \limits _{\frac{n+a}{2}} ^{\frac{n+a}{2}} {f_e\left(x\right) dx}
\end{align*}
and as:
\begin{align*}
f_e\left(x\right) &= \frac{f\left(x\right) + f_{-}\left(x\right)}{2} \\
&= \frac{f\left(x\right)}{2} + \frac{f_{-}\left(x\right)}{2} \\
&= \frac{f\left(x\right)}{2} + \frac{f\left(-x\right)}{2} \\
&\geq \frac{0}{2} + \frac{0}{2} \text{ (as } f\left(x\right) \geq 0 \text { for all } x \in \left[-\frac{n+a}{2}, \frac{n+a}{2}\right] \text{)} \\
&= 0
\end{align*}
It is also possible for Player 2 to play strategy $g_e$, for the same reason.
As $f_o \cdot \left(Eg_e\right) = 0$ (by Lemma \ref{oddstrategyzerolemma}), it follows that:
\begin{align*}
f \cdot \left(Eg_e\right) &= \left(f_e + f_o\right) \cdot \left(Eg_e\right) \\
&= \left(f_e \cdot \left(Eg_e\right)\right) + \left(f_o \cdot \left(Eg_e\right)\right) \\
&= f_e \cdot \left(Eg_e\right) + 0 \\
&= f_e \cdot \left(Eg_e\right)
\end{align*}
Similarly, as $f_e \cdot \left(Eg_o\right) = 0$ (by Lemma \ref{oddstrategyzerolemma}), it follows that:
\begin{align*}
f_e \cdot \left(Eg\right) &= f \cdot \left(E\left(g_e + g_o\right)\right) \\
&= f \cdot \left(Eg_e + Eg_o\right) \\
&= \left(f \cdot \left(Eg_e\right)\right) + \left(f \cdot \left(Eg_o\right)\right) \\
&= f_e \cdot \left(Eg_e\right) + 0 \\
&= f_e \cdot \left(Eg_e\right)
\end{align*}
Let $f$ by Player 1 and $g$ by Player 2 now be a Nash equilibrium. Thus, by definition of the Nash equilibrium:
\begin{align*}
f \cdot \left(Eg\right) & \geq f_e \cdot \left(Eg\right) \\
&= f_e \cdot \left(Eg_e\right) \\
&= f \cdot \left(Eg_e\right) \\
& \geq f \cdot \left(Eg\right)
\end{align*}
which makes all these quantities equal: $f \cdot \left(Eg\right)$,  $f_e \cdot \left(Eg\right)$, $f_e \cdot \left(Eg_e\right)$, and $f \cdot \left(Eg_e\right)$. As shown earlier, it is also the case that $f_e \cdot \left(Eg\right)$, $f_e \cdot \left(Eg_e\right)$, and $f \cdot \left(Eg_e\right)$ (but not necessarily $f \cdot \left(Eg\right)$) would be equal even without the assumption that $f$ by Player 1 and $g$ by Player 2 is a Nash equilibrium.
Now, let Player 1 play strategy $f_e$, and let Player 2 play strategy $g_e$. Then, if Player 1 modifies Player 1's strategy to $\tilde{f}$, the result is $\tilde{f} \cdot \left(Eg_e\right)$. Then:
\begin{align*}
\tilde{f} \cdot \left(Eg_e\right) &= \left(\tilde{f}\right)_e \cdot \left(Eg_e\right) \text{ where $\left(\tilde{f}\right)_e$ is the even part of $\tilde{f}$} \\
&= \left(\tilde{f}\right)_e \cdot \left(Eg\right) \\
& \leq f \cdot \left(Eg\right) \text{ (as $f$ by Player 1 and $g$ by Player 2 is a Nash equilibrium)} \\
& = f_e \cdot \left(Eg_e\right)
\end{align*}
That means that Player 1 cannot benefit from modifying strategy $f_e$ while Player 2 plays strategy $g_e$.
If, instead, Player 2 modifies Player 2's strategy to $\tilde{g}$, the result is $f_e \cdot \left(E\tilde{g}\right)$. Then:
\begin{align*}
f_e \cdot \left(E\tilde{g}\right) &= f_e \cdot \left(E\left(\tilde{g}\right)_e\right) \text{ where $\left(\tilde{g}\right)_e$ is the even part of $\tilde{g}$} \\
&= f \cdot \left(E\left(\tilde{g}\right)_e\right) \\
& \geq f \cdot \left(Eg\right) \text{ (as $f$ by Player 1 and $g$ by Player 2 is a Nash equilibrium)} \\
& = f_e \cdot \left(Eg_e\right)
\end{align*}
That means that Player 2 cannot benefit from modifying strategy $g_e$ while Player 1 plays strategy $f_e$.
Therefore, $f_e$ by Player 1 and $g_e$ by Player 2 is a Nash equilibrium.
\end{proof}

\section{Appendix 2: Calculations}
\begin{calc}
\label{calcR}
If $r\left(z\right) = -z^3$, then:
\begin{align*}
&R\left(x, y\right) \\
& = r\left(x-y + \frac{a}{2}\right) + r\left(-x +y + \frac{a}{2}\right) \\
&= -\left(x-y + \frac{a}{2}\right)^3 - \left(-x +y + \frac{a}{2}\right)^3 \\
&= -\left(x-y\right)^3 - 3\left(x-y\right)^2\left(\frac{a}{2}\right)- 3\left(x-y\right)\left(\frac{a}{2}\right)^2 - \left(\frac{a}{2}\right)^3 \\
& \quad - \left(-x+y\right)^3 - 3\left(-x+y\right)^2\left(\frac{a}{2}\right)- 3\left(-x+y\right)\left(\frac{a}{2}\right)^2 - \left(\frac{a}{2}\right)^3 \\
&= -6\left(x-y\right)^2\left(\frac{a}{2}\right) - 2\left(\frac{a}{2}\right)^3
\end{align*}
\end{calc}

\begin{calc}
\label{calckernel}
If $R\left(x, y\right) = -\left(x - y\right)^2$, then this calculation shows that if $g$ is orthogonal to $1$, $y$, and $y^2$ (which means that $\int \limits _{-\frac{n}{2}} ^{\frac{n}{2}}{g\left(y\right) dy}$, $\int \limits _{-\frac{n}{2}} ^{\frac{n}{2}}{y g\left(y\right) dy}$, and $\int \limits _{-\frac{n}{2}} ^{\frac{n}{2}}{y^2 g\left(y\right) dy}$ are all zero), or if $f$ is orthogonal to $1$, $x$, and $x^2$ (which means that $\int \limits _{-\frac{n+a}{2}} ^{\frac{n+a}{2}}{f\left(x\right) dx}$, $\int \limits _{-\frac{n+a}{2}} ^{\frac{n+a}{2}}{x g\left(x\right) dx}$, and $\int \limits _{-\frac{n+a}{2}} ^{\frac{n+a}{2}}{x^2 g\left(x\right) dx}$ are all zero) then $f \cdot \left(Eg\right) = 0$.

If $g$ is orthogonal to all of $1$, $y$, and $y^2$:
\begin{align*}
Eg &= \int \limits _{-\frac{n}{2}} ^{\frac{n}{2}} {-\left(x - y\right)^2 g\left(y\right)dy} \\
&= \int \limits _{-\frac{n}{2}} ^{\frac{n}{2}} {-\left(x^2 - 2xy + y^2\right)g\left(y\right)dy} \\
&= -x^2 \int \limits _{-\frac{n}{2}} ^{\frac{n}{2}} {-\left(1\right)g\left(y\right)dy} +2x \int \limits _{-\frac{n}{2}} ^{\frac{n}{2}} {-\left(y\right)g\left(y\right)dy} - \int \limits _{-\frac{n}{2}} ^{\frac{n}{2}} {-\left(y^2\right)g\left(y\right)dy} \\
&= -x^2 \left(0\right) +2x \left(0\right) - \left(0\right) \\
&= 0
\end{align*}
Thus, if $g$ is orthogonal to $1$, to $y$, and to $y^2$, then $Eg = 0$. This automatically implies that $f \cdot \left(Eg\right) = \int \limits _{-\frac{n+a}{2}} ^{\frac{n+a}{2}} {f\left(x\right)0 dx} = 0$.

If, instead, $f$ is orthogonal to all of $1$, $x$, and $x^2$:
\begin{align*}
f \cdot \left(Eg\right) &= \int \limits _{-\frac{n+a}{2}} ^{\frac{n+a}{2}} {\int \limits _{-\frac{n}{2}} ^{\frac{n}{2}} {\left(-\left(x-y\right)^2\right)f\left(x\right)g\left(y\right) dy} \,dx} \\
&= \int \limits _{-\frac{n+a}{2}} ^{\frac{n+a}{2}} {\int \limits _{-\frac{n}{2}} ^{\frac{n}{2}} {\left(-x^2 + 2xy - y^2\right)f\left(x\right)g\left(y\right) dy} \,dx} \\
&= \int \limits _{-\frac{n+a}{2}} ^{\frac{n+a}{2}} {\int \limits _{-\frac{n}{2}} ^{\frac{n}{2}} {\left(-x^2 f\left(x\right) + 2x f\left(x\right)y - y^2f\left(x\right)\right)g\left(y\right) dy} \,dx} \\
&= \int \limits _{-\frac{n+a}{2}} ^{\frac{n+a}{2}} {\left(\left(x^2 f\left(x\right)\right)\int \limits _{-\frac{n}{2}} ^{\frac{n}{2}} {-g\left(y\right) dy} + \left(x f\left(x\right)\right)\int \limits _{-\frac{n}{2}} ^{\frac{n}{2}} {2y g\left(y\right) dy} + \left(f\left(x\right)\right)\int \limits _{-\frac{n}{2}} ^{\frac{n}{2}} {-y^2 g\left(y\right) dy}\right) \,dx} \\
&= \int \limits _{-\frac{n+a}{2}} ^{\frac{n+a}{2}} {\left(\left(x^2 f\left(x\right)\right)\int \limits _{-\frac{n}{2}} ^{\frac{n}{2}} {-g\left(y\right) dy}\right) \,dx} +
\int \limits _{-\frac{n+a}{2}} ^{\frac{n+a}{2}} {\left(\left(x f\left(x\right)\right)\int \limits _{-\frac{n}{2}} ^{\frac{n}{2}} {2y g\left(y\right) dy} \right) \,dx} \\
& \quad + \int \limits _{-\frac{n+a}{2}} ^{\frac{n+a}{2}} {\left(\left(f\left(x\right)\right)\int \limits _{-\frac{n}{2}} ^{\frac{n}{2}} {-y^2 g\left(y\right) dy}\right) \,dx} \\
&= \left(\int \limits _{-\frac{n}{2}} ^{\frac{n}{2}} {-g\left(y\right) dy}\right) \int \limits _{-\frac{n+a}{2}} ^{\frac{n+a}{2}} {x^2 f\left(x\right) \,dx} +
\left(\int \limits _{-\frac{n}{2}} ^{\frac{n}{2}} {2y g\left(y\right) dy}\right) \int \limits _{-\frac{n+a}{2}} ^{\frac{n+a}{2}} {x f\left(x\right) \,dx} \\
& \quad +
\left(\int \limits _{-\frac{n}{2}} ^{\frac{n}{2}} {-y^2 g\left(y\right) dy}\right) \int \limits _{-\frac{n+a}{2}} ^{\frac{n+a}{2}} {f\left(x\right) \,dx}
\\
& = 0
\end{align*}
The last step is valid, because the $x$-integrals are all zero, because of the orthogonality of $f\left(x\right)$ to $1$, $x$, and $x^2$. Thus, if $f$ is orthogonal to $1$, to $x$, and to $x^2$, then $f \cdot \left(Eg\right) = \int \limits _{-\frac{n+a}{2}} ^{\frac{n+a}{2}} {f\left(x\right)0 dx} = 0$.

\end{calc}

\begin{calc}
\label{calcortho}
From the polynomials $1, x, x^2, \cdots$, an orthonormal basis can be constructed by using the Gram-Schmidt process. This process is the same for both players, but the results are different, because the dot products are different for both players. The letter $\nu$ should be interpreted as $\frac{n+a}{2}$ for Player 1 and $\frac{n}{2}$ for Player 2. Thus, for both players, $h_1 \cdot h_2 = \int \limits _{-\nu} ^{\nu} {h_1\left(x\right) h_2\left(x\right) dx}$.

To ease computation:
\begin{align*}
x^A \cdot x^B &= \int \limits _{-\nu} ^{\nu} {x^A x^B dx} \\
&= \int \limits _{-\nu} ^{\nu} {x^{A+B} dx} \\
&= \frac{{\nu}^{A+B+1}}{A+B+1} - \frac{{-\nu}^{A+B+1}}{A+B+1} \\
&= \frac{{\nu}^{A+B+1}}{A+B+1} \left(1 - \left(-1\right)^{A+B+1}\right) \\
&= \frac{{\nu}^{A+B+1}}{A+B+1} \left(1 - \left(
\begin{cases}
1, A+B+1 \text{ is even} \\
-1, A+B+1 \text{ is odd} \\
\end{cases}
\right)\right) \\
&= \frac{{\nu}^{A+B+1}}{A+B+1} \left(
\begin{cases}
0, A+B+1 \text{ is even} \\
2, A+B+1 \text{ is odd} \\
\end{cases}
\right)
\end{align*}
Thus, if $A + B$ is even (so $A + B + 1$ is odd), $x^A \cdot x^B = \left(\frac{2}{A+B+1}\right){\nu}^{A+B+1}$, while, if $A + B$ is odd (so $A + B + 1$ is even), $x^A \cdot x^B = 0$.

The first (non-normalized) vector of the basis is $1$. The second vector of the basis is $x - \left(\frac{x \cdot 1}{1 \cdot 1}\right)1$, which is the component of $x$ orthogonal to $1$. As $x \cdot 1 = x \cdot x^0 = 0$, this is just $x$.

The third vector is $x^2 - \left(\frac{x^2 \cdot 1}{1 \cdot 1}\right)1 - \left(\frac{x^2 \cdot x}{x \cdot x}\right)x$, which is orthogonal to both $1$ and $x$.
\begin{align*}
x^2 - \left(\frac{x^2 \cdot 1}{1 \cdot 1}\right)1 - \left(\frac{x^2 \cdot x}{x \cdot x}\right)x &= x^2 - \left(\frac{\left(\frac{2}{2+0+1}\right){\nu}^{2+0+1}}{\left(\frac{2}{0+0+1}\right){\nu}^{0+0+1}}\right)1 - \left(\frac{0}{\left(\frac{2}{1+1+1}\right){\nu}^{1+1+1}}\right)x \\
&= x^2 - \left(\frac{\left(\frac{2}{3}\right){\nu}^{3}}{2\nu}\right)1 - \left(\frac{0}{\left(\frac{2}{3}\right){\nu}^{3}}\right)x \\
&= x^2 - \left(\frac{1}{3}\right){\nu}^2 \\
\end{align*}
Thus, the third vector is $x^2 - \left(\frac{1}{3}\right){\nu}^2$. The fourth vector is: (again, to be orthogonal to $1$, $x$, and $x^2 - \left(\frac{1}{3}\right){\nu}^2$)
\begin{align*}
& x^3 - \left(\frac{x^3 \cdot 1}{1 \cdot 1}\right)1 - \left(\frac{x^3 \cdot x}{x \cdot x}\right)x - \left(\frac{x^3 \cdot \left(x^2 - \left(\frac{1}{3}\right){\nu}^2\right)}{\left(x^2 - \left(\frac{1}{3}\right){\nu}^2\right) \cdot \left(x^2 - \left(\frac{1}{3}\right){\nu}^2\right)}\right)\left(x^2 - \left(\frac{1}{3}\right){\nu}^2\right) \\
&= x^3 - \left(\frac{0}{\left(\frac{2}{1}\right){\nu}^1}\right)1 - \left(\frac{\left(\frac{2}{5}\right){\nu}^5}{\left(\frac{2}{3}\right){\nu}^3}\right)x - \left(\frac{x^3 \cdot x^2 - x^3 \cdot \left(\left(\frac{1}{3}\right){\nu}^2\right)}{\left(x^2 - \left(\frac{1}{3}\right){\nu}^2\right) \cdot \left(x^2 - \left(\frac{1}{3}\right){\nu}^2\right)}\right)\left(x^2 - \left(\frac{1}{3}\right){\nu}^2\right) \\
&= x^3 - \left(\left(\frac{3}{5}\right){\nu}^2\right)x - \left(\frac{0 - 0\left(\left(\frac{1}{3}\right){\nu}^2\right)}{\left(x^2 - \left(\frac{1}{3}\right){\nu}^2\right) \cdot \left(x^2 - \left(\frac{1}{3}\right){\nu}^2\right)}\right)\left(x^2 - \left(\frac{1}{3}\right){\nu}^2\right) \\
&= x^3 - \left(\left(\frac{3}{5}\right){\nu}^2\right)x \\
\end{align*}
(since the dot product of a nonzero with itself is the integral of the square of a nonzero polynomial, and hence is itself nonzero). Thus, the fourth vector is $x^3 - \left(\left(\frac{3}{5}\right){\nu}^2\right)x$. The fifth vector is (to be orthogonal to the first four):
\begin{align*}
& x^4 - \left(\frac{x^4 \cdot 1}{1 \cdot 1}\right)1 - \left(\frac{x^4 \cdot x}{x \cdot x}\right)x - \left(\frac{x^4 \cdot \left(x^2 - \left(\frac{1}{3}\right){\nu}^2\right)}{\left(x^2 - \left(\frac{1}{3}\right){\nu}^2\right) \cdot \left(x^2 - \left(\frac{1}{3}\right){\nu}^2\right)}\right)\left(x^2 - \left(\frac{1}{3}\right){\nu}^2\right) \\
& \quad - \left(\frac{x^4 \cdot \left(x^3 - \left(\frac{3}{5}\right){\nu}^2 x\right)}{\left(x^3 - \left(\frac{3}{5}\right){\nu}^2 x\right) \cdot \left(x^3 - \left(\frac{3}{5}\right){\nu}^2 x\right)}\right)\left(x^3 - \left(\frac{3}{5}\right){\nu}^2 x\right) \\
& = x^4 - \left(\frac{\left(\frac{2}{5}\right){\nu}^5}{\left(\frac{2}{1}\right){\nu}^1}\right)1 - \left(\frac{0}{x \cdot x}\right)x - \left(\frac{x^4 \cdot x^2 - \left(\frac{1}{3}\right){\nu}^2\left(x^4 \cdot 1 \right)}{\left(x^2 - \left(\frac{1}{3}\right){\nu}^2\right) \cdot \left(x^2 - \left(\frac{1}{3}\right){\nu}^2\right)}\right)\left(x^2 - \left(\frac{1}{3}\right){\nu}^2\right) \\
& \quad - \left(\frac{x^4 \cdot x^3 - \left(\frac{3}{5}\right){\nu}^2 \left(x^4 \cdot x\right)}{\left(x^3 - \left(\frac{3}{5}\right){\nu}^2 x\right) \cdot \left(x^3 - \left(\frac{3}{5}\right){\nu}^2 x\right)}\right)\left(x^3 - \left(\frac{3}{5}\right){\nu}^2 x\right) \\
& = x^4 - \left(\frac{1}{5}\right){\nu}^4 - \left(\frac{\left(\frac{2}{7}\right){\nu}^7 - \left(\left(\frac{1}{3}\right){\nu}^2\right)\left(\left(\frac{2}{5}\right)\nu^{5}\right)}{x^2 \cdot x^2 - 2\left(\left(\frac{1}{3}\right){\nu}^2\right)\left(x^2 \cdot 1\right) + \left(\left(\frac{1}{3}\right){\nu}^2\right)^2\left(1 \cdot 1\right)}\right)\left(x^2 - \left(\frac{1}{3}\right){\nu}^2\right) \\
& \quad - \left(\frac{0 - \left(\frac{3}{5}\right){\nu}^2 \left(0\right)}{\left(x^3 - \left(\frac{3}{5}\right){\nu}^2 x\right) \cdot \left(x^3 - \left(\frac{3}{5}\right){\nu}^2 x\right)}\right)\left(x^3 - \left(\frac{3}{5}\right){\nu}^2 x\right) \\
& = x^4 - \left(\frac{1}{5}\right){\nu}^4 - \left(\frac{\left(\frac{2}{7}\right){\nu}^7 - \left(\frac{2}{15}\right){\nu}^7}{\left(\frac{2}{5}\right){\nu}^5 - 2\left(\left(\frac{1}{3}\right){\nu}^2\right)\left(\left(\frac{2}{3}\right){\nu}^3\right) + \left(\left(\frac{1}{3}\right){\nu}^2\right)^2\left(\left(\frac{2}{1}\right){\nu}^1\right)}\right)\left(x^2 - \left(\frac{1}{3}\right){\nu}^2\right)
\end{align*}
This simplifies to:
\begin{align*}
x^4 - \left(\frac{1}{5}\right){\nu}^4 - \left(\frac{\left(\frac{16}{105}\right){\nu}^7}{\left(\frac{8}{45}\right){\nu}^5}\right)\left(x^2 - \left(\frac{1}{3}\right){\nu}^2\right) &= x^4 - \left(\frac{1}{5}\right){\nu}^4 - \left(\left(\frac{6}{7}\right){\nu}^2\right)\left(x^2 - \left(\frac{1}{3}\right){\nu}^2\right) \\
&= x^4 - \left(\frac{1}{5}\right){\nu}^4 - \left(\frac{6}{7}\right){\nu}^2 x^2 + \left(\frac{2}{7}\right){\nu}^4 \\
&= x^4 - \left(\frac{6}{7}\right){\nu}^2 x^2 + \left(\frac{3}{35}\right){\nu}^4 \\
\end{align*}
Thus, the first five non-normalized vectors are $1$, $x$, $x^2 - \left(\frac{1}{3}\right){\nu}^2$, $x^3 - \left(\frac{3}{5}\right){\nu}^2 x$, and $x^4 - \left(\frac{6}{7}\right){\nu}^2 x^2 + \left(\frac{3}{35}\right){\nu}^4$. Normalizing them requires dividing each of them by its magnitude. Thus, the first vector is:
\begin{align*}
\frac{1}{\sqrt{1 \cdot 1}} &= \frac{1}{\sqrt{\left(\frac{2}{1}\right){\nu}^1}} \\
&= \left(\sqrt{\frac{1}{2}}\right){\nu}^{-\frac{1}{2}}
\end{align*}
The second vector is:
\begin{align*}
\frac{x}{\sqrt{x \cdot x}} &= \frac{x}{\sqrt{\left(\frac{2}{3}\right){\nu}^3}} \\
&= \left(\sqrt{\frac{3}{2}}\right){\nu}^{-\frac{3}{2}} x \\
\end{align*}
The third vector is:
\begin{align*}
\frac{x^2 - \left(\frac{1}{3}\right){\nu}^2}{\sqrt{\left(x^2 - \left(\frac{1}{3}\right){\nu}^2\right) \cdot \left(x^2 - \left(\frac{1}{3}\right){\nu}^2\right)}} &= \frac{x^2 - \left(\frac{1}{3}\right){\nu}^2}{\sqrt{x^2 \cdot x^2 - 2\left(\left(\frac{1}{3}\right){\nu}^2\right)\left(x^2 \cdot 1\right) + \left(\left(\frac{1}{3}\right){\nu}^2\right)^2\left(1 \cdot 1\right)}} \\
&= \frac{x^2 - \left(\frac{1}{3}\right){\nu}^2}{\sqrt{\left(\frac{2}{5}\right){\nu}^5 - 2\left(\left(\frac{1}{3}\right){\nu}^2\right)\left(\left(\frac{2}{3}\right){\nu}^3\right) + \left(\left(\frac{1}{3}\right){\nu}^2\right)^2\left(\left(\frac{2}{1}\right){\nu}^1\right)}} \\
&= \frac{x^2 - \left(\frac{1}{3}\right){\nu}^2}{\sqrt{\left(\frac{2}{5}\right){\nu}^5 - \left(\frac{4}{9}\right){\nu}^5 + \left(\frac{2}{9}\right){\nu}^5}} \\
&= \frac{x^2 - \left(\frac{1}{3}\right){\nu}^2}{\sqrt{\left(\frac{8}{45}\right){\nu}^5}} \\
&= \left(\sqrt{\frac{45}{8}}\right){\nu}^{-\frac{5}{2}}x^2 - \left(\sqrt{\frac{5}{8}}\right){\nu}^{-\frac{1}{2}} \\
\end{align*}
The fourth vector is:
\begin{align*}
\frac{x^3 - \left(\frac{3}{5}\right){\nu}^2 x}{\sqrt{\left(x^3 - \left(\frac{3}{5}\right){\nu}^2 x\right) \cdot \left(x^3 - \left(\frac{3}{5}\right){\nu}^2 x\right)}} &= \frac{x^3 - \left(\frac{3}{5}\right){\nu}^2 x}{\sqrt{x^3 \cdot x^3 - 2\left(\left(\frac{3}{5}\right){\nu}^2\right)\left(x^3 \cdot x\right) + \left(\left(\frac{3}{5}\right){\nu}^2\right)^2\left(x \cdot x\right)}} \\
&= \frac{x^3 - \left(\frac{3}{5}\right){\nu}^2 x}{\sqrt{\left(\frac{2}{7}\right){\nu}^7 - 2\left(\left(\frac{3}{5}\right){\nu}^2\right)\left(\left(\frac{2}{5}\right){\nu}^5\right) + \left(\left(\frac{3}{5}\right){\nu}^2\right)^2\left(\left(\frac{2}{3}\right){\nu}^3\right)}} \\
&= \frac{x^3 - \left(\frac{3}{5}\right){\nu}^2 x}{\sqrt{\left(\frac{2}{7}\right){\nu}^7 - \left(\frac{12}{25}\right){\nu}^7 + \left(\frac{6}{25}\right){\nu}^7}} \\
&= \frac{x^3 - \left(\frac{3}{5}\right){\nu}^2 x}{\sqrt{\left(\frac{8}{175}\right){\nu}^7}} \\
&= \left(\sqrt{\frac{175}{8}}\right){\nu}^{-\frac{7}{2}}x^3 - \left(\sqrt{\frac{63}{8}}\right){\nu}^{-\frac{3}{2}}x \\
\end{align*}
The fifth vector is:
\begin{align*}
& \frac{x^4 - \left(\frac{6}{7}\right){\nu}^2 x^2 + \left(\frac{3}{35}\right){\nu}^4}{\sqrt{\left(x^4 - \left(\frac{6}{7}\right){\nu}^2 x^2 + \left(\frac{3}{35}\right){\nu}^4\right) \cdot \left(x^4 - \left(\frac{6}{7}\right){\nu}^2 x^2 + \left(\frac{3}{35}\right){\nu}^4\right)}} \\
& = \frac{x^4 - \left(\frac{6}{7}\right){\nu}^2 x^2 + \left(\frac{3}{35}\right){\nu}^4}{\sqrt{
\begin{aligned}
& \qquad x^4 \cdot x^4 - 2\left(\left(\frac{6}{7}\right){\nu}^2\right)\left(x^4 \cdot x^2\right) + 2\left(\left(\frac{3}{35}\right){\nu}^4\right)\left(x^4 \cdot 1\right) \\
& + \left(\left(\frac{6}{7}\right){\nu}^2\right)^2\left(x^2 \cdot x^2\right) - 2\left(\left(\frac{6}{7}\right){\nu}^2\right)\left(\left(\frac{3}{35}\right){\nu}^4\right)\left(x^2 \cdot 1\right) + \left(\left(\frac{3}{35}\right){\nu}^4\right)^2\left(1 \cdot 1\right)
\end{aligned}
}} \\
& \hspace{-0.6 in} = \frac{x^4 - \left(\frac{6}{7}\right){\nu}^2 x^2 + \left(\frac{3}{35}\right){\nu}^4}{\sqrt{
\begin{aligned}
& \left(\frac{2}{9}\right){\nu}^9 - 2\left(\left(\frac{6}{7}\right){\nu}^2\right)\left(\left(\frac{2}{7}\right){\nu}^7\right) + 2\left(\left(\frac{3}{35}\right){\nu}^4\right)\left(\left(\frac{2}{5}\right){\nu}^5\right) + \left(\left(\frac{6}{7}\right){\nu}^2\right)^2\left(\left(\frac{2}{5}\right){\nu}^5\right) \\
& \quad - 2\left(\left(\frac{6}{7}\right){\nu}^2\right)\left(\left(\frac{3}{35}\right){\nu}^4\right)\left(\left(\frac{2}{3}\right){\nu}^3\right) + \left(\left(\frac{3}{35}\right){\nu}^4\right)^2\left(\left(\frac{2}{1}\right){\nu}^1\right)
\end{aligned}
}} \\
\end{align*}
This simplifies to:
\begin{align*}
& \frac{x^4 - \left(\frac{6}{7}\right){\nu}^2 x^2 + \left(\frac{3}{35}\right){\nu}^4}{\sqrt{\left(\frac{2}{9}\right){\nu}^9 - \left(\frac{24}{49}\right){\nu}^9 + \left(\frac{12}{175}\right){\nu}^9+\left(\frac{72}{245}\right){\nu}^9 - \left(\frac{24}{245}\right){\nu}^9+\left(\frac{18}{1225}\right){\nu}^9}} \\
&= \frac{x^4 - \left(\frac{6}{7}\right){\nu}^2 x^2 + \left(\frac{3}{35}\right){\nu}^4}{\sqrt{\left(\frac{128}{11025}\right){\nu}^9 }} \\
&= \left(\sqrt{\frac{11025}{128}}\right){\nu}^{-\frac{9}{2}}x^4 - \left(\sqrt{\frac{2025}{32}}\right){\nu}^{-\frac{5}{2}} x^2 + \left(\sqrt{\frac{81}{128}}\right){\nu}^{-\frac{1}{2}}
\end{align*}
Thus, the first five orthonormal polynomials are:
\begin{align*}
& \left(\sqrt{\frac{1}{2}}\right){\nu}^{-\frac{1}{2}} \\
& \left(\sqrt{\frac{3}{2}}\right){\nu}^{-\frac{3}{2}} x \\
& \left(\sqrt{\frac{45}{8}}\right){\nu}^{-\frac{5}{2}}x^2 - \left(\sqrt{\frac{5}{8}}\right){\nu}^{-\frac{1}{2}} \\
& \left(\sqrt{\frac{175}{8}}\right){\nu}^{-\frac{7}{2}}x^3 - \left(\sqrt{\frac{63}{8}}\right){\nu}^{-\frac{3}{2}}x \\
& \left(\sqrt{\frac{11025}{128}}\right){\nu}^{-\frac{9}{2}}x^4 - \left(\sqrt{\frac{2025}{32}}\right){\nu}^{-\frac{5}{2}} x^2 + \left(\sqrt{\frac{81}{128}}\right){\nu}^{-\frac{1}{2}}
\end{align*}
(where $\nu$ is $\frac{n+a}{2}$ for Player 1 and $\frac{n}{2}$ for Player 2).
\end{calc}

\begin{calc}
\label{calcpurestrategies}
The pure strategy with $x = t$, with $t \in \left[-\nu, \nu\right]$, where $\nu$ is $\frac{n+a}{2}$ for Player 1 and $\frac{n}{2}$ for Player 2, is represented by the Dirac delta ``function'' $\delta\left(x - t\right)$. For any function $h$, the following holds:
\begin{align*}
\delta\left(x - t\right) \cdot h &= \int \limits _{-\nu} ^{\nu} {\delta\left(x - t\right) h\left(x\right) dx} \\
&= \int \limits _{-\nu} ^{\nu} {\delta\left(x - t\right) h\left(t\right) dx}  \\
&= h\left(t\right) \int \limits _{-\nu} ^{\nu} {\delta\left(x - t\right) dx} \\
&= \left(h\left(t\right)\right) \left(1\right) \\
&= h\left(t\right)
\end{align*}
This means that the pure strategies for either player in the new coordinates are
$
\left(
\begin{matrix}
f_1\left(t\right) \\
f_2\left(t\right) \\
f_3\left(t\right)
\end{matrix}
\right)
$ for Player 1 and 
$
\left(
\begin{matrix}
g_1\left(t\right) \\
g_2\left(t\right) \\
g_3\left(t\right)
\end{matrix}
\right)
$ for Player 2, which, equivalently, are 
\[
\left(
\begin{matrix}
\left(\sqrt{\frac{1}{2}}\right){\left(\frac{n+a}{2}\right)}^{-\frac{1}{2}} \\
\left(\sqrt{\frac{3}{2}}\right){\left(\frac{n+a}{2}\right)}^{-\frac{3}{2}} t \\
\left(\sqrt{\frac{45}{8}}\right){\left(\frac{n+a}{2}\right)}^{-\frac{5}{2}} t^2 - \left(\sqrt{\frac{5}{8}}\right){\left(\frac{n+a}{2}\right)}^{-\frac{1}{2}} \\
\end{matrix}
\right)
\]
for Player 1 and
\[
\left(
\begin{matrix}
\left(\sqrt{\frac{1}{2}}\right){\left(\frac{n}{2}\right)}^{-\frac{1}{2}} \\
\left(\sqrt{\frac{3}{2}}\right){\left(\frac{n}{2}\right)}^{-\frac{3}{2}} t \\
\left(\sqrt{\frac{45}{8}}\right){\left(\frac{n}{2}\right)}^{-\frac{5}{2}} t^2 - \left(\sqrt{\frac{5}{8}}\right){\left(\frac{n}{2}\right)}^{-\frac{1}{2}} \\
\end{matrix}
\right)
\]
for Player 2.
\end{calc}
\begin{calc}
\label{symmetrycalc}
The curve
\[
\left(
\begin{matrix}
\left(\sqrt{\frac{1}{2}}\right){\left(\frac{n+a}{2}\right)}^{-\frac{1}{2}} \\
\left(\sqrt{\frac{3}{2}}\right){\left(\frac{n+a}{2}\right)}^{-\frac{3}{2}} t \\
\left(\sqrt{\frac{45}{8}}\right){\left(\frac{n+a}{2}\right)}^{-\frac{5}{2}} t^2 - \left(\sqrt{\frac{5}{8}}\right){\left(\frac{n+a}{2}\right)}^{-\frac{1}{2}} \\
\end{matrix}
\right)
\]
of Player 1's pure strategies is symmetrical over the plane $x_2 = 0$. That is, if $\left(x_1, x_2, x_3\right)$ is on the curve, then so is $\left(x_1, -x_2, x_3\right)$, which can be reached simply by switching the sign of $t$. It follows that the convex hull of this curve is also symmetric over the plane $x_2 = 0$, for the same reason: in any convex combination of pure strategies, switching the sign of $t$ in all the components switches the sign of the third component, but leaves the other two components as they were.

Similarly, the convex hull of the curve of Player 2's pure strategies, which is the convex hull of the curve
\[
\left(
\begin{matrix}
\left(\sqrt{\frac{1}{2}}\right){\left(\frac{n}{2}\right)}^{-\frac{1}{2}} \\
\left(\sqrt{\frac{3}{2}}\right){\left(\frac{n}{2}\right)}^{-\frac{3}{2}} t \\
\left(\sqrt{\frac{45}{8}}\right){\left(\frac{n}{2}\right)}^{-\frac{5}{2}} t^2 - \left(\sqrt{\frac{5}{8}}\right){\left(\frac{n}{2}\right)}^{-\frac{1}{2}} \\
\end{matrix}
\right)
\]
is symmetrical over the plane $y_2 = 0$.

This means that for every choice of $f \cdot f_0$ and $f \cdot f_2$ Player 1 can make, Player 1 can always set $f \cdot f_1$ to be zero without leaving the convex hull, and a similar statement holds for Player 2.

Player 1's expected payoff $f \cdot \left(Eg\right)$, which is
\[
\hspace{-0.8 in}
\left( 
\begin{matrix}
f \cdot f_0 \\
f \cdot f_1 \\
f \cdot f_2 \\
\end{matrix}
\right)
^T
\left(
\begin{matrix}
\left(-\frac{2}{3}\right)\left(\frac{n+a}{2}\right)^{\frac{5}{2}}\left(\frac{n}{2}\right)^{\frac{1}{2}} + \left(-\frac{2}{3}\right)\left(\frac{n+a}{2}\right)^{\frac{5}{2}}\left(\frac{n}{2}\right)^{\frac{1}{2}} & 0 & \left(-\frac{4\sqrt{5}}{15}\right)\left(\frac{n+a}{2}\right)^{\frac{1}{2}}\left(\frac{n}{2}\right)^{\frac{5}{2}} \\
0 & \left(-\frac{4}{3}\right)\left(\frac{n+a}{2}\right)^{\frac{3}{2}}\left(\frac{n}{2}\right)^{\frac{3}{2}} & 0 \\
\left(-\frac{4\sqrt{5}}{15}\right)\left(\frac{n+a}{2}\right)^{\frac{5}{2}}\left(\frac{n}{2}\right)^{\frac{1}{2}} & 0 & 0 \\
\end{matrix}
\right)
\left( 
\begin{matrix}
g \cdot g_0 \\
g \cdot g_1 \\
g \cdot g_2 \\
\end{matrix}
\right)
\]
can be rewritten as
\[
\hspace{-0.8 in}
\left( 
\begin{matrix}
f \cdot f_0 \\
f \cdot f_2 \\
f \cdot f_1 \\
\end{matrix}
\right)
^T
\left(
\begin{matrix}
\left(-\frac{2}{3}\right)\left(\frac{n+a}{2}\right)^{\frac{5}{2}}\left(\frac{n}{2}\right)^{\frac{1}{2}} + \left(-\frac{2}{3}\right)\left(\frac{n+a}{2}\right)^{\frac{5}{2}}\left(\frac{n}{2}\right)^{\frac{1}{2}} & \left(-\frac{4\sqrt{5}}{15}\right)\left(\frac{n+a}{2}\right)^{\frac{1}{2}}\left(\frac{n}{2}\right)^{\frac{5}{2}} & 0 \\
\left(-\frac{4\sqrt{5}}{15}\right)\left(\frac{n+a}{2}\right)^{\frac{5}{2}}\left(\frac{n}{2}\right)^{\frac{1}{2}} & 0 & 0 \\
0 & 0 & \left(-\frac{4}{3}\right)\left(\frac{n+a}{2}\right)^{\frac{3}{2}}\left(\frac{n}{2}\right)^{\frac{3}{2}} \\
\end{matrix}
\right)
\left( 
\begin{matrix}
g \cdot g_0 \\
g \cdot g_2 \\
g \cdot g_1 \\
\end{matrix}
\right)
\]
or as
\begin{align*}
& \left( 
\begin{matrix}
f \cdot f_0 \\
f \cdot f_2 \\
\end{matrix}
\right)
^T
\left(
\begin{matrix}
\left(-\frac{2}{3}\right)\left(\frac{n+a}{2}\right)^{\frac{5}{2}}\left(\frac{n}{2}\right)^{\frac{1}{2}} + \left(-\frac{2}{3}\right)\left(\frac{n+a}{2}\right)^{\frac{5}{2}}\left(\frac{n}{2}\right)^{\frac{1}{2}} & \left(-\frac{4\sqrt{5}}{15}\right)\left(\frac{n+a}{2}\right)^{\frac{1}{2}}\left(\frac{n}{2}\right)^{\frac{5}{2}} \\
\left(-\frac{4\sqrt{5}}{15}\right)\left(\frac{n+a}{2}\right)^{\frac{5}{2}}\left(\frac{n}{2}\right)^{\frac{1}{2}} & 0 \\
\end{matrix}
\right)
\left( 
\begin{matrix}
g \cdot g_0 \\
g \cdot g_2 \\
\end{matrix}
\right) \\
& \quad + \left(-\frac{4}{3}\right)\left(\frac{n+a}{2}\right)^{\frac{3}{2}}\left(\frac{n}{2}\right)^{\frac{3}{2}}\left(f \cdot f_1\right)\left(g \cdot g_1 \right)
\end{align*}
(as can be verified by carrying out the matrix multiplications). Either player can set $\left(-\frac{4}{3}\right)\left(\frac{n+a}{2}\right)^{\frac{3}{2}}\left(\frac{n}{2}\right)^{\frac{3}{2}}\left(f \cdot f_1\right)\left(g \cdot g_1 \right)$ to be zero without affecting the other term, by setting $f \cdot f_1$ or $g \cdot g_1$ to zero. It follows that in any Nash equilibrium, Player 1 setting $f \cdot f_1$ to zero, and Player 2 setting $g \cdot g_1$ to zero, and both players doing this, yield the same expected payoff. Thus, Player 1 setting $f \cdot f_1$ to zero gives no new options to Player 2, as Player 2 could simply have "simulated" those options by setting $g \cdot g_1$ to zero. Thus, the result of both $f \cdot f_1$ and $g \cdot g_1$ being set to zero is another Nash equilibrium.

Thus, Player 1's strategy space can be reduced to the convex hull of
$
\left(
\begin{matrix}
\left(\sqrt{\frac{1}{2}}\right){\left(\frac{n+a}{2}\right)}^{-\frac{1}{2}} \\
0 \\
\left(\sqrt{\frac{45}{8}}\right){\left(\frac{n+a}{2}\right)}^{-\frac{5}{2}} t^2 - \left(\sqrt{\frac{5}{8}}\right){\left(\frac{n+a}{2}\right)}^{-\frac{1}{2}} \\
\end{matrix}
\right)
$, 
and Player 2's strategy space can be reduced to the convex hull of
$
\left(
\begin{matrix}
\left(\sqrt{\frac{1}{2}}\right){\left(\frac{n}{2}\right)}^{-\frac{1}{2}} \\
0 \\
\left(\sqrt{\frac{45}{8}}\right){\left(\frac{n}{2}\right)}^{-\frac{5}{2}} t^2 - \left(\sqrt{\frac{5}{8}}\right){\left(\frac{n}{2}\right)}^{-\frac{1}{2}} \\
\end{matrix}
\right)
$, 
with $t$ going from $-\frac{n+a}{2}$ to $\frac{n+a}{2}$ for Player 1 and from $-\frac{n}{2}$ to $\frac{n}{2}$ for Player 2, the interval in which the pure strategies fall. This interval can be cut in half to $\left[0, \frac{n+a}{2}\right]$ for Player 1 and to $\left[0, \frac{n}{2}\right]$ for Player 2, because the negative values of $t$ contribute no points to either curve that were not already contributed by a positive value of $t$, and thus they contribute no new points to the convex hull of that curve. This makes $t^2$ a bijective function, so $T_x = \left(\frac{n+a}{2}\right)^{-2} t^2$ for Player 1 and $T_y = \left(\frac{n}{2}\right)^{-2} t^2$ can serve as the parameters for these curves. Both $T_x$ and $T_y$ are in $\left[0, 1\right]$. In terms of $T_x$, Player 1's strategy space can be reduced to the convex hull of
$
\left(
\begin{matrix}
\left(\sqrt{\frac{1}{2}}\right){\left(\frac{n+a}{2}\right)}^{-\frac{1}{2}} \\
0 \\
\left(\sqrt{\frac{45}{8}}\right){\left(\frac{n+a}{2}\right)}^{-\frac{1}{2}} T_x - \left(\sqrt{\frac{5}{8}}\right){\left(\frac{n+a}{2}\right)}^{-\frac{1}{2}} \\
\end{matrix}
\right)
$, 
and Player 2's strategy space can be reduced to the convex hull of
$
\left(
\begin{matrix}
\left(\sqrt{\frac{1}{2}}\right){\left(\frac{n}{2}\right)}^{-\frac{1}{2}} \\
0 \\
\left(\sqrt{\frac{45}{8}}\right){\left(\frac{n}{2}\right)}^{-\frac{1}{2}} T_y - \left(\sqrt{\frac{5}{8}}\right){\left(\frac{n}{2}\right)}^{-\frac{1}{2}} \\
\end{matrix}
\right)
$. These are line segments (although they would not be line segments if $r$ were of higher degree than $3$), so they are their own convex hulls. 

After factoring, these segments become 
$
\left(\sqrt{\frac{1}{2}}\right){\left(\frac{n+a}{2}\right)}^{-\frac{1}{2}}
\left(
\begin{matrix}
1 \\
0 \\
\left(\sqrt{\frac{45}{4}}\right) T_x - \sqrt{\frac{5}{4}} \\
\end{matrix}
\right)
$
 for Player 1 and
$
\left(\sqrt{\frac{1}{2}}\right){\left(\frac{n}{2}\right)}^{-\frac{1}{2}}
\left(
\begin{matrix}
1 \\
0 \\
\left(\sqrt{\frac{45}{4}}\right) T_y - \sqrt{\frac{5}{4}} \\
\end{matrix}
\right)
$
 for Player 2.

As these reduced spaces are in the new coordinates, Player 1's payoff is
\begin{align*}
& \left(\sqrt{\frac{1}{2}}\right){\left(\frac{n+a}{2}\right)}^{-\frac{1}{2}}
\left(
\begin{matrix}
1 \\
0 \\
\left(\sqrt{\frac{45}{4}}\right) T_x - \sqrt{\frac{5}{4}} \\
\end{matrix}
\right)
^T \\
& \quad *
\left(
\begin{matrix}
\left(-\frac{2}{3}\right)\left(\frac{n+a}{2}\right)^{\frac{5}{2}}\left(\frac{n}{2}\right)^{\frac{1}{2}} + \left(-\frac{2}{3}\right)\left(\frac{n+a}{2}\right)^{\frac{5}{2}}\left(\frac{n}{2}\right)^{\frac{1}{2}} & 0 & \left(-\frac{4\sqrt{5}}{15}\right)\left(\frac{n+a}{2}\right)^{\frac{1}{2}}\left(\frac{n}{2}\right)^{\frac{5}{2}} \\
0 & \left(-\frac{4}{3}\right)\left(\frac{n+a}{2}\right)^{\frac{3}{2}}\left(\frac{n}{2}\right)^{\frac{3}{2}} & 0 \\
\left(-\frac{4\sqrt{5}}{15}\right)\left(\frac{n+a}{2}\right)^{\frac{5}{2}}\left(\frac{n}{2}\right)^{\frac{1}{2}} & 0 & 0 \\
\end{matrix}
\right) \\
& \quad * \left(\left(\sqrt{\frac{1}{2}}\right){\left(\frac{n}{2}\right)}^{-\frac{1}{2}}
\left(
\begin{matrix}
1 \\
0 \\
\left(\sqrt{\frac{45}{4}}\right) T_y - \sqrt{\frac{5}{4}} \\
\end{matrix}
\right)
\right) \\
& = \left(\frac{1}{2}\right){\left(\frac{n+a}{2}\right)}^{-\frac{1}{2}}{\left(\frac{n}{2}\right)}^{-\frac{1}{2}}
\left(
\begin{matrix}
1 \\
0 \\
\left(\sqrt{\frac{45}{4}}\right) T_x - \sqrt{\frac{5}{4}} \\
\end{matrix}
\right)
^T \\
& \quad *
\left(
\begin{matrix}
\left(-\frac{2}{3}\right)\left(\frac{n+a}{2}\right)^{\frac{5}{2}}\left(\frac{n}{2}\right)^{\frac{1}{2}} + \left(-\frac{2}{3}\right)\left(\frac{n+a}{2}\right)^{\frac{5}{2}}\left(\frac{n}{2}\right)^{\frac{1}{2}} & 0 & \left(-\frac{4\sqrt{5}}{15}\right)\left(\frac{n+a}{2}\right)^{\frac{1}{2}}\left(\frac{n}{2}\right)^{\frac{5}{2}} \\
0 & \left(-\frac{4}{3}\right)\left(\frac{n+a}{2}\right)^{\frac{3}{2}}\left(\frac{n}{2}\right)^{\frac{3}{2}} & 0 \\
\left(-\frac{4\sqrt{5}}{15}\right)\left(\frac{n+a}{2}\right)^{\frac{5}{2}}\left(\frac{n}{2}\right)^{\frac{1}{2}} & 0 & 0 \\
\end{matrix}
\right) \\
& \quad * 
\left(
\begin{matrix}
1 \\
0 \\
\left(\sqrt{\frac{45}{4}}\right) T_y - \sqrt{\frac{5}{4}} \\
\end{matrix}
\right) \\
& = \left(\frac{1}{2}\right){\left(\frac{n+a}{2}\right)}^{-\frac{1}{2}}{\left(\frac{n}{2}\right)}^{-\frac{1}{2}}\left(
\begin{aligned}
& \left(\left(-\frac{2}{3}\right)\left(\frac{n+a}{2}\right)^{\frac{5}{2}}\left(\frac{n}{2}\right)^{\frac{1}{2}} + \left(-\frac{2}{3}\right)\left(\frac{n+a}{2}\right)^{\frac{5}{2}}\left(\frac{n}{2}\right)^{\frac{1}{2}}\right)\left(1\right)\left(1\right) \\
& \quad + \left(-\frac{4\sqrt{5}}{15}\right)\left(\frac{n+a}{2}\right)^{\frac{5}{2}}\left(\frac{n}{2}\right)^{\frac{1}{2}}\left(\left(\sqrt{\frac{45}{4}}\right) T_x - \sqrt{\frac{5}{4}}\right)\left(1\right) \\
& \quad + \left(-\frac{4\sqrt{5}}{15}\right)\left(\frac{n+a}{2}\right)^{\frac{5}{2}}\left(\frac{n}{2}\right)^{\frac{1}{2}}\left(1\right)\left(\left(\sqrt{\frac{45}{4}}\right) T_y - \sqrt{\frac{5}{4}}\right)
\end{aligned}
\right) \\
\end{align*}
(by multiplying out the matrix product and ignoring any zero entries).

This is a decreasing function in both $T_x$ and $T_y$, so Player 1 should choose $T = 0$ and Player 2 should choose $T = 1$. Player 1's choice corresponds to $t = 0$, and hence, to the pure strategy 0. Player 2's choice corresponds to $t = \frac{n}{2}$, and hence, to the equal mixture of pure stategies $\frac{n}{2}$ and $-\frac{n}{2}$.
\end{calc}
\bibliographystyle{plain}

\end{document}